\documentclass[11pt]{article}

\usepackage{layout}
\usepackage{amsmath, amsthm}
\usepackage{amssymb, amsrefs}
\usepackage{graphicx}
\usepackage{arydshln}

\newtheorem{theorem}{Theorem}[section]

\newtheorem{proposition}[theorem]{Proposition}

\newtheorem{definition}[theorem]{Definition}

\begin{document}
\pagestyle{headings}

 \title{\sffamily On the Concrete Categories of Graphs}
    \author{
    \textsc{George McRae, Demitri Plessas, Liam Rafferty} \\[0.25em]
    }

\maketitle

\vspace*{-2em}

\begin{abstract}
\noindent
\par{In the standard Category of Graphs, the graphs allow only one edge to be incident to any two vertices, not necessarily distinct, and the graph morphisms must map edges to edges and vertices to vertices while preserving incidence. We refer to these graph morphisms as Strict Morphisms. We relax the condition on the graphs allowing any number of edges to be incident to any two vertices, as well as relaxing the condition on graph morphisms by allowing edges to be mapped to vertices, provided that incidence is still preserved. We call this broader graph category The Category of Conceptual Graphs, and define four other graph categories created by combinations of restrictions of the graph morphisms as well as restrictions on the allowed graphs.}
\par{We investigate which Lawvere axioms for the category of Sets and Functions apply to each of these Categories of Graphs, as well as the other categorial constructions of free objects, projective objects, generators, and their categorial duals.}
\end{abstract}

\tableofcontents

\section{Introduction - Conceptual Graphs and Their Morphisms}
\indent Often the study of morphisms of any mathematical object starts with the study of automorphisms. In graph theory, this study produced representation theorems for groups as automorphism groups of graphs. The study of automorphisms also produced characterizations of graphs (e.g. vertex-transitive graphs and distance-transitive graphs). More recently, finding restrictions of the automorphism set for the graph by considering automorphisms that fix vertex colors or by considering automorphisms that fix certain sets of vertices produced useful new graph parameters (the distinguishing number \cite{Albertson} and fixing number \cite{Courtney} for a graph).\\
\indent The study of morphisms then delves into the study of endomorphisms and finally homomorphisms. In graph theory, a certain class of graph homomorphisms generalize vertex-coloring, and are now being widely studied. In 2004, a textbook was published about these graph homomorphisms \cite{HN2004}.\\
\indent It is natural in the study of morphisms to use category theory, and our goal in this paper is to create the categorial framework to be used in further study of graph homomorphisms. Most of the proofs found in this paper will be combinatorial in flavor and many of the results lend themselves to combinatorial enumeration.\\
\indent The most common category considered in (undirected) graph theory is a category where graphs are defined as having at most one edge incident to any two vertices and at most one loop incident to any vertex. The morphisms are usually described as a pair of functions between the vertex sets and edge sets that respect edge incidence.\\
\indent In this paper, we will be relaxing these restrictions an investigating the concrete categories that are created. The rest of this section is concerned with defining our graphs as well as defining the restrictions. The second section defines six different categories of graphs, and proves that they are indeed categories. The third section investigates which of the six axioms for \textbf{Sets}, the category of sets and functions, apply to each of these categories of graphs. The last section investigates the other categorial constructions of free objects, projective objects, and generators, as well as their duals, and gives complete categorizations of these objects in the six categories of graphs.\\
\indent In our graphs, we want to start out with as great a generality as possible and add restrictions later. This means we want to allow graphs to have multiple edges between any two vertices and multiple loops at any vertex. We will define our graphs in the style of Bondy and Murty \cite{Bondy}, namely, graphs are sets of two kinds of parts: ``edges'' and ``vertices'' together with an ``incidence'' function. 
\begin{definition}
A \emph{conceptual graph} $G$ consists of\\
$G=\langle P(G), V(G); \partial_{G}:P(G) \rightarrow V(G)$\rotatebox[origin=c]{90}{$\ltimes$}$V(G), \iota_{G}:V(G)\hookrightarrow P(G) \rangle$ where $P(G)$ is the set of parts of $G$, $V(G)$ is the set of vertices of $G$, $V(G)$\rotatebox[origin=c]{90}{$\ltimes$}$V(G)$ is the set of unordered pairs of vertices of $G$, $\partial_{G}$ is the incidence map from the set of parts to the unordered pairs of vertices, $\iota_{G}$ is the inclusion map of the vertex set into the part set, and for $\underline{\Delta}:V(G) \rightarrow V(G)$\rotatebox[origin=c]{90}{$\ltimes$}$V(G)$, the unordered diagonal map, $\partial_G\iota_G=\underline{\Delta}$.
\end{definition}
\indent We define the set of edges of a graph, $G$, to be $E(G)=P(G)\backslash \iota_G(V(G))$. Henceforth, we will frequently abbreviate conceptual graph to graph. In our study here, we have no need to restrict our edge sets and vertex sets of our graphs to be finite sets.\\
\indent In \cite{Bondy} a graph does not have the inclusion map, $\iota$, but such a map will be critical when defining a graph homomorphism. In this way, we can think of the vertex ``part'' of the graph and the edge ``part'' of the graph in the same ``part'' set. We do allow $G=\emptyset$, i.e. $P(G)=\emptyset$, the empty graph, to be considered a graph. However, since $\partial_G$ is required to be a function, if $V(G)=\emptyset$ then $P(G)=\emptyset$. We also allow $V(G)\neq \emptyset$ and $E(G)=\emptyset$ (``no edges''), i.e. $P(G)=V(G)$.\\
\indent We now note the following. First, we naturally use the topologist's ``boundary'' symbol for incidence. Second, an unordered pair in $V(G)$\rotatebox[origin=c]{90}{$\ltimes$}$V(G)$ is denoted $u$\textunderscore$v$ or $(u$\textunderscore$v)$, for vertices $u,v\in V(G)$. Thus the natural unordered diagonal map $\underline{\Delta}:V(G)\rightarrow V(G)$\rotatebox[origin=c]{90}{$\ltimes$}$V(G)$ is given by $\underline{\Delta}(v)=v$\textunderscore$v$ or $(v$\textunderscore$v)$. Finally, we have chosen to consider our vertex set and edge set to be combined into a ``part'' set. Thus as an abstract data structure our graphs are a pair of sets: a set of parts with a distinguished subset called ``vertices''. This is done to make the description of morphisms more natural, i.e. functions between the ``over'' sets (of parts) that takes the distinguished subset to the other distinguished subset. This is similar to the construction of the Category of Topological Pairs of Spaces: for example, an object $(X,A)$ is a topological space $X$ with a subspace $A$ and a morphism $f:(X,A)\rightarrow(Y,B)$ is a continuous function from the topological space $X$ to the topological space $Y$ with $f[A]\subseteq f[B]$.\\
\indent We now define our morphisms for conceptual graphs.
\begin{definition}
$f:G\rightarrow H$ is a \emph{graph (homo)morphism} of conceptual graphs from $G$ to $H$ if $f$ is a function $f_{P}:P(G) \rightarrow P(H)$ and $f_{V}=f_P|_{V(G)}:V(G) \rightarrow V(H)$ that preserves incidence, i.e. $\partial_{H}(f_{P}(e))=(f_{V}(x)$\textunderscore$f_{V}(y))$ whenever $\partial_{G}(e)=(x$\textunderscore$y)$, for all $e\in P(G)$ and some $x,y \in V(G)$.
\end{definition}
\begin{figure}[h]
\centering \includegraphics[scale=.6]{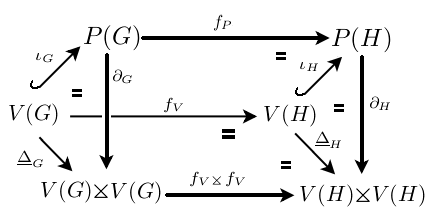}
\caption{The Graph Morphism}
\end{figure}
\indent This definition allows a graph homomorphism to map an edge to a vertex as long as the incidence of the edges are preserved. As an edge, $e\in E(G)$, can be mapped to the part set of the co-domain graph, $H$, so that it is the image of a vertex, i.e. $f(e)=\iota_H(v)$ for some $v\in V(H)$.\\
\indent We will now define some specialized classes of graphs and a specialized graph morphism. Our first restriction is a common restriction in graph theory. The set of graphs is restricted to allow only one edge between any two vertices (see \cite{HN2004}), and at most one edge between a vertex and itself (a loop). We call these graphs \textit{simple graphs} and define them in terms of conceptual graphs.
\begin{definition}
A \emph{simple} graph $G$ is a conceptual graph such that for all $u,v\in V(G)$ with $u \neq v$, there is at most one $e\in P(G)$ such that $\partial_G(e)=(u$\textunderscore$v)$, and for all $w \in V(G)$ there is at most one $f\in E(G)=P(G)\backslash\iota_G(V(G))$ such that $\partial_G(f)=(w$\textunderscore$w)$ (where $(u$\textunderscore$v)$ is the unordered pair of vertices $u$ and $v$).
\end{definition}
Thus, a graph is simple if and only if the incidence map is injective.\\
\indent Another common restriction is to not allow loops at all. This restriction is often required when discussing vertex coloring. We call these graphs \textit{loopless graphs}.
\begin{definition}
A \emph{loopless} graph $G$ is a conceptual graph such that for all vertices $u \in V(G)$ there is no edge $e\in E(G)=P(G)\backslash\iota_G(V(G))$ such that $\partial_G(e)=(u$\textunderscore$u)$.
\end{definition}
\indent This is not the usual notion of a ``simple'' graph often common in graph theory; that notion is the simple and loopless graph by our definition. This is a departure from standard nomenclature, but it fits our categorial discussion best.\\
\indent We now define the most common notion for a graph morphism in literature, we call it a \emph{strict} morphism because it always takes an edge part to a strict edge part (and not just a part, e.g. a vertex). The following definition is a modified form of the definition presented in \cite{HN2004} to apply to conceptual graphs.
\begin{definition}
Let $G$ and $H$ be conceptual graphs. A \emph{strict graph homomorphism} (or \emph{strict morphism}) $f:G\rightarrow H$ is a graph morphism $f:G\rightarrow H$ such that the strict edge condition holds: for all edges $e\in E(G)$, $f_P(e)\in E(H)$, i.e. the image under the strict morphism $f$ of an edge is again and edge. 
\end{definition}
\indent The condition, $\partial_{H}(f_{P}(e))=(f_{V}(x)$\textunderscore$f_{V}(y))$ whenever $\partial_{G}(e)=(x$\textunderscore$y)$, assures that the incidence of the edges in $G$ is preserved in $H$ under $f$. Note that the above definition also requires that vertices be mapped to vertices and edges be mapped (strictly) to edges. However, sometimes it may be beneficial to allow edges to be mapped to vertices. Such a morphism would allow a graph to naturally map to the contraction or quotient graph obtained by the contraction of an edge, but this could not be a strict morphism.\\
\indent Now that we have defined our graphs and graph homomorphisms, we are ready to discuss the various Categories of Graphs.

\section{The Categories of Graphs}
\begin{definition} \emph{Concerete Categories} \cite{Mac} are categories whose objects are sets with structure and whose morphisms are functions that preserve that structure.
\end{definition}
\indent We will now define six concrete categories of graphs using the various restrictions of the previous section. We do not include all combinations of restrictions, but instead focus on the combinations of restrictions often seen in literature.
\begin{definition}
The \emph{Category of Conceptual Graphs}, \textbf{Grphs}, is a (concrete) category where the objects are conceptual graphs and the morphisms are graph morphisms.
\end{definition}
\indent Keith Ken Williams \cite{KKWil} proved that the axioms of a category are satisfied by this definition. 
\begin{proposition}
\textbf{Grphs} is a category.
\end{proposition}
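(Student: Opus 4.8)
The plan is to verify the category axioms directly, using the fact that \textbf{Grphs} is concrete: a graph morphism $f$ is completely determined by its part-function $f_P$, so composition and identities can be borrowed from \textbf{Set} once we check that they respect the two pieces of extra structure, namely the distinguished vertex subset $V(G)\subseteq P(G)$ and the incidence map $\partial_G$. Concretely, I would (i) define composition, (ii) check that the composite of two graph morphisms is again a graph morphism, (iii) exhibit identity morphisms and verify the unit laws, and (iv) note associativity.

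For (i) and (ii): given graph morphisms $f\colon G\rightarrow H$ and $g\colon H\rightarrow K$, set $(g\circ f)_P=g_P\circ f_P\colon P(G)\rightarrow P(K)$. Since $f_P(V(G))\subseteq V(H)$ and $g_P(V(H))\subseteq V(K)$, the restriction of $(g\circ f)_P$ to $V(G)$ lands in $V(K)$, so $(g\circ f)_V=g_V\circ f_V$ is a well-defined map $V(G)\rightarrow V(K)$, as required of a morphism. For the incidence condition, suppose $e\in P(G)$ with $\partial_G(e)=(x$\textunderscore$y)$ for some $x,y\in V(G)$. Because $f$ is a morphism, $\partial_H(f_P(e))=(f_V(x)$\textunderscore$f_V(y))$; applying the incidence-preservation condition of $g$ to the part $f_P(e)\in P(H)$ --- here it matters that this condition in the definition of a graph morphism ranges over \emph{all} parts, not merely over strict edges --- then yields $\partial_K((g\circ f)_P(e))=(g_V(f_V(x))$\textunderscore$g_V(f_V(y)))=((g\circ f)_V(x)$\textunderscore$(g\circ f)_V(y))$. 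Hence $g\circ f$ is a graph morphism. (There is no ambiguity in these unordered pairs, since $(f_V(x)$\textunderscore$f_V(y))=(f_V(y)$\textunderscore$f_V(x))$, and likewise after applying $g_V$.)

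For (iii) and (iv): for each conceptual graph $G$ take $\mathrm{id}_G$ with $(\mathrm{id}_G)_P=\mathrm{id}_{P(G)}$; its restriction to $V(G)$ is $\mathrm{id}_{V(G)}$, which does land in $V(G)$, and incidence is preserved trivially since $\partial_G(\mathrm{id}_{P(G)}(e))=\partial_G(e)$. Because $\mathrm{id}_{P(G)}$ is a two-sided unit for composition of functions, $\mathrm{id}_G$ is a two-sided unit for the composition of (i). Associativity, $(h\circ g)\circ f=h\circ(g\circ f)$, is immediate from associativity of composition of the underlying part-functions. Finally, for any two graphs $G$ and $H$ the graph morphisms $G\rightarrow H$ form a subset of the set of all functions $P(G)\rightarrow P(H)$, hence a set. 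Together these verifications give all the category axioms.

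I do not anticipate a genuine obstacle --- this is presumably why the statement is recorded as a proposition rather than a theorem --- but the only step that is not a verbatim transcription of facts about \textbf{Set} is the incidence computation in (ii). The subtlety worth flagging is that a graph morphism may legitimately send an edge to (the image of) a vertex, so the chain of incidence equalities above goes through precisely because the defining condition is stated for \emph{all} parts of the domain rather than only for its edges.
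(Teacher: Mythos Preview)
Your verification is correct. The paper itself does not supply a proof of this proposition; it simply attributes the result to Williams \cite{KKWil} and records the statement. Your direct check of the category axioms---composition via the underlying part-functions, closure under composition by verifying preservation of the vertex subset and of incidence, identities, associativity inherited from \textbf{Set}, and smallness of hom-sets---is exactly the expected argument, and you have correctly isolated the one nontrivial point: the incidence-preservation clause in Definition~1.2 is stated for all parts $e\in P(G)$, not merely for edges, so the chain $\partial_K(g_P(f_P(e)))=(g_V f_V(x)$\textunderscore$g_V f_V(y))$ goes through even when $f_P(e)$ happens to be a vertex of $H$.
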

\indent This category, \textbf{Grphs}, we will think of as the big ``mother'' category of graphs. We now define five other commonly studied concrete subcategories of \textbf{Grphs}.
\begin{definition}
The \emph{Category of Simple Graphs with Conceptual Morphisms}, \textbf{SiGrphs}, is the (concrete) category where the objects are simple graphs, and the morphisms are conceptual graph morphisms.
\end{definition}
\begin{definition}
The \emph{Category of Simple Loopless Graphs with Conceptual Morphisms}, \textbf{SiLlGrphs}, is the (concrete) category where the objects are simple graphs without loops, and the morphisms are conceptual graph morphisms.
\end{definition}
\begin{definition}
The \emph{Category of Conceptual Graphs with Strict Morphisms}, \textbf{StGrphs}, is the (concrete) category where the objects are conceptual graphs, and the morphisms are strict graph morphisms.
\end{definition}
\begin{definition}
The \emph{Category of Simple Graphs with Strict Morphisms}, \textbf{SiStGrphs}, is the (concrete) category where the objects are simple graphs and the morphisms are strict graph morphisms.
\end{definition}
\indent This last category we defined is most often referred to as the ``category of graphs'' and is the main category of graphs discussed in \cites{HN2004, Dochtermann}, namely, graphs with at most one edge between vertices, at most one loop at a vertex, and all the morphisms are \emph{strict} (i.e. take and edge or loop \emph{strictly} to an edge or loop).
\begin{definition}
The \emph{Category of Simple Loopless Graphs with Strict Morphisms}, \textbf{SiLlStGrphs}, is the (concrete) category where the objects are simple graphs without loops, and the morphisms are strict graph morphisms.
\end{definition}
As the composition of strict morphisms are strict morphisms, and the identity morphism is a strict morphism, these are in fact categories. We now have a containment picture of our six different (concrete) categories of graphs, with our mother category at the top. At the bottom, we have also included as another graph category, the category of sets (and functions), where a set is considered as a graph with no edges and any function is a (strict) morphism of such graphs.
\begin{figure}[h]
\centering \includegraphics[scale=.6]{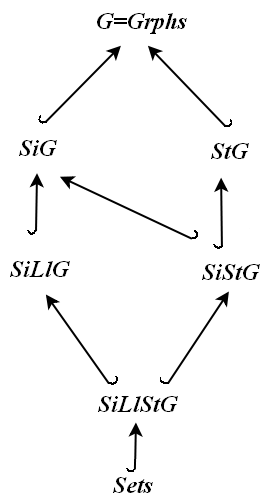}
\caption{The Categories of Graphs}
\end{figure}

\section[Categorial Comparisons - Graphs vs. Sets]{Categorial Comparisons - The Categories of Graphs vs. The Category of Sets}
\subsection{The Lawvere Axioms for \textbf{Sets}}
A natural representation question is, what characterizing properties must an abstract category have in order for it to be \textbf{Sets} (up to functor equivalence of categories)? There are six characterizing conditions for an arbitrary category to be the category \textbf{Sets} called the Lawvere-Tierney Axioms \cites{LT, Lawvere, T}. An arbitrary category is the category \textbf{Sets} if and only if all six of the following conditions are satisfied.\par
\begin{list}{}
\item \textbf{(L1)} \textbf{Sets} has Limits (and Colimits):
\item \textbf{(L2)} \textbf{Sets} has exponentiation with evaluation.
\item \textbf{(L3)} \textbf{Sets} has a subobject classifier.
\item \textbf{(L4)} \textbf{Sets} has a natural number object.
\item \textbf{(L5)} \textbf{Sets} has the Axiom of Choice.
\item \textbf{(L6)} The subobject classifier in \textbf{Sets} is two-valued.
\end{list}

Each of these categorial properties are defined abstractly \cite{Goldblatt}. That is to say, in the definitions only objects and morphisms will be used, not the structure of the objects.\par
We investigate each of our categories of graphs to see which of the axioms for \textbf{Sets} they satisfy and which axioms they do not. We start with \textbf{Grphs}. It has already been shown that \textbf{Grphs} satisfies (L1) \cite{KKWil}, and that \textbf{Grphs} fails to have the (L2) construction \cite{BMS}. We will provide the constructions for (L1) for completeness and provide an alternate proof for the failure of the existence of (L2) that has a combinatorial flavor, shows that an exponentiation object can exist, and shows exponentiation with evaluation fails due to an evaluation morphism that fails to satisfy the universal mapping property (as opposed to \cite{BMS} who show the failure of a necessary adjoint relationship).\par
\subsection{Lawvere-type Axioms for \textbf{Grphs}}
\begin{proposition}
\textbf{Grphs} satisfies axioms (L1), (L3), and (L4) and does not satisfy axioms (L2), (L5), and (L6).
\end{proposition}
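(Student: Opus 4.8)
The plan is to establish the three Lawvere-type properties that hold by explicit construction and the three that fail by concrete counterexample, with (L2) carrying most of the work. For (L1) I would not reprove Williams' theorem but record the shapes of the (co)limits, since everything afterward depends on them: the terminal object is the one-vertex edge-free graph $\mathbf 1$ (a morphism into it is forced, since an edge may always land on a vertex), the initial object is $\emptyset$, coproducts and coequalizers are computed on the underlying part and vertex sets, an equalizer is the largest subgraph on which the two maps agree, and a product $G\times H$ has vertex set $V(G)\times V(H)$ but a part lying over an unordered pair may \emph{split} into two, one for each of the two ways of matching up the endpoints, whenever the parts chosen from $G$ and $H$ are both non-loops. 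This branching, which a bare pair of functions cannot express, is the recurring source of the combinatorics below.

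For (L3) I would first check that a morphism is monic exactly when its part function is injective, that such a monomorphism cannot collapse an edge onto a vertex, and hence that subobjects of $G$ are precisely its subgraphs (a set $V'\subseteq V(G)$ together with a set of edges all of whose endpoints lie in $V'$). The subobject classifier is then forced to be the graph $\Omega$ with two vertices $\top$ and $\bot$, a single edge joining $\top$ and $\bot$, and a single loop at $\top$, with $\mathrm{true}:\mathbf 1\to\Omega$ naming $\top$: the edge $\top$--$\bot$ is exactly what a characteristic morphism needs in order to be definable on an edge with one endpoint outside the subgraph, the loop at $\top$ is what lets it distinguish an edge with both endpoints inside but not itself in the subgraph, and an edge with both endpoints outside is sent to $\iota_\Omega(\bot)$; a diagram chase then gives existence and uniqueness of characteristic morphisms and shows the square is a pullback. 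For (L4) the natural number object is the edge-free graph on $\mathbb N$ with $z$ the vertex $0$ and $s$ the successor map: since morphisms out of an edge-free graph are arbitrary functions on vertices, for any $(X,x,f)$ the recursion equations have the unique solution $h(n)=f^{n}(x)$, exactly as in \textbf{Sets}.

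The failures of (L5) and (L6) are short. For (L6), the $\Omega$ just constructed carries two edges in addition to its two vertices, so it is not isomorphic to $\mathbf 1\amalg\mathbf 1$, whereas the subobject classifier of \textbf{Sets} \emph{is} $\mathbf 1\amalg\mathbf 1$; hence the classifier is not two-valued. For (L5), let $A$ be the graph with two vertices and one edge between them, $B$ the one-vertex graph with one loop, and $e:A\to B$ the morphism collapsing the two vertices and sending the edge to the loop. It is surjective on parts, hence epic, but a section $s:B\to A$ would have to send the loop to the edge of $A$ and the loop's single endpoint to a vertex $a$ at which that edge has both endpoints — impossible, since the edge of $A$ joins the two distinct vertices of $A$; so \textbf{Grphs} fails the axiom of choice.

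The crux is (L2), which I would argue combinatorially, as the excerpt promises, by pinning down the forced candidate exponential and showing its evaluation fails the universal mapping property. Fix $G=\mathbf I$, the graph with two vertices and one (non-loop) edge, and note that $\mathbf I\times\mathbf I$ is the complete graph $K_4$ — two ``grid'' edges in each direction together with the two diagonal edges produced by the branching. An exponential $H^{\mathbf I}$, if it exists, is unique together with its evaluation, and its data is forced: $V(H^{\mathbf I})=\mathrm{Hom}(\mathbf I,H)$, with the remaining parts determined by $\mathrm{Hom}(\mathbf I,H^{\mathbf I})\cong\mathrm{Hom}(K_4,H)$ compatibly with restriction along the two vertices of $\mathbf I$; one writes down this $H^{\mathbf I}$ together with the only compatible map $\mathrm{ev}:H^{\mathbf I}\times\mathbf I\to H$. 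Testing against $X=\mathbf L$, the one-vertex one-loop graph, is where it breaks: $\mathbf L$ is the coequalizer of the two vertex-inclusions $\mathbf 1\rightrightarrows\mathbf I$, but $-\times\mathbf I$ sends that coequalizer diagram to one whose colimit is a two-vertex graph with a \emph{triple} edge (plus a loop at each vertex), while $\mathbf L\times\mathbf I$ has only a \emph{double} edge — the extra strand being a diagonal of $K_4$ that the coequalizer keeps but $\mathbf L\times\mathbf I$ absorbs. Hence for a suitably small $H$ separating these two graphs the transpose map $h\mapsto\mathrm{ev}\circ(h\times\mathrm{id})$ from $\mathrm{Hom}(\mathbf L,H^{\mathbf I})$ to $\mathrm{Hom}(\mathbf L\times\mathbf I,H)$ cannot be a bijection, so $\mathrm{ev}$ fails the universal mapping property and no exponential exists; equivalently, $-\times\mathbf I$ preserves no such coequalizer and so has no right adjoint. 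The hard part is bookkeeping: building $H^{\mathbf I}$ and $\mathrm{ev}$ exactly, and isolating the minimal $H$ (hence $X$ and $f$) for which the transpose equation has no solution — everything hinging on the unordered-pair branching in products and on edges being allowed to map to vertices.
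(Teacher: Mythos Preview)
Your treatment of (L1), (L3), (L4), and (L6) tracks the paper almost exactly: same terminal object $K_1$, same $\Omega$ (two vertices, one non-loop edge, one loop at the true vertex, with ``both endpoints out'' edges sent to the false vertex), same edge-free $\mathbb N$, same observation that $\Omega\not\cong\hat 1+\hat 1$.

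For (L5) you and the paper give \emph{different} counterexamples. The paper takes $f:K_2^c\to K_2$ (two isolated vertices mapping to the endpoints of an edge) and shows no $g$ satisfies $fgf=f$; this $f$ is not even an epimorphism. Your example $K_2\twoheadrightarrow K_1^\ell$ is an honest epi without a section. Both refute (L5); yours also witnesses the ``every epi splits'' formulation directly, while the paper's witnesses the quasi-inverse formulation directly.

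The real divergence is (L2). The paper's proof is deliberately \emph{combinatorial}: with $A=K_2$ and $B=K_1^\ell$ it uses four test objects $X$ and morphism counts to force the putative $B^A$ to be the graph on two vertices with exactly $24$ parallel edges and $7$ loops at each vertex, and then a pigeonhole argument ($24$ parallel edges versus only $16$ possible evaluation patterns on the four associated parts of $B^A\times A$) shows that for \emph{any} candidate evaluation map, two distinct transposes yield the same $g$, killing uniqueness. The paper stresses that this exhibits a genuine ``exponential object'' and locates the failure in the evaluation UMP, in contrast to the adjoint-functor argument of [BMS].

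Your route is essentially the adjoint-functor one: $\mathbf L=K_1^\ell$ is the coequalizer of $\hat 1\rightrightarrows K_2$, the coequalizer of $K_2\rightrightarrows K_4$ obtained by applying $-\times K_2$ has a \emph{triple} edge (plus two loops), whereas $\mathbf L\times K_2$ has only a \emph{double} edge (plus two loops), so $-\times K_2$ fails to preserve this coequalizer and cannot have a right adjoint. This is correct and considerably shorter; taking $H=K_1^\ell$ makes it numerically explicit ($|\mathrm{Hom}|=16$ versus $32$) if you want a concrete mismatch. What it does \emph{not} deliver is the paper's advertised payoff: the explicit description of the would-be exponential object and the pinpointing of the failure at the evaluation UMP rather than at the adjunction. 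Your write-up also blends the two stories---you announce you will ``build $H^{\mathbf I}$ and $\mathrm{ev}$ exactly'' and then never use them, since the coequalizer argument needs neither; and the claim of a single ``only compatible'' $\mathrm{ev}$ is not right, as the paper's pigeonhole step shows there are many candidate evaluations and none works. If you keep the coequalizer argument, drop that framing; if you want to match the paper's combinatorial promise, you need the full morphism-count and pigeonhole computation.
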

\begin{proof}
\textbf{Axiom L1 (``limits'') holds for} \textbf{Grphs}: We note that limits and colimits exist by the existence of a terminal object, products, equalizers, and their duals \cite{Mac}. The one vertex graph $K_1$ (the classical ``complete graph on one vertex'') is the terminal object, which we will denote $\hat{1}$, and the empty vertex set (and edge set) graph $\emptyset$ is the initial object, which we will denote $\hat{0}$.\\
\indent For products, let $A$ and $B$ be graphs in \textbf{Grphs}. To keep track of the incidence relation we will need a total ordering (arbitrary but fixed) on the vertex sets. Let $\leq_A$ be the total ordering on $V(A)$ and $\leq_B$ be the total ordering on $V(B)$. For any element $\alpha \in P(A)$ with $\partial_A(\alpha)=(a_1$\textunderscore$a_2)$ and $a_1\leq_A a_2$, and any element $\beta \in P(B)$ with $\partial_B(\beta)=(b_1$\textunderscore$b_2)$ and $b_1\leq_B b_2$, there is an element $(\alpha,\beta)$ in $P(A\times B)$ whose incidence is $\partial_{A\times B}((\alpha,\beta ))=((a_1,b_1)$\textunderscore$(a_2,b_2))$, (hence $V(A\times B)=V(A)\times V(B)$), and if $a_1\neq a_2$ and $b_1\neq b_2$ then there is an additional element in $P(A\times B)$, $\overline{(\alpha,\beta)}$, whose incidence is $\partial(\overline{(\alpha,\beta)})=((a_2,b_1)$\textunderscore$(a_1,b_2))$. The projections from the product are $\pi_A((\alpha,\beta))=\alpha$ (regardless of the bar,  $\pi_A(\overline{(\alpha,\beta)})=\alpha$) and $\pi_B((\alpha,\beta))=\beta$ (regardless of the bar,  $\pi_B(\overline{(\alpha,\beta)})=\beta$). So we have a vertex set, a part set, and an incidence relation from the edge set to an unordered product of the vertex set with itself; therefore this product is in fact, a graph. It can easily be shown it satisfies the universal mapping properties for a product.\par
\textbf{Remark}: for our figures with graphs, we provide ``pictures'' (with picture frames) for the graphs. This helps to distinguish the graphs from the morphisms, especially in the case of graphs with multiple components. It also emphasizes that we are often choosing representative graphs from an isomorphism class of graphs.\\
\begin{figure}[h]
\centering \includegraphics[scale=.7]{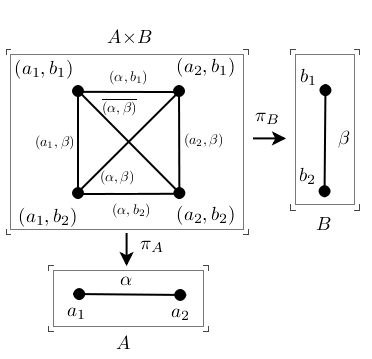}
\caption{An example of the product in \textbf{Grphs} with pictures.}
\end{figure}\par
\indent The coproduct of two graphs in \textbf{Grphs} is the disjoint union of the two graphs. This construction satisfies the universal mapping property for a coproduct.\\
\indent Let $f,g:A\rightarrow B$ be two morphisms in \textbf{Grphs}. We define the equalizer's part set as $P(Eq)=\{a \in P(A) \vert f(a)=g(a)$ and if $\partial(a)=(u$\textunderscore$v)$ then $f(u)=g(u)$ and $f(v)=g(v)\}$. We have required that a edge is in the part set of the equalizer implies that its incident vertices are in the part set of the equalizer. So the part set of the equalizer is a subset of $P(A)$ and the vertex set of the equalizer is a subset of $V(A)$. And by construction the incidence function from $A$ can be restricted to the part set of the equalizer since we made sure to only include edges if their incident vertices were in the equalizer. So we have that the equalizer is a graph and that $\iota:Eq\hookrightarrow A$ is just the inclusion morphism. This construction satisfies the universal mapping property for an equalizer.\\
\indent Again, let $f,g:A\rightarrow B$ be two morphisms in \textbf{Grphs}.
In order to construct the coequalizer, $Q$, we will need to mod out $P(B)$ by an equivalence relation. Let $x,y \in P(B)$, define $x \sim y$ if there exists a sequence $a_0,a_1,...,a_n\in P(A)$ such that $x=f(a_0)$, $g(a_0)=f(a_1)$, $g(a_1)=f(a_2)$,...., $g(a_{n-1})=f(a_n)$ and $y=f(a_n)$ or $y=g(a_n)$ (or switch the $f$'s and $g$'s in the previous statement). \par
When we identify equivalent elements of $P(B)$ we get a set, which we will call $P(Q)$, and we will define $q:P(B)\rightarrow P(Q)$ as the set function which maps elements to their equivalence class in $P(Q)$.\\
\indent We now show $Q$ is a graph. First we note that if an edge or vertex is identified with another edge or vertex, then their incident vertices must be identified. This is due to the fact that if $f(e_1)=g(e_2)$ with $\partial(e_1)=(a$\textunderscore$b)$ and $\partial(e_2)=(c$\textunderscore$d)$ by the properties of graph morphisms $f(a)=g(c)$ and $f(b)=g(d)$ or $f(a)=g(d)$ and $f(b)=g(c)$. Either way we can construct a sequence between the vertices of $e_1$ and $e_2$ of their incidences which will be the sequence necessary for the equivalence of their vertices. It is worth noting that some equivalence classes in \textbf{Grphs} have both vertices and edges in them, and if an equivalence class contains any vertices, it is a vertex in the coequalizer graph. Hence by construction $q$ is a graph morphism and $qf=qg$. This construction satisfies the universal mapping property for a coequalizer.\\
\indent \textbf{Axiom L2 (``exponentiation with evaluation'') fails for} \textbf{Grphs}: By way of contradiction let us suppose that categorial exponentiation with evaluation (by the universal mapping property of exponentiation with evaluation) exists in \textbf{Grphs}. Then \textbf{Grphs} has a terminal object, products and exponentiation with evaluation. Hence there is an adjoint functor relationship between $hom_{\mathbf{Grphs}}(X\times A, B)$ and $hom_{\mathbf{Grphs}}(X,B^A)$ for all graphs $A,B$ and $X$. Hence there is a bijection between the set of morphisms $X\times A \rightarrow B$ and the set of morphisms $X \rightarrow B^A$.\par
We construct the counterexample to the existence of exponentiation and evaluation in \textbf{Grphs} in two steps. First, we use the adjoint functor relationship to completely determine (by a ``brute force'' count) the vertices, edges, and incidence of $B^A$ as a graph where $B=K_1^\ell$, the graph with a single vertex with a loop on the vertex, and $A=K_2$, the classical ``complete graph on two vertices''. Second, we show that for all morphisms $B^A\times A\rightarrow B$ that satisfy the commuting morphism equations of the evaluation universal mapping property fails to have the uniqueness requirement for the universal mapping property for exponentiation with evaluation.\par
To begin, we will use the above mentioned adjoint bijection, but for various choices of ``test'' objects $X$. First, for $X=\hat{1}$ = terminal object = a single vertex graph, $X\times A\cong A$.\par
Any morphism from $X\times A$ to $B$ must send both vertices to the single vertex in $B$, the edge may be sent to either the loop or to the vertex. So there are two maps here. Therefore, there must be two morphisms from $\hat{1}$ to $B^A$. Since $\hat{1}$ is just a single vertex, $B^A$ must have exactly two vertices.\par
Second, suppose $X=K_1^{\ell}$ is a vertex with a single loop. Then $X\times A$ is a graph on two vertices, with a loop at each vertex, and two edges incident to the two distinct vertices.\par
Again, both the vertices of $X\times A$ must be sent to the single vertex of $B$. Now there are four edges in $X\times A$, each edge maybe sent to either the loop or to the vertex (independent of where the other edges are sent). So there are $2^4=16$ morphisms here. Therefore there must be exactly $16$ morphisms from $X$ to $B^A$. There are exactly two morphisms which send both the edge and the vertex of $X$ to a single vertex (since we have already determined that there are only two vertices in $B^A$). Which leaves 14 more morphisms to account for. Since the vertex of $X$ must be sent to a vertex, and the loop must be either sent to a loop or a vertex, we conclude that there are $14$ loops distributed between the two vertices (we do not know how they are divided between the two, but we know that there are $14$ of them).\par
Third, suppose $X=K_2$ is two vertices with a non-loop edge between them. Then $X\times A \cong K_4$\par
Again, all four vertices of $X\times A$ must be sent to the single vertex of $B$. The six edges of $X\times A$ can be sent to either the loop or to the vertex (independent of where the other ones are sent). So there are $2^6=64$ morphisms here.\par
Therefore there must be $64$ morphisms from $X$ to $B^A$. $X$ has only one edge, it can either be sent to a vertex, a loop, or a non-loop edge. There are two ways to send it to a vertex (and this will force both its vertices to be sent to this vertex to preserve incidence). Since there are $14$ loops, there are $14$ ways to send it to a loop (and since incidence must be preserved both the vertices of $X$ must be sent to the vertex incident on this loop, it is worth noting that we still don't know where these $14$ loops are, but it doesn't matter counting these morphisms). Which leaves us with $48$ morphisms to account for, which must send the edge of $X$ to a non-loop edge of $B^A$ since we have accounted for the other possibilities. There are only two vertices in $B^A$ so there is only place to send a non-loop edge. Also, each non-loop edge in $B^A$ will give us two morphisms from $X$ to that edge (once you decide which of the vertices to send one vertex of $X$ to, the edge must be sent to the  edge and the other vertex of $X$ to the other vertex of $B^A$). Therefore there must be precisely $24$ non-loop edges connecting the two vertices of $B^A$. We still do not know where the $14$ loops are in $B^A$ but we have a pretty good idea of what it must look like.\par
Now we will test what $B^A$ must be by testing with one more $X$ to determine the placement of the loops. So for the fourth test choice of $X$, suppose $X$ is two vertices with one loop and one non-loop edge.\par
\begin{figure}[h]
\centering \includegraphics[scale=.6]{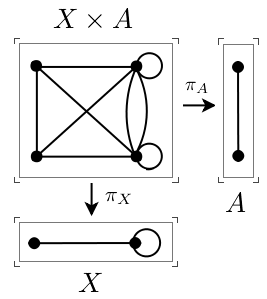}
\caption{A picture of $X\times A$, for the fourth test choice of $X$.}
\end{figure}\par
Again, all four vertices of $X\times A$ must be sent to the single vertex in $A$, and each of the 9 edges can either be sent to the loop or to the vertex (independent of where the other edges go). So there are $2^{9}=512$ morphisms here.\par
Now we will count the number of morphisms $X\rightarrow B^A$ by considering the following six disjoint types of morphisms whose union are all the morphisms: everything in $X$ can be sent to a single vertex, everything in $X$ but the loop can be sent to a vertex with the loop to a loop, the non-loop edge can be sent to a loop with everything else sent to a vertex, the non-loop edge can be sent to a non-loop edge with the loop sent to a vertex, the loop can be sent to a loop and the non-loop edge sent to a non-loop edge, or both the loop and the non-loop edge can be sent to loops.\par
There are two ways to send everything in $X$ to a vertex since $B^A$ only has two vertices. There are $14$ ways to send the loop of $X$ to a loop and everything else to a vertex since $B^A$ has $14$ loops. Likewise there are $14$ ways to send the non-loop edge to a loop with everything else going to the incident vertex. As discussed before, there are $48$ ways to send the non-loop edge to a non-loop edge and the loop to a vertex.\par
Now we will count the number of ways to send the loop of $X$ to a loop in $B^A$ and the non-loop edge of $X$ to a non-loop edge of $B^A$. There are $14$ choices of where to send the loop, and this choice determines where vertex incident on the loop is sent. After this choice is made, there will be $24$ non-loop edges in $B^A$ to send the non-loop edge of $X$ to (note again that we don't know which vertex the loops are on, but it does not effect our count of this type of morphism). So there are $14\times 24=336$ of this type of morphism.\par
We have now accounted for $2+14+14+48+336=414$ morphisms, which leaves $512-414=98$ morphisms to account for. The only other type of morphism is one which sends both the loop and the non-loop edge of $X$ to loops in $B^A$. Suppose there are $m$ loops on one vertex of $B^A$ and $n$ loops on the other. Then there is $m^2+n^2=98$ morphisms which send both edges of $X$ to a loop, and $m+n=14$. Solving this system of equations yields the unique solution of $m=7$ and $n=7$. Hence the $14$ loops are distributed evenly between the two vertices of $B^A$.\par
So we now have a complete description of what we will call the ``exponential object'' $B^A$, for the given $A$ and given $B$. (This assumed that categorial exponentiation with evaluation exists).\par
We've determined that $B^A$ is a graph with two vertices (which we will label $u$ and $v$) $24$ non-loop edges (which we will label $e_i$ for $i=1, \dots, 24$), and 7 loops on each vertex (which we will label $u_{\ell_j}$ and $v_{\ell_j}$ for $j=1,\dots,7$). It will also help us to label the graphs $A$ and $B$. Label the vertices of $A\cong K_2$ as $a_1$ and $a_2$, and the edge as $e_a$. Label the vertex of $B$ by $b$ and the loop by $\ell_b$.\par
\begin{figure}[h]
\centering\includegraphics[scale=.6]{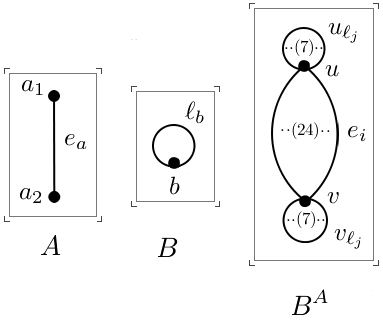}
\caption{Pictures of the graphs for the counterexample to categorial ``exponentiation'' in \textbf{Grphs}.}
\end{figure}
\par
But even though this exponential object exists, we have yet to show that its evaluation satisfies the uniqueness feature of the universal mapping property for exponentiation with evaluation. So we investigate evaluation by again using test objects in the universal mapping property for exponentiation with evaluation, which states that there exists $ev:B^A\times A \rightarrow B$ such that for all $X$ and $g:X\rightarrow B$, there is a unique $\overline{g}:X\rightarrow B^A$ such that $g=ev(\overline{g}\times 1_A)$.\par
For the first choice of test objects, let $X$ be the single vertex graph $K_1$ (which we have denoted $\hat{1}$) with vertex $x$. Then as $X(=\hat{1})$ is the terminal object, $X\times A\cong A$. Thus there are two morphisms from $X\times A$ to $B$. Let $g_1:X\times A \rightarrow B$ be the morphism which maps all of $P(X\times A)$ to the vertex $b$ of $B$, and let $g_2:X\times A \rightarrow B$ by the morphism that maps the edge of $X \times A$ to the loop $\ell_b$ of $B$.\par
Consider $g_1$, by the universal mapping property there is a unique $\overline{g}:X\rightarrow B^A$ such that $ev(\overline{g}\times 1_A)=g$. There are two morphisms from $X$ to $B^A$, $\overline{g}(x)=u$ or $\overline{g}(x)=v$. If $\overline{g}(x)=u$, then $ev(\overline{g}\times 1_A((x,e_a)))=ev((u,e_a))=g(x)=b$. If $\overline{g}(x)=v$, then $ev(\overline{g}\times 1_A((x,e_a)))=ev((v,e_a))=g(x)=b$.\par
As $\overline{g}$ is unique, only one of the two above possibilities holds. Since there is an automorphism of $B^A\times A$ that exchanges $(u,e_a)$ and $(v,e_a)$ (exchange all labels of $u$ and $v$ and swap $(e_i,e_a)$ with $\overline{(e_i,e_a)}$), without loss of generality we can choose $ev((v,e_a))=b$. Then as $\overline{g}$ is unique, $ev((v,e_a))\neq ev((u,e_a))$. Hence $ev((u,e_a))=\ell_b$.\par
For the second choice of test objects, test with $X=A (=K_2)$ to achieve a contradiction.\par
We claim that $g((a_1,e_a))=b$ if and only if $\overline{g}(a_1)=v$ and $g((a_1,e_a))=\ell_b$ if and only if $\overline{g}(a_1)=u$. For, since $ev((v,e_a))=b$ and $ev((u,e_a))=\ell_b$, $g((a_1,e_a))=ev(\overline{g}\times 1_A((a_1,e_a)))=ev((\overline{g}(a_1),e_a))$. Hence $g((a_1,e_a))=b$ if and only if $\overline{g}(a_1)=v$ and $g((a_1,e_a))=\ell_b$ if and only if $\overline{g}(a_1)=u$.\par
A similar argument shows $g((a_2,e_a))=b$ if and only if $\overline{g}(a_2)=v$ and $g((a_2,e_a))=\ell_b$ if and only if $\overline{g}(a_2)=u$.\par
Then for $g:A\times A \rightarrow B$ with $g((a_1,e_a))=b$ and $g((a_2,e_a))=\ell_b$, we have $\overline{g}(a_1)=v$ and $\overline{g}(a_2)=u$. Hence for such a $g$, as $\overline{g}$ must preserve incidence, $\overline{g}(e_a)=e_i$ for some $i=1,\dots, 24$. We now notice the following useful observation.\par
\textbf{(1)} If $\overline{g}(e_x)=e_i$ for some $i=1,\dots, 24$ then $\overline{g}\times 1_A((e_a,a_1))=(e_i,a_1)$, $\overline{g}\times 1_A((e_a,a_2))=(e_i,a_2)$, $\overline{g}\times 1_A((e_a,e_a))=(e_i,e_a)$, and $\overline{g}\times 1_A(\overline{(e_a,e_a)})=\overline{(e_i,e_a)}$.\par
For each $i=1,\dots,24$ there are two choices of where to map each of $(e_i,a_1)$, $(e_i,a_2)$, $(e_i,e_a)$, and $\overline{(e_i,e_a)}$ in a morphism from $B^A\times A\rightarrow B$ (either to $b$ or $\ell_b$). Thus for a fixed $i$, there are 16 possible ways to map the edges $(e_i,a_1)$, $(e_i,a_2)$, $(e_i,e_a)$, and $\overline{(e_i,e_a)}$ to $B$. However, there are 24 such indicies. Thus by the pigeonhole principle,\par
\textbf{(2)} there exists $i,j\in\{1,\dots,24\}$ with $i\neq j$, $ev((e_i,a_1))=ev((e_j,a_1))$, $ev((e_i,a_2))=ev((e_j,a_2))$, $ev((e_i,e_a))=ev((e_j,ea))$, and $ev(\overline{(e_i,e_a)})=ev(\overline{(e_j,e_a)})$.\par
So define a morphism $g:A\times A\rightarrow B$ by $g(x)=b$ for all vertices $b\in V(A\times A)$, $g((a_1,e_a))=b$, $g((a_2,e_a))=\ell_b$, $g((e_a,a_1))=ev((e_j,a_1))$, $g((e_a,a_2))=ev((e_j,a_2))$, $g((e_a,e_a))=ev((e_j,e_a))$, and $g(\overline{(e_a,e_a)})=ev(\overline{(e_j,e_a)})$ (incidence is trivially preserved). Then there is a unique $\overline{g}:A\rightarrow B^A$ such that $ev(\overline{g}\times 1_A)=g$.\par
However, by (1) $\overline{g}(a_1)=v$ and $\overline{g}(a_2)=u$, $\overline{g}(e_a)=e_j$ is such a morphism and by (2) $\overline{\overline{g}}(a_1)=v$, $\overline{\overline{g}}(a_2)=u$, and $\overline{\overline{g}}(e_a)=e_i$ is another. Hence no such unique morphism exists and (L2) does not hold in \textbf{Grphs}.\par

\textbf{Axiom L3 (``subobject classifier'') holds in} \textbf{Grphs}: In \textbf{Grphs} the subobject classifier is the following graph:\par
\begin{figure}[h]
\begin{center}
\includegraphics[scale=.6]{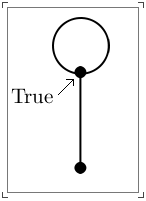}
\end{center}
\caption{A picture of the subobject classifier $\Omega$ in \textbf{Grphs}}
\end{figure}
together with the canonical morphism $\top:\hat{1}\rightarrow \Omega$ from the terminal object $\hat{1}$ to the subobject classifier $\Omega$, which maps the vertex of $\hat{1}$ to the vertex labeled ``True'' in the above picture.\par
For any subgraph $A$ of a graph $X$ we must map all of $A$ to this single true vertex. And we must map all the vertices of $V(X)\backslash V(A)$ to the other vertex. Any edge which is not in the image of $A$ but is incident only to vertices in the image of $A$ is mapped to the loop. Any edge with only one incident vertex in the image of $A$ is mapped to the non-loop edge. This construction satisfies the universal mapping property for subobject classifiers.\\
\indent \textbf{Axiom L4 (``natural number object'') holds in} \textbf{Grphs}: The natural number object in \textbf{Grphs}, $\mathbb{N}$, is the graph with no edges, and a countably infinite number of vertices labeled by the natural numbers, coupled with the initial morphism $\ulcorner 0 \urcorner : \hat{1} \rightarrow \mathbb{N}$ from the terminal object $\hat{1}$ defined by mapping the single vertex of the terminal object to the vertex labeled $0$, and successor function $\sigma: \mathbb{N} \rightarrow \mathbb{N}$ where given a vertex labeled $n$, $\sigma(n)=n+1$. This construction satisfies the universal mapping property for a natural number object.\\

\indent \textbf{Axiom L5 (``choice'') fails in} \textbf{Grphs}: Consider the graph morphism in figure 8 where $f(a_1)=b_1$ and $f(a_2)=b_2$.\par
\begin{figure}[h]
\begin{center}
\includegraphics[scale=.6]{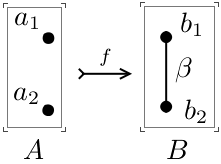}
\end{center}
\caption{A picture for the counterexample to (L5 - ``choice'') in \textbf{Grphs}}
\end{figure}
For any $g:B\rightarrow A$, $g$ must send the edge $\beta$ to one of the vertices. Without loss of generality, assume $g(\beta)=a_1$, then we must have $g(b_1)=a_1$ and $g(b_2)=a_1$ to preserve incidence. But then $fgf\neq f$. So we have an example where there does not exist a $g:B\rightarrow A$ such that $fgf=f$.\\
\indent \textbf{Axiom L6 (``two-valued'') fails for} \textbf{Grphs}: By applying the definition of terminal object and coproduct we have that $\hat{1}+\hat{1}$ is a two vertex graph with no edges. But $\Omega$ has edges so it can not be isomorphic to $\hat{1}+\hat{1}$ and is therefore not two-valued.
\end{proof}

\subsection{Lawvere-type Axioms for \textbf{SiGrphs}}
\begin{proposition}
\textbf{SiGrphs} satisfies axioms (L1), (L3), and (L4) and does not satisfy axioms (L2), (L5) and (L6).
\end{proposition}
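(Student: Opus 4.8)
The plan is to follow the proof just given for \textbf{Grphs} as closely as possible, exploiting the fact that \textbf{SiGrphs} is the full subcategory of \textbf{Grphs} whose objects are the simple graphs. Whenever a construction from that proof happens to produce a simple graph, its universal mapping property was verified in \textbf{Grphs} against all graphs, hence in particular against all simple ones, so it is automatically the corresponding construction in \textbf{SiGrphs}; only the constructions that can leave the class of simple graphs have to be reworked.

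For (L1), the terminal object $\hat{1}=K_1$ and the initial object $\hat{0}=\emptyset$ are simple, the coproduct of simple graphs (disjoint union) is simple, and the equalizer built in the \textbf{Grphs} proof is a subgraph of its domain, hence simple; these transfer verbatim. The delicate cases are the product and the coequalizer: the \textbf{Grphs} product of two simple graphs can acquire parallel edges (for instance the two parts lying over a single edge of one factor and the loop of the other), and a \textbf{Grphs} coequalizer can collapse two independent edges onto a common pair of vertices. My proposal is to form the \textbf{Grphs} construction and then identify parallel edges and parallel loops, and to verify directly that the resulting simple graph, together with the induced projections (respectively the induced quotient morphism), still satisfies the relevant universal mapping property when the test object is required to be simple. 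I expect this verification to be the main obstacle of the proposition: a morphism between simple graphs may send two parallel parts to different targets (one to a loop, the other to its endpoint vertex), so one has to argue carefully that the mediating morphism out of (or into) a simple test graph remains well defined and unique after the identification.

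Axioms (L3) and (L4) should require only cosmetic changes, since the objects witnessing them in the \textbf{Grphs} proof are already simple: the natural number object is the edgeless graph on a countably infinite vertex set (vacuously simple), with the same $\ulcorner 0\urcorner$ and successor morphism, and the subobject classifier $\Omega$ (two vertices, one loop, one non-loop edge) is a simple graph. I would re-run the two universal-mapping-property checks, observing that every subobject of a simple graph is again a simple subgraph and that the classifying morphism defined in the \textbf{Grphs} proof already takes values in $\Omega$.

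The three failures I would handle as follows. For (L2): assume exponentiation with evaluation exists, so that, exactly as in the \textbf{Grphs} proof, $hom_{\mathbf{SiGrphs}}(X\times A,B)$ and $hom_{\mathbf{SiGrphs}}(X,B^{A})$ are in bijection for all $X$; take $A=K_2$ and $B=K_1^{\ell}$, both simple. Testing with $X=\hat{1}$ gives $\hat{1}\times A\cong A$ and $|hom(K_2,K_1^{\ell})|=2$, so $B^{A}$ has exactly two vertices; testing with $X=K_2$ gives $K_2\times K_2\cong K_4$ (here the \textbf{Grphs} product is already simple, so it is also the product in \textbf{SiGrphs}) and $|hom(K_4,K_1^{\ell})|=2^{6}=64$, so $B^{A}$ would have to admit $64$ morphisms from $K_2$. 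But a simple graph on two vertices carries at most one loop at each vertex and at most one edge between them, hence admits at most $6$ morphisms from $K_2$, a contradiction. So \textbf{SiGrphs} has no exponential object $B^{A}$ at all (a sharper failure than in \textbf{Grphs}, where the exponential object exists but its evaluation fails the uniqueness clause), and (L2) fails. For (L5), the counterexample in the \textbf{Grphs} proof uses only the two-vertex edgeless graph and $K_2$, both simple, so it transfers unchanged. For (L6), $\hat{1}+\hat{1}$ is the two-vertex edgeless graph whereas $\Omega$ has edges, so the two are not isomorphic and the subobject classifier of \textbf{SiGrphs} is not two-valued.
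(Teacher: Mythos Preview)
Your plan matches the paper's proof almost exactly for (L1), (L3), (L4), (L5), and (L6): the paper, like you, transfers the terminal and initial objects, coproduct, equalizer, subobject classifier, natural number object, and the choice and two-valuedness counterexamples verbatim from \textbf{Grphs}, and handles the product and coequalizer by forming the \textbf{Grphs} construction and then identifying parallel edges and loops. The paper is in fact terser than you are here; it simply asserts that the (L1) verifications ``follow similarly,'' so your honest remark that the well-definedness and uniqueness of the mediating morphism after identification is the only real work is, if anything, more careful than what the paper records.

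The one genuine difference is in the (L2) counterexample. Both you and the paper compute $(K_1^{\ell})^{K_2}$ and first test with $X=\hat 1$ to get two vertices. For the contradiction, the paper tests with $X=K_1^{\ell}$: the \textbf{SiGrphs} product $K_1^{\ell}\times K_2$ (after identifying the two parallel edges produced in \textbf{Grphs}) has three edges, giving $2^3=8$ morphisms to $K_1^{\ell}$, whereas $K_1^{\ell}$ admits at most $4$ morphisms into any simple graph on two vertices. Your choice $X=K_2$ works just as well and has the pleasant feature that $K_2\times K_2\cong K_4$ is already simple in \textbf{Grphs}, so you never have to invoke the edge-identification step inside the (L2) argument; the paper's choice, on the other hand, reaches the contradiction with a smaller test graph and smaller numbers. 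Either route is perfectly valid.
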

\begin{proof}
\textbf{Axiom L1 (``limits'') holds for} \textbf{SiGrphs}: The proof of existence products and coproducts in \textbf{SiGrphs} follows similarly to the proof of existence of products and coproducts in \textbf{Grphs} using same constructions, by identifying any multiple edges that occur as a single edge and any multiple loops that occur as a single loop.\\
\indent \textbf{Axiom L2 (``exponentiation with evaluation'') fails for} \textbf{SiGrphs}: Suppose that exponentiation with evaluation exists in \textbf{SiGrphs}. Then \textbf{SiGrphs} has a terminal object, products and exponentiation with evaluation. Thus there is a standard adjoint functor relationship creating a bijection between the set of morphisms $X\times B \rightarrow A$ and the set of morphisms $X \rightarrow A^B$.\\
\indent To construct our counterexample to the existence of exponentiation with evaluation in \textbf{SiGrphs}, let both $A$ be $K_1^\ell$ the graph with a single vertex with a loop, and $B$ be $K_2$. To begin, we will let $X$ be a single vertex (this is the multiplicative identity, $\hat{1}$, in \textbf{SiGrphs} and hence $X\times B=B$). As $K_2$ admits 2 morphisms to $K_1^\ell$ in \textbf{SiGrphs}, then $A^B$ has 2 vertices, identified by $X$.\\
\indent Now let $X$ be $K_1^\ell$. Then $X\times B$ is a graph with two vertices with a loop at each vertex and an edge between the two vertices. $X\times B$ admits 8 morphism to $K_1^\ell$, as the edge and each loop can be mapped to either the vertex or the loop. Hence $K_1^\ell$ admits 8 morphisms to $A^B$. However, in \textbf{SiGrphs}, $K_1^\ell$ admits at most 4 morphisms to any graph on two vertices, 2 morphisms that map the loop to a vertex, and 2 that map the loop to another loop. Hence we have a contradiction and exponentiation with evaluation does not exist in \textbf{SiGrphs}.\\
\indent \textbf{Axioms L3 and L4 (``subobject classifier'' and ``natural number object'') both hold in} \textbf{SiGrphs}: Both the subobject classifier, and the natural number object for \textbf{SiGrphs} is the same as it is for \textbf{Grphs}.\\
\indent \textbf{Axioms L5 and L6 (``choice'' and ``two-valued'') fail for} \textbf{SiGrphs}: The same counterexample for \textbf{Grphs} applies here.
\end{proof}

\subsection{Lawvere-type Axioms for \textbf{SiLlGrphs}}
\begin{proposition}
\textbf{SiLlGrphs} satisfies axioms (L1), (L2), and (L4), and does not satisfy axioms (L3), (L5), and (L6).
\end{proposition}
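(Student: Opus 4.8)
The plan is to dispatch the six axioms one at a time, copying the \textbf{Grphs} and \textbf{SiGrphs} arguments verbatim whenever every graph occurring in them is already simple and loopless, and doing genuinely new work only for the coequalizer half of (L1) and for (L2).

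For \textbf{(L1)} I would take $\hat 1=K_1$ and $\hat 0=\emptyset$ as the terminal and initial objects. For products I would reuse the construction of the \textbf{Grphs} proof and then identify parallel edges to restore simplicity; the result is automatically loopless, since a product part with incidence $((a_1,b_1)\_(a_2,b_2))$ or $((a_2,b_1)\_(a_1,b_2))$ has distinct endpoints whenever one of the two factors is an edge (because an edge of a loopless graph is a non‑loop). Coproducts are disjoint unions, and equalizers are induced subgraphs, exactly as in \textbf{SiGrphs}. The only new step is the coequalizer of $f,g\colon A\to B$: form the \textbf{Grphs} coequalizer $q\colon B\to Q$, then let $Q'$ be the simple loopless graph obtained from $Q$ by deleting each loop (that part being sent to its base vertex) and identifying parallel edges, with the evident morphism $q'\colon Q\to Q'$. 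To see $q'q$ is the coequalizer in \textbf{SiLlGrphs}, take any $h\colon B\to C$ with $C$ simple and loopless and $hf=hg$; the \textbf{Grphs} property gives a unique $h_1\colon Q\to C$ with $h_1q=h$, and because $C$ is simple and loopless $h_1$ is forced to send each loop of $Q$ to its base vertex and each pair of parallel edges to a common part, so $h_1$ factors uniquely through $q'$, giving the required unique transpose.

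For \textbf{(L2)} the behaviour genuinely diverges from \textbf{Grphs} and \textbf{SiGrphs}, whose counterexamples used $B=K_1^\ell$, which is not an object of \textbf{SiLlGrphs}, so an exponential can actually be built. Given simple loopless $A,B$, let $B^A$ have vertex set $hom_{\mathbf{SiLlGrphs}}(A,B)$, with an edge joining distinct morphisms $\varphi,\psi$ precisely when, for every pair $x,y\in V(A)$ that are equal or adjacent in $A$, the images $\varphi(x)$ and $\psi(y)$ are equal or adjacent in $B$; this relation is symmetric (because equality/adjacency in $A$ and in $B$ are), so $B^A$ is a well‑defined simple loopless graph. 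Define $ev\colon B^A\times A\to B$ on vertices by $ev(\varphi,a)=\varphi(a)$ and on each of the three kinds of product edge (an edge inside an $A$‑layer, an edge inside a $B^A$‑layer, and the two ``diagonal'' parts built from an edge of $A$ together with an edge of $B^A$) by sending it to the unique edge of $B$ joining the images of its endpoints, or to their common image when these coincide; the defining adjacency condition on $B^A$ is exactly what makes this assignment legitimate. For the universal mapping property, given $F\colon X\times A\to B$ put $\overline F(x)=(a\mapsto F(x,a))$; one checks that $\overline F(x)$ is a morphism $A\to B$ (it is the restriction of $F$ to the $A$‑layer over $x$), that whenever $x,x'$ are equal or adjacent in $X$ the morphisms $\overline F(x),\overline F(x')$ satisfy the defining adjacency condition (because then $(x,a)$ and $(x',a')$ are equal or adjacent in $X\times A$ whenever $a,a'$ are so in $A$), hence that $\overline F$ is a morphism $X\to B^A$, and that $ev\circ(\overline F\times 1_A)=F$. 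Uniqueness of $\overline F$ is forced on vertices by evaluation through $ev$ and on edges by simplicity and looplessness of $B^A$.

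For \textbf{(L4)} I would copy the \textbf{Grphs} argument verbatim: the edgeless graph on $\{0,1,2,\dots\}$ is simple and loopless, and with $\ulcorner 0\urcorner$ and the successor map it satisfies the natural number universal property. For \textbf{(L3)}, suppose $(\Omega,\top\colon\hat 1\to\Omega)$ were a subobject classifier, so $\top$ names a vertex $t$ of the loopless graph $\Omega$; classify the monomorphism into $K_2$ that includes both vertices but not the edge. Its classifier $\chi\colon K_2\to\Omega$ must send both vertices to $t$ (so that they lie in the pullback of $\top$ along $\chi$, which is the induced subgraph on $\{p:\chi(p)=t\}$) but the edge to a part \emph{outside} that pullback; yet incidence forces $\chi$ of the edge to be incident to $(t\_t)$, and as $\Omega$ has no loop the only such part is $t$ itself — which does lie in the pullback, a contradiction, so no subobject classifier exists. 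For \textbf{(L5)}, the \textbf{Grphs} counterexample — the edgeless two‑vertex graph $A$ mapping onto the two ends of $B=K_2$ — lies entirely in \textbf{SiLlGrphs}, and the same reasoning applies: any $g\colon B\to A$ must collapse the edge of $B$ to a vertex, whence $fgf\ne f$. Finally \textbf{(L6)} fails because, by the failure of (L3), \textbf{SiLlGrphs} has no subobject classifier at all, hence a fortiori none that is two‑valued.

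The substantive step — and the main obstacle — is \textbf{(L2)}: pinning down the adjacency relation on $B^A$ so that $ev$ is well defined and so that the transpose $\overline F$ is the \emph{unique} morphism realizing $F$; everything else is either a transcription of earlier arguments or a one‑line pullback computation. (Conceptually the construction is transparent once one observes that \textbf{SiLlGrphs} is equivalent to the category of simple reflexive graphs, which is cartesian closed; but I would present the explicit combinatorial version to keep in step with the rest of the paper.)
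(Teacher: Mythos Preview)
Your proposal is correct and follows essentially the same route as the paper: the (L1), (L4), (L5), and (L6) arguments are direct transcriptions, the (L3) counterexample is the same $K_2^c\hookrightarrow K_2$ inclusion with the same incidence-versus-looplessness contradiction, and your (L2) exponential $B^A$ with vertex set $\hom(A,B)$ and adjacency ``$\varphi\sim\psi$ iff equal-or-adjacent in $A$ is sent to equal-or-adjacent in $B$'' is exactly the paper's construction rephrased at the level of vertices rather than parts. The only additions are cosmetic: your explicit description of the three kinds of product edge and your side remark about the equivalence with simple reflexive graphs are not in the paper, but they do not change the argument.
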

\begin{proof}
\textbf{Axiom L1 (``limits'') holds for} \textbf{SiLlGrphs}: To show the existence of limits, we note that the terminal object, products, and equalizers are defined as in \textbf{SiGrphs}. For colimits, the initial object, and the coproduct are the same as in \textbf{SiGrphs} with the coequalizer being the construction given in \textbf{Grphs} with multiple edges identified as a single edge, and loops identified with the incident vertex.\\
\indent \textbf{Axiom L2 (``exponentiation with evaluation'') holds for} \textbf{SiLlGrphs}: Given graphs $G$ and $H$, define $H^G$ by $V(H^G)=$ hom$_\mathbf{SiLlG}(G,H)$, and $e\in P(H^G)$ with $\partial_{H^G}(e)=(f_1$\textunderscore$f_2)$ if for all $d\in P(G)$ with $\partial_G(d)=(d_1$\textunderscore$d_2)$, there exists $d^\prime\in P(H)$ with $\partial_H(d^\prime)=(f_1(d_1)$\textunderscore$f_2(d_2))$.\\
\indent Then define $ev:H^G\times G\rightarrow H$ by $ev((f,v))=f(v)$ for all vertices $(f,v)\in V(H^G\times G)$ and for $e\in P(H^G \times G)$ such that $\partial_{H^G \times G}(e)=((f,v)$\textunderscore$(g,u))$ define $ev(e)=d$ for $d\in P(H)$ with $\partial_H(d)=(f(v)$\textunderscore$g(u))$. Such a $d$ exists by construction of $H^G$, and by construction of $H^G$, $ev$ is a graph morphism.\\
\indent Now let $X$ be a graph with morphism $g:X\times G\rightarrow H$. We show there is a unique morphism $\overline{g}:X\rightarrow H^G$ such that $g=ev(\overline{g}\times 1_G)$.\\
\indent Let $x\in V(X)$ and consider $\lbrace x \rbrace\times G := \lbrace (x,v)| (x,v)\in V(X\times G)$ for some $v\in V(G)\rbrace \subseteq V(X\times G).$ Then $g|_{\lbrace x \rbrace \times G}$ induces a function $f_x:V(G)\rightarrow V(H)$ defined by $f_x(v)=g((x,v))$. Then for $g=ev(\overline{g}\times 1_G)$ to hold, define $\overline{g}(x)=f_x$, and $\overline{g}$ is a vertex set function uniquely determined by $g$.\\
\indent Now let $e\in P(X)$ with $\partial_X(e)=(x_1$\textunderscore$x_2)$. Consider $\lbrace e \rbrace \times G := \lbrace d \in P(X\times G)| \partial_{X\times G}(d)=((x_1,u)$\textunderscore$(x_2,v))$ for some $u,v\in V(G) \rbrace \subseteq P(X\times G)$. Note that for a part $d\in \lbrace e \rbrace \times G$, $\partial_{X\times G}(d)=((x_1,u)$\textunderscore$(x_2,v))$ for some $u,v\in V(G)$ implies there is a part $d^\prime \in P(G)$ such that $\partial_G(d^\prime)=(u$\textunderscore$v)$.\\
\indent For such a $d$, since $g$ preserves incidence, $\partial_H(g(d))=(g(x_1,u)$\textunderscore$g(x_2,v))=(f_{x_1}(u)$\textunderscore$f_{x_2}(v))$.
Then for $g=ev(\overline{g}\times 1_G)$ to hold, define $\overline{g}(e)=a$ where $\partial_{H^G}(a)=(f_{x_1}$\textunderscore$f_{x_2})$ which exists by definition of $H^G$, and is uniquely determined by $g$. Clearly $\overline{g}$ is a morphism in $\mathbf{SiLlGraphs}$ and is uniquely determined by $g$.\\
\indent \textbf{Axiom L3 (``subobject classifier'') fails for} \textbf{SiLlGrphs}: 
Assume a subobject classifier, $\Omega$, exists with morphism $\top:\hat{1}\rightarrow\Omega$. Consider $K_2^c$ having vertices $a$ and $b$ with $!_{K_2^c}:K_2^c\rightarrow\hat{1}$ the unique morphism to the terminal object. Let $i:K_2^c\hookrightarrow K_2$ be inclusion where $K_2$ is $K_2^c$  with edge $e$. Then there exists a unique $\chi_{K_2^c}:K_2\rightarrow\Omega$ such that $K_2^c$ is the pullback of $\top$ and $\chi_{K_2^c}$. Then $\top !_{K_2^c}=\chi_{K_2^c}i$.
\begin{figure}[h]
\centering \includegraphics[scale=.6]{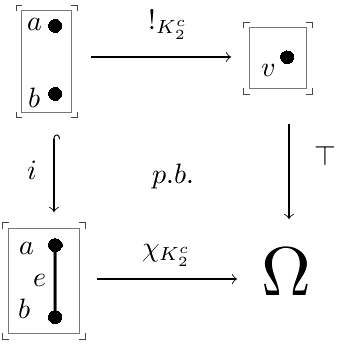}
\caption{A picture for the counterexample to the existence of a subobject classifier in \textbf{SiLlGrphs}.}
\end{figure}\par
Since $!_{K_2^c}(a)=!_{K_2^c}(b)=v$ for $v$ the vertex of $\hat{1}$ and, and since morphisms much map vertices to vertices, $\top(!_{K_2^c}(a))=\top(!_{K_2^c}(b))=\top(v)$. Since $\top !_{K_2^c}=\chi_{K_2^c} i$, $\chi_{K_2^c}(i(a))=\chi_{K_2^c}(i(b))=\top(v)$. Since graphs in $\mathbf{SiLlGraphs}$ are loopless and incidence must be preserved, $\chi_{K_2^c}(e)=\top(v)$.\\
\indent Now consider the pullback of $\chi_{K_2^c}$ and $\top$. It is the vertex induced subgraph of $K_2\times\hat{1}$ on $V(Pb)=\lbrace (c,v)\in V(K_2\times\hat{1})|\chi_{K_2^c}(\pi_{K_2}((c,v)))=\top(\pi_{\hat{1}}((c,v)))\rbrace$. However, since $K_2\times\hat{1}\cong K_2$ and $\chi_{K_2^c}(\pi_{K_2^\ell}((a,v)))=\chi_{K_2^c}(a)=\top(v)=\chi_{K_2^c}(b)=\chi_{K_2^c}(\pi_{K_2}((b,v)))$, $V(Pb)=\lbrace (a,v),(b,v)\rbrace$ and $Pb\cong K_2$. This contradicts that $K_2^c$ is the pullback of $\chi_{K_2^c}$ and $\top$. Hence no subobject classifier exists.\\
\indent \textbf{Axioms L4 (``natural number object'') holds for } \textbf{SiLlGrphs}, \textbf{L5 and L6 (``choice'' and ``two-valued'') fails for} \textbf{SiLlGrphs}: The natural number object of \textbf{SiLlGrphs} is the same as in \textbf{Grphs}, and the counterexample to choice in \textbf{Grphs} applies as well. Since \textbf{SiLlGrphs} does not have a subobject classifier, Axiom 6 does not apply.
\end{proof}

\subsection{Lawvere-type Axioms for \textbf{StGrphs}}

\begin{proposition}
\textbf{StGrphs} satisfies axioms (L1), (L3), and (L4) and does not satisfy axioms (L2), (L5), and (L6).
\end{proposition}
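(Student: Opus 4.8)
The plan is to proceed axiom by axiom, as in the proof for \textbf{Grphs}, organized around one observation: since a strict morphism may not send an edge to a vertex, the terminal object of \textbf{StGrphs} is not $K_1$ but $K_1^\ell$, the one-vertex graph with a loop (every graph has exactly one strict morphism to it, collapsing all vertices to the vertex and all edges to the loop). This single change forces modifications to the product, to the natural number object, and to the subobject classifier.

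For Axiom (L1) I would build all limits and colimits from a terminal object, products, equalizers, and their duals. The terminal object is $\hat 1=K_1^\ell$ and the initial object is $\hat 0=\emptyset$. For products I would \emph{not} reuse the \textbf{Grphs} construction — its projections collapse certain edges to vertices and hence are not strict — but instead take $V(A\times B)=V(A)\times V(B)$ together with, for each pair of edges $\alpha\in E(A)$, $\beta\in E(B)$, an edge $(\alpha,\beta)$, plus a second edge $\overline{(\alpha,\beta)}$ whenever $\alpha,\beta$ are both non-loops, with the same incidences as in the \textbf{Grphs} product and with $\pi_A,\pi_B$ the honest coordinate projections onto $E(A)$, $E(B)$. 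One then checks that a cone of strict morphisms $f\colon X\to A$, $g\colon X\to B$ determines, for each edge $e$ of $X$, a unique edge of $A\times B$ lying over the pair $(f(e),g(e))$ with the correct incidence, which gives the (strict) product universal mapping property. Coproducts are still disjoint unions, and equalizers are still the full subgraph of $A$ on the parts where $f$ and $g$ agree together with their incident vertices; in both cases the structure maps are visibly strict. For coequalizers I would reuse the quotient of $P(B)$ from the \textbf{Grphs} proof, observing that when $f,g$ are strict each relation $g(a_i)=f(a_{i+1})$ forces $a_i$ and $a_{i+1}$ to be of the same kind, so every equivalence class consists entirely of vertices or entirely of edges; hence the quotient is a conceptual graph and the quotient map is strict, and the universal mapping property carries over verbatim.

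For Axioms (L3) and (L4): the subobject classifier $\Omega$ should have vertices $T$ and $F$, a loop at $F$, a non-loop edge joining $T$ and $F$, and — the key point — \emph{two} loops at $T$, say $\ell^{\mathrm{in}}$ and $\ell^{\mathrm{out}}$, with $\top\colon\hat1\to\Omega$ sending the loop of $K_1^\ell$ to $\ell^{\mathrm{in}}$. Given a subgraph $A\subseteq X$, the classifying map (which turns out to be forced, hence unique) sends $V(A)$ to $T$, the rest of $V(X)$ to $F$, every edge of $A$ to $\ell^{\mathrm{in}}$, every edge of $X$ outside $A$ whose endpoints both lie in $V(A)$ to $\ell^{\mathrm{out}}$, every edge with exactly one endpoint in $V(A)$ to the $T$--$F$ edge, and every remaining edge to the loop at $F$; the extra loop at $T$ is precisely what lets the pullback of $\top$ recover the \emph{parts} of $A$ and not merely its vertices, a verification that mirrors the \textbf{Grphs} case. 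For the natural number object I would take $\mathbb N$ to be the graph on vertices $\{0,1,2,\dots\}$ with one loop at each vertex — the edgeless graph no longer works, since $\hat1=K_1^\ell$ carries a loop that $\ulcorner0\urcorner$ must place somewhere — with $\ulcorner0\urcorner\colon\hat1\to\mathbb N$ hitting vertex $0$ and its loop and $\sigma\colon\mathbb N\to\mathbb N$ the shift sending $n$ and its loop to $n+1$ and its loop; the usual recursion gives the unique mediating strict morphism.

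For the failures: for Axiom (L2) I would argue by contradiction from the induced bijections $\hom(X\times A,B)\cong\hom(X,B^A)$, choosing $A=K_2$ and $B=K_1^{2\ell}$, the one-vertex graph with two loops. Probing with $X=K_1$ (so $X\times A$ is two isolated vertices, $K_2^c$) forces $B^A$ to have a single vertex, and probing with $X=K_1^\ell$ (so $X\times A\cong K_2$, which admits exactly two strict morphisms to $K_1^{2\ell}$) forces $B^A$ to have exactly two loops, hence $B^A\cong K_1^{2\ell}$; but then $X=K_2$ gives $|\hom(K_2,B^A)|=|\hom(K_2,K_1^{2\ell})|=2$ on one side, while $K_2\times K_2\cong K_2+K_2$ gives $|\hom(K_2+K_2,K_1^{2\ell})|=2^2=4$ on the other — a contradiction. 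For Axiom (L5) the \textbf{Grphs} counterexample still applies: for the strict morphism $f\colon K_2\to K_1^\ell$ sending the edge to the loop there is no $g\colon K_1^\ell\to K_2$ with $fgf=f$, since in fact $\hom(K_1^\ell,K_2)=\emptyset$ (the loop has no strict image in the loopless $K_2$). For Axiom (L6), $\hat1+\hat1=K_1^\ell+K_1^\ell$ is a two-vertex graph with exactly two edges, one loop on each vertex, whereas $\Omega$ has strictly more edges, so $\Omega\not\cong\hat1+\hat1$ and the subobject classifier is not two-valued. The step I expect to demand the most care is (L1): recognizing that the \textbf{Grphs} product must be replaced and verifying the strict product universal mapping property for the edge-pair construction, together with the observation that strictness keeps the coequalizer's equivalence classes homogeneous; after that, the only real subtlety is fixing the shape of $\Omega$ — exactly two loops at the ``true'' vertex — and the replacement of the edgeless natural number object by the graph with a loop at every vertex.
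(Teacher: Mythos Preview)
Your proposal is correct and follows essentially the same approach as the paper: the same terminal object $K_1^\ell$, the same strict product (the \textbf{Grphs} product with the mixed vertex--edge parts deleted), the same coequalizer observation that strictness keeps equivalence classes homogeneous, the same two-loops-at-``True'' subobject classifier, and the same looped natural number object. The only notable difference is your (L2) counterexample: the paper takes $A$ to be two vertices joined by \emph{two} parallel edges, obtains $B^A$ with four loops, and derives $16\neq 4$ from the test object $X=K_2$, whereas you take $A=K_2$, obtain $B^A\cong K_1^{2\ell}$, and derive $4\neq 2$ from the same $X$ --- a slightly leaner version of the identical probing strategy. Your (L5) example $f\colon K_2\to K_1^\ell$ is not literally the paper's \textbf{Grphs} counterexample (which is the inclusion $K_2^c\hookrightarrow K_2$), but both work for the same reason, namely that the would-be section has no strict image for its edge.
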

\begin{proof}
\textbf{Axiom L1 (``limits'') holds for} \textbf{StGrphs}: As in the proof for \textbf{Grphs}, we will show the existence of limits and colimits by constructing the terminal object, the product, the equalizer, and their duals. It is easily shown that the graph with a single vertex and a loop, $K_1^\ell$, is the terminal object and the empty vertex set (and part set) graph $\emptyset$ is the initial object.\\
\indent For products, since strict morphisms map vertices to vertices the vertex set will be precisely the same as in the \textbf{Grphs} product, $V(A\times B) := V(A)\times V(B)$. However, the edges in the product will be different from the product in \textbf{Grphs}. In order to find the part set of the product of $A$ and $B$, we  take the product in \textbf{Grphs} and delete every part which is labeled $(e,f)$ where exactly one of $e$ or $f$ was a vertex (these edges were projected to a vertex in one graph and an edge in the other which is impossible with strict morphisms). The projection morphisms are based on the first and second coordinate as they were in \textbf{Grphs}.
\begin{figure}[h]
\centering \includegraphics[scale=.7]{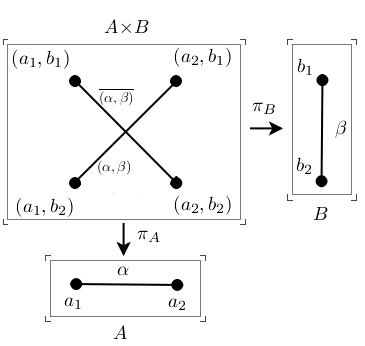}
\caption{An example of the product in \textbf{StGrphs} with pictures.}
\end{figure}\\
\indent The coproduct and equalizer in \textbf{StGrphs} is precisely the same as in \textbf{Grphs}.\\
\indent For the coequalizer in \textbf{StGrphs} we do not need to check again that identifying equivalent edges gives us a graph since the graph object is the same and the morphisms are a subcollection of the morphisms from \textbf{Grphs}. We note, however, that the equivalence classes in \textbf{StGrphs} can contain either edges or vertices but not both.\\
\indent \textbf{Axiom L2 (``exponentiation with evaluation'') fails for} \textbf{StGrphs}: Suppose \textbf{StGrphs} did have exponentiation with evaluation. Then it has a terminal object, products and exponentiation with evaluation, and therefore there is a standard adjoint functor relationship providing a bijection between the set of morphisms $X\times A \rightarrow B$ and the set of morphisms $X \rightarrow B^A$.\par
Let $B$ be a vertex with two loops, and let $A$ be two vertices with two edges between them. As in \textbf{Grphs}, we will try to determine what $B^A$ would have to be, if it existed, by testing various choices of $X$ and counting morphisms.\par
Let's begin with an $X$ as just a vertex. The product $X\times A$ is simply two vertices, and there is only one morphism from this product to $B$ (both vertices go to the only vertex in $B$). This means that there is only one morphism from a single vertex to $B^A$, therefore $B^A$ must have exactly one vertex.\par
Now let's test this object against a different $X$. Let $X$ be a vertex with one loop (note that this is the multiplicative identity, $\hat{1}$, for \textbf{StGrphs}). Hence $X\times A\cong A$.\par 
Both of the vertices in $X\times A$ must go to the only vertex in $B$, and there are two choices where to send each of the edges in $X\times A$ (since both vertices are sent to a single vertex the each edge can go to either loop independent of where the other edge was sent, and since these are strict morphisms an edge must go to an edge), so there are $2*2=4$ morphisms here. Therefore there must be exactly $4$ morphisms from $X$ to $B^A$. We've already determined that $B^A$ has exactly one vertex, and since $X$ is a vertex with a loop, $B^A$ must be a vertex with four loops.\par
Now we will test this object against one more $X$ to derive a contradiction. Let $X$ be two vertices and a single edge between them.\par
\begin{figure}[h]
\centering\includegraphics[scale=.6]{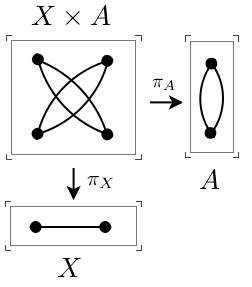}
\caption{A picture of $X\times A$, for the third test choice of $X$.}
\end{figure}
Again, since $B$ has only one vertex, all four vertices of $X\times A$ must be sent there. Which leaves us with four edges and two places where each can be sent (and each choice of where to send an edge is independent of the others), so we have $2^4=16$ morphisms here. But there are only four strict morphisms from $X$ to $B^A$. So exponentiation with evaluation does not exist in \textbf{StGrphs}.\\
\indent \textbf{Axiom L3 (``subobject classifier'') holds for} \textbf{StGrphs}: The subobject classifier in \textbf{StGrphs} is the following graph:\par
\begin{figure}[h]
\centering\includegraphics[scale=.6]{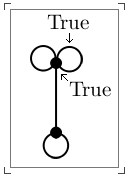}
\caption{A picture of the subobject classifier $\Omega$ in \textbf{StGrphs}}
\end{figure}
together with the canonical strict morphism $\top:\hat{1}\rightarrow \Omega$ from the terminal object $\hat{1}$, which sends the vertex and the loop of $\hat{1}$ to the vertex and loop labeled ``True" above.\par
For any subgraph $A$ of a graph $X$ we must map every vertex of $A$ to the ``True" vertex in $\Omega$ and every edge of the image of $A$ to the ``True" edge in $\Omega$. Every vertex in $V(X)\backslash V(A)$ must then be sent to the other vertex of $\Omega$. This then forces where the image of any other edge not in the image of $A$ is mapped. If the edge has both its incident vertices in the image of $A$ it is mapped to the loop at the ``true" vertex which is not labeled ``true". If the edge has one incident vertex in the image of $A$ and the other incident vertex in $V(X)\backslash V(A)$ it is mapped to the non-loop edge between the two vertices of $\Omega$. If the edge has both its incident vertices in $V(X)\backslash V(A)$ it is mapped to the loop at the vertex which is not labeled ``true". This construction satisfies the universal mapping property for subobject classifiers.\\
\indent \textbf{Axiom L4 (``natural number object'') holds for} \textbf{StGrphs}: The natural number object in \textbf{StGrphs}, $\mathbb{N}$, will be countably many vertices with loops labeled with the natural numbers, coupled with the initial morphism $\ulcorner 0 \urcorner : \hat{1} \rightarrow \mathbb{N}$ defined by mapping the single vertex and loop of the terminal object to the vertex and loop labeled $0$, and successor function $\sigma: \mathbb{N} \rightarrow \mathbb{N}$ where given a vertex with a loop labeled $n$, $\sigma(n)=n+1$. The natural number object works similarly as it did in \textbf{Grphs} only now instead of going through successive vertices, it can only go through successive vertices with loops.\\
\indent \textbf{Axiom L5 (``choice'') fails for} \textbf{StGrphs}: The same counterexample from \textbf{Grphs} applies here.\\
\indent \textbf{Axiom L6 (``two-valued'') fails for} \textbf{StGrphs}: By applying the definition of terminal object and coproduct we have that $1+1$ is a graph with two vertices and a single loop at each vertex. But $\Omega$ has a non-loop edge, so it can not be isomorphic to $\Omega$ and is therefore not two-valued.
\end{proof}
\subsection{Lawvere-type Axioms for \textbf{SiLlStGrphs}}
\begin{proposition}
\textbf{SiLlStGrphs} does not satisfy any of the axioms of \textbf{Sets}.
\end{proposition}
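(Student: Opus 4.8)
The plan is to exploit one structural defect of \textbf{SiLlStGrphs}, the absence of a terminal object, which at once disposes of four of the six axioms, and then to handle (L5) and (L2) by direct counterexamples. To see that no terminal object exists, note that any terminal $T$ would satisfy $V(T)\cong\mathrm{hom}_{\mathbf{SiLlStGrphs}}(K_1,T)$, a singleton, so $|V(T)|=1$; since $T$ is loopless this forces $T\cong K_1$. But a strict morphism must carry the edge of $K_2$ to an edge, and $K_1$ has none, so $\mathrm{hom}_{\mathbf{SiLlStGrphs}}(K_2,K_1)=\emptyset$ is not a singleton, a contradiction. Hence (L1) fails, since the terminal object is the limit of the empty diagram (and colimits fail independently too: the coequalizer of $1_{K_2}$ and the vertex-swapping automorphism of $K_2$ would need a morphism collapsing both vertices of $K_2$, turning its edge into a forbidden loop). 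Because the definitions (see \cite{Goldblatt}) of a subobject classifier, with its map $\top\colon\hat{1}\to\Omega$, and of a natural number object, with its map $\hat{1}\to N$, each invoke the terminal object $\hat{1}$, axioms (L3) and (L4) fail as well; and exactly as in the treatment of \textbf{SiLlGrphs}, (L6) cannot hold, since it asserts $\Omega\cong\hat{1}+\hat{1}$ while neither $\Omega$ nor $\hat{1}$ is available.

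For (L5), taken in the form ``for every $f$ there is a $g$ with $fgf=f$'', I would use $f\colon K_2^c\to K_2$ sending the two isolated vertices of $K_2^c$ to the two endpoints of the edge of $K_2$. A strict morphism $g\colon K_2\to K_2^c$ would have to send the edge of $K_2$ to an edge of $K_2^c$, which is edgeless; hence $\mathrm{hom}_{\mathbf{SiLlStGrphs}}(K_2,K_2^c)=\emptyset$, and $fgf=f$ has no solution. (This $f$ is in fact a non-split epimorphism, since simplicity of the codomain forces any two strict morphisms out of $K_2$ that agree on vertices to agree on the unique edge between their images.)

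The substantive step, which I expect to be the main obstacle, is (L2). First I would record that binary products do exist: the product of loopless simple $G$ and $H$ is their tensor product, with vertex set $V(G)\times V(H)$ and $(g,h)\sim(g',h')$ precisely when $g\sim g'$ in $G$ and $h\sim h'$ in $H$, which is loopless (as $g\not\sim g$), simple, and carries coordinate projections with the universal property for strict morphisms --- concretely it is just the \textbf{StGrphs} product, already simple and loopless when its factors are. So the adjunction $\mathrm{hom}(X\times A,B)\cong\mathrm{hom}(X,B^A)$ must hold if exponentiation with evaluation exists, and I would contradict it with $A=B=K_2$. The key combinatorial fact is that for every loopless simple $G$, a strict morphism $K_3\to G$ must be injective on vertices --- two adjacent vertices cannot share an image, else the image of the joining edge would be a loop --- so $|\mathrm{hom}(K_3,G)|=6\,t(G)$, where $t(G)$ is the number of triangles of $G$; in particular this count is divisible by $6$. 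On the other hand $K_3\times K_2\cong C_6$, a connected $2$-regular bipartite graph on six vertices, so $|\mathrm{hom}(K_3\times K_2,K_2)|=|\mathrm{hom}(C_6,K_2)|=2$, the two proper $2$-colorings of $C_6$. Were $(K_2)^{K_2}$ to exist, the adjunction would yield $|\mathrm{hom}(K_3,(K_2)^{K_2})|=2$, which is not divisible by $6$ --- a contradiction. Hence (L2) fails, and \textbf{SiLlStGrphs} satisfies none of the Lawvere axioms.
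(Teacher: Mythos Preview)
Your proposal is correct and follows essentially the same route as the paper: both arguments kill (L1) by showing no terminal object exists and exhibiting the coequalizer obstruction for $1_{K_2}$ versus the vertex swap on $K_2$, then deduce (L3), (L4), (L6) from the missing terminal object, use the $K_2^c\hookrightarrow K_2$ map for (L5), and refute (L2) with $A=B=K_2$ and the test object $X=K_3$ via $K_3\times K_2\cong C_6$. Your (L2) argument is slightly slicker than the paper's --- you go straight to the divisibility constraint $|\mathrm{hom}(K_3,G)|\in 6\mathbb{Z}$, whereas the paper first pins down $|V(B^A)|=4$ and $|E(B^A)|=2$ using the auxiliary tests $X=K_1$ and $X=K_2$ before reaching the same contradiction phrased as ``two edges cannot support an odd cycle.''
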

\begin{proof}
\textbf{Axiom L1 (``limits'') fails for} \textbf{SiLlStGrphs}: We prove that \textbf{SiLlStGrphs} fails to have limits and colimits by proving that no terminal object exists and coequalizers do not exist.\\
\indent We show no terminal object exists by examining the two cases for a graph $G$ in \textbf{SiLlStGrphs}. Either $G$ has no edges, or $G$ has an edge. If $E(G)=\emptyset$, as strict graph homomorphisms must send edges to edges, a graph that does contain an edge does not admit a strict graph homomorphism to $G$. Hence, $G$ cannot be a terminal object.\\
\indent If there is an edge $e\in E(G)$, since the graphs in \textbf{SiLlStGrphs} are loopless, $\partial_G(e)=(u$\textunderscore$v)$ for some $u,v \in V(G)$ where $u$ and $v$ are distinct. The consider the morphisms from $K_1$, the graph containing only a single vertex, $w$, to $G$. Since $u$ and $v$ are distinct, there are two distinct morphisms, $f,g:K_1\rightarrow G$ defined by $f(w)=u$ and $g(w)=v$. Hence $G$ is not a terminal object since not every graph admits a unique morphism to $G$. Hence no terminal object exists in \textbf{SiLlStGrphs}\\
\indent Assume coequalizers exist. Let $A=B=K_2$, the complete graph on 2 vertices $a$ and $b$ with edge $e$, and consider the following two morphisms $1_{K_2},t_w:A\rightarrow B$ where $1_{K_2}$ is the identity morphism and $t_w$ is the morphism where $t_w(a)=b$, $t_w(b)=a$, and $t_w(e)=e$. The coequalizer, $Coeq$ with morphism $c:B\rightarrow Coeq$ such that $c 1_{K_2}=c t_w$, exists by hypothesis. Since $c 1_{K_2}=c t_w$ and $1_{K_2}(e)=t_w(e)=e$, $c(1_{K_2}(e))=c(e)=c(t_w(e))$, and since morphisms must send edges to edges, $c(e)$ is an edge of $Coeq$.\\
\indent Let $\partial_{Coeq}(c(e))=(u$\textunderscore$v)$ for some $u,v\in V(Coeq)$. Then since morphisms preserve incidence, $c(1_{K_2}(a))=c(a)$ is incident to $c(e)$, and $c(1_{K_2}(b))=c(b)$ is incident to $c(e)$. Hence $c(a)=u$ or $c(a)=v$.\\
\indent Without loss of generality, let $c(a)=u$. Then $c(b)=v$, and since $c(1_{K_2}(a))=c(t_w(a))$, $u=c(a)=c(b)=v$. Hence $e$ is a loop of $Coeq$, which contradicts our hypothesis that $Coeq$ was in \textbf{SiLlStGrphs}. Thus coequalizers do not exist in \textbf{SiLlStGrphs}.\\
\indent \textbf{Axiom L2 (``exponentiation with evaluation'') fails for} \textbf{SiLlStGrphs}: Suppose \textbf{StLlStGrphs} did have exponentiation with evaluation. Then as it has binary products and exponentiation with evaluation, there is a standard adjoint functor relationship providing a bijection between the set of morphisms $X\times A \rightarrow B$ and the set of morphisms $X \rightarrow B^A$.\par
Let $A$ and $B$ be $K_2$, the graph with two vertices joined by a single edge. As in \textbf{Grphs}, we will try to determine what $B^A$ would have to be, if it existed, by testing various choices of $X$ and counting morphisms.\par
We begin with a test object $X=K_1$ or a graph with only a vertex. The product $X\times A$ is simply two vertices, and there are only four morphisms from this product to $B$. This means that there is four morphisms from a single vertex to $B^A$, therefore $B^A$ must have exactly four vertices.\par
Now test this object against a different $X$. Let $X$ be another copy of $K_2$. Then $X\times A$ is two disjoint copies of $K_2$. As edges must be sent to edges by strict morphisms, there are four morphisms from $X\times A$ to $B$ (two choices for each edge of $X \times A$). Hence there are four morphisms from $X$ to $B^A$. As there are no loops, each pair of morphisms from $X=K_2$ to a $B^A$ identifies as edge of $B^A$. Thus $B^A$ has two edges. \par 
Finally, we will test $X=K_3$, the complete graph of three vertices. Now $X\times A$ is an isomorphic copy of $C_6$ or the cycle graph on six vertices. There are two morphisms from $C_6$ to $B$ as a morphism of $C_6$ to $B$ is determined by how a single edge and its two incident vertices are mapped into $B$. Thus there are two morphisms from $X$ to $B^A$. However, as $X$ is isomorphic to $K_3$, this implies $B^A$ contains an odd cycle. However, $B^A$ has only two edges and cannot contain an odd cycle, a contradiction. Hence exponentiation with evaluation does not exist in \textbf{SiLlStGrphs}.\\
\indent \textbf{Axioms L3, L4, L5, and L6 (``subobject classifier'', ``natural number object'', ``choice'' and ``two-valued'') fails for} \textbf{SiLlStGrphs}: As no terminal object exists in \textbf{SiLlStGrphs}, neither does a subobject classifier, nor does a natural number object. Since no subobject classifier exists, axiom 6 does not apply. The same counterexample for choice in \textbf{Grphs} applies here.
\end{proof}
\indent We note that products, coproducts, equalizers, and an initial object exist in \textbf{SiLlStGrphs}, by the following constructions in \textbf{SiStGrphs}.\\

\subsection{Lawvere-type Axioms for \textbf{SiStGrphs}}
\indent As \textbf{SiStGrphs} is the most common category of graphs studied in literature, axioms (L1) and (L2) in the following result are already known \cites{HN2004, Dochtermann}, but we include it for completeness.
\begin{proposition}
\textbf{SiStGrphs} satisfies axioms (L1), (L2), and (L4) and does not satisfy axioms (L3), (L5), and (L6).
\end{proposition}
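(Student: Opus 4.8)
The plan is to treat the six axioms one at a time, exploiting that \textbf{SiStGrphs} is the full subcategory of \textbf{StGrphs} on the simple graphs, together with the fact that ``simplification'' (identifying parallel edges and parallel loops) is left adjoint to this inclusion: any strict morphism from a graph into a simple graph identifies parallel parts and so factors uniquely through the simplification. For \textbf{(L1)} I would take the terminal object to be $\hat 1=K_1^\ell$, the initial object $\hat 0=\emptyset$, and the product of $A$ and $B$ to be the \textbf{StGrphs} product; a short check shows that when $A$ and $B$ are simple this product is already simple (between two distinct product vertices $(a_1,b_1),(a_2,b_2)$ at most one of the parts $(\alpha,\beta)$, $\overline{(\alpha,\beta)}$ can land, and a loop at $(a,b)$ forces loops at $a$ and at $b$), so the \textbf{StGrphs} universal property restricts verbatim. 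Equalizers are the agreement subgraph, the coproduct is the disjoint union, and the coequalizer is the \textbf{StGrphs} coequalizer followed by simplification; its universal property is immediate from reflectivity, which also gives the remaining colimits. Hence limits and colimits exist. (These product, coproduct, equalizer, and initial-object constructions are exactly those promised for \textbf{SiLlStGrphs} at the end of the preceding subsection, restricted to the loopless case.)

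For \textbf{(L2)} I would recall the classical exponential graph: let $V(B^A)$ be the set of \emph{all} functions $V(A)\to V(B)$, and join $f$ and $g$ by an edge --- a loop exactly when $f=g$ --- precisely when for every edge $a_1a_2$ of $A$ both $f(a_1)g(a_2)$ and $f(a_2)g(a_1)$ are edges of $B$, so that the loops of $B^A$ are exactly the homomorphisms $A\to B$. Let $ev:B^A\times A\to B$ be $ev(f,a)=f(a)$ on vertices, sending the edge between $(f,a_1)$ and $(g,a_2)$ to the edge $f(a_1)g(a_2)$. I would check that $ev$ is a well-defined strict morphism and that for every $h:X\times A\to B$ the assignment $\overline h(x)=\bigl(a\mapsto h(x,a)\bigr)$ is the unique strict morphism with $h=ev\circ(\overline h\times 1_A)$; this is the known cartesian closedness of \textbf{SiStGrphs}. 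For \textbf{(L4)} I would take $\mathbb N$ to be countably many vertices $0,1,2,\dots$, each carrying a loop, with $\ulcorner 0\urcorner:\hat 1\to\mathbb N$ selecting $0$ (and its loop) and $\sigma(n)=n+1$. The loops are forced since $\hat 1=K_1^\ell$ carries one: given $\hat 1\xrightarrow{q}A\xrightarrow{t}A$, the vertex $q(\ast)$ is looped and a strict endomorphism sends looped vertices to looped vertices, so $n\mapsto t^{n}(q(\ast))$ is the unique strict morphism $\mathbb N\to A$ realizing the natural-number-object universal property.

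For the failures: \textbf{(L3)} succumbs to a counting obstruction. A subobject classifier $\Omega$ would give $\mathrm{Sub}(X)\cong\hom_{\mathbf{SiStGrphs}}(X,\Omega)$ for every $X$. With $X=K_1$, the discrete one-vertex graph, $\mathrm{Sub}(K_1)=\{\hat 0,K_1\}$ forces $|V(\Omega)|=2$; with $X=\hat 1=K_1^\ell$, $\mathrm{Sub}(K_1^\ell)=\{\hat 0,K_1,K_1^\ell\}$ has three elements while a strict morphism $K_1^\ell\to\Omega$ is merely a choice of looped vertex of $\Omega$, so $\Omega$ would need three looped vertices among only two vertices --- impossible. \textbf{(L5)} fails by the example already used for \textbf{Grphs}: with $f$ the inclusion of two isolated vertices into $K_2$, there is no strict morphism in the reverse direction at all (a strict morphism cannot carry an edge to a vertex), hence none with $fgf=f$. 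Finally, \textbf{(L6)} does not apply, since by \textbf{(L3)} there is no subobject classifier that could be two-valued, exactly as in \textbf{SiLlStGrphs}.

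The step I expect to be the main obstacle is \textbf{(L2)}: one must notice that the vertices of $B^A$ have to be \emph{all} vertex-functions $V(A)\to V(B)$ --- not merely the homomorphisms, unlike the exponential for \textbf{SiLlGrphs} --- which is visible from the fact that $\hat 1\times A\cong A$ while $K_1\times A$ is the discrete graph on $V(A)$, and then verify the uniqueness half of the exponential universal property, precisely the place where the analogous attempts collapsed in \textbf{Grphs} and \textbf{StGrphs}. Everything else is either a direct transfer of a \textbf{StGrphs} construction or a short counting argument.
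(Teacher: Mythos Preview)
Your proposal is correct, and for (L1), (L2), (L4), (L5), (L6) it follows the same line as the paper --- in fact it does more, since the paper simply cites Hell--Ne\v{s}et\v{r}il and Dochtermann for (L1) and (L2) rather than writing out the reflectivity/simplification argument or the exponential-graph construction you sketch; your treatment of (L4) and (L6) is identical to the paper's, and your (L5) observation that in the strict setting there is no $g:K_2\to K_2^c$ at all is a legitimate sharpening of the \textbf{Grphs} counterexample the paper invokes.

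The genuine point of departure is (L3). The paper argues by a direct pullback computation: it takes the inclusion $i:K_2\hookrightarrow K_2^{\ell}$ (where $K_2^{\ell}$ is $K_2$ with a loop at each vertex), shows that any candidate characteristic map $\chi_{K_2}:K_2^{\ell}\to\Omega$ is forced to send both loops $\ell_a,\ell_b$ to $\top(\ell)$, and then checks that the resulting pullback of $\top$ along $\chi_{K_2}$ is all of $K_2^{\ell}$ rather than $K_2$. Your argument is instead a pure counting obstruction: $\mathrm{Sub}(K_1)$ forces $|V(\Omega)|=2$, while $|\mathrm{Sub}(K_1^{\ell})|=3$ exceeds the number of looped vertices a simple two-vertex graph can have. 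Your route is shorter and more transparent, and it isolates exactly where simplicity bites (at most one loop per vertex); the paper's route, on the other hand, exhibits a concrete subobject that fails to be classified and so makes the failure of the pullback condition visible rather than merely the failure of the counting bijection. Both are valid; yours is the slicker of the two.
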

\begin{proof}
Axioms (L1) (``limits'') and (L2) (``exponentiation with evaluation'') hold for \textbf{SiStGrphs} with proofs in \cites{HN2004, Dochtermann}.\\
\indent \textbf{Axiom L3 (``subobject classifier'') fails for} \textbf{SiStGrphs}: Assume a subobject classifier, $\Omega$, exists with morphism $\top:\hat{1}\rightarrow\Omega$. Consider $K_2$ having vertices $a$ and $b$ with an edge $e$ between them with $!_{K_2}:K_2\rightarrow\hat{1}$ the unique morphism to the terminal object. Let $i:K_2\hookrightarrow K_2^\ell$ be inclusion where $K_2^\ell$ is $K_2$ together with a loops $\ell_a$ and $\ell_b$ at vertices $a$ and $b$ respectively. Then there exists a unique $\chi_{K_2}:K_2^\ell\rightarrow\Omega$ such that $K_2$ is the pullback of $\top$ and $\chi_{K_2}$. Then $\top !_{K_2}=\chi_{K_2}i$.
\begin{figure}[h]
\centering \includegraphics[scale=.6]{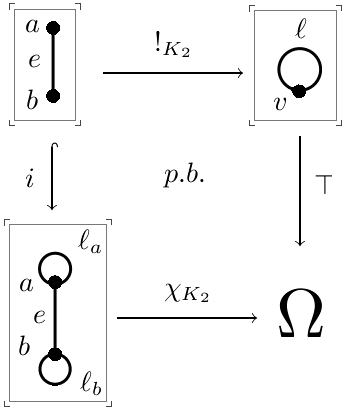}
\caption{A picture for the counterexample to the existence of a subobject classifier in \textbf{SiStGrphs}.}
\end{figure}\\
\indent Since $!_{K_2}(a)=!_{K_2}(b)=v$ for $v$ the vertex of $\hat{1}$ and $!_{K_2}(e)=\ell$ for $\ell$ the loop of $\hat{1}$, and since morphisms much send edges to edges, $\top(!_{K_2}(a))=\top(!_{K_2}(b))=\top(v)$ and $\top(!_{K_2}(e))=\top(\ell)$ where $\partial_{\Omega}(\top(\ell))=(\top(v)$\textunderscore$\top(v))$. Since $\top !_{K_2}=\chi_{K_2} i$, $\chi_{K_2}(i(a))=\chi_{K_2}(i(b))=\top(v)$. Then since morphisms preserve incidence, $\partial_{\Omega}(\chi_{K_2}(\ell_a))=\partial_{\Omega}(\chi_{K_2}(\ell_b))=(\top(v)$\textunderscore$\top(v))$. Since graphs in \textbf{SiStGrphs} can have at most one loop at any vertex, and morphisms must send edges to edges, $\chi_{K_2}(\ell_a)=\chi_{K_2}(\ell_b)=\top(\ell)$.\\
\indent Now consider the pullback of $\chi_{K_2}$ and $\top$. It is the vertex induced subgraph of $K_2^\ell\times\hat{1}$ on $V(Eq)=\lbrace (c,v)\in V(K_2^\ell\times\hat{1})|\chi_{K_2}(\pi_{K_2^\ell}((c,v)))=\top(\pi_{\hat{1}}((c,v)))\rbrace$. However, since $K_2^\ell\times\hat{1}\cong K_2^\ell$ and $\chi_{K_2}(\pi_{K_2^\ell}((a,v)))=\chi_{K_2}(a)=\top(v)=\chi_{K_2}(b)=\chi_{K_2}(\pi_{K_2^\ell}((b,v)))$, $V(Eq)=\lbrace (a,v),(b,v)\rbrace$ and $Eq\cong K_2^\ell$. This contradicts that $K_2$ is the pullback of $\chi_{K_2}$ and $\top$. Hence no subobject classifier exists.\\
\indent \textbf{Axioms L4 (``natural number object'') holds for } \textbf{SiStGrphs}, \textbf{L5 and L6 (``choice'' and ``two-valued'') fails for} \textbf{SiStGrphs}: The natural number object for \textbf{SiStGrphs} is the same as in \textbf{StGrphs}, and the counterexample in \textbf{Grphs} of choice applies here as well. Since \textbf{SiStGrphs} does not have a subobject classifier, axiom 6 does not apply.
\end{proof}
\indent Note that the three axioms (L1)-(L3) define a topos and that \textbf{SiStGrphs} and \textbf{SiLlGrphs} are missing a subobject classifier, while \textbf{SiGrphs}, \textbf{StGrphs}, and \textbf{Grphs} are missing exponentiation with evaluation. We provide a reference table for the Lawvere-type Axioms (Table 1).\\
\begin{table}[h]
\caption{Lawvere-type Axioms for categories of graphs.}
\begin{tabular}{ r || l || l | l | l | l | l | l | }			
  \quad & \textbf{Sets} & \textbf{SiLlStG} & \textbf{SiLlG} & \textbf{SiStG} & \textbf{SiG} & \textbf{StG} & \textbf{G} \\
\hline
\hline
 (L1) Limits & Y & N & Y & Y & Y & Y & Y \\
\hline  
  (Colimits) & Y & N & Y & Y & Y & Y & Y \\
\hdashline
  $\hat{1}$ & Y & N & Y & Y & Y & Y & Y \\
\hline 
  $(\hat{0})$ & Y & Y & Y & Y & Y & Y & Y \\
\hline 
  $\times$ & Y & Y & Y & Y & Y & Y & Y  \\
\hline 
  $(+)$ & Y & Y & Y & Y & Y & Y & Y \\
\hline 
  Equalizer & Y & Y & Y & Y & Y & Y & Y \\
\hline 
  (Coequalizer) & Y & N & Y & Y & Y & Y & Y \\
\hline 
\hline
 (L2) Exp. with Eval. & Y & N & Y & Y & N & N & N \\
\hline 
\hline
 (L3) Subobj. Classifier & Y & N & N & N & Y & Y & Y \\
\hline 
\hline
\hline
 (L4) Nat. Num. Obj. & Y & N & Y & Y & Y & Y & Y \\
\hline 
\hline
 (L5) Choice & Y & N & N & N & N & N & N \\
\hline 
\hline
 (L6) 2-valued & Y & N & N & N & N & N & N \\
\hline 
\end{tabular}
\end{table}
\\
\section{Other Categorial Properties of the Categories of Graphs}
\subsection{Epis and Monos in Categories of Graphs}
\indent We begin our investigation of other properties by first giving characterizations of epimorphisms and monomorphisms in the categories of graphs. These characterizations of epimorphisms and monomorphisms were known in \textbf{Grphs} \cite{KKWil}.
\begin{proposition}
A morphism in \textbf{Grphs}, \textbf{StGrphs}, and \textbf{SiGrphs} is an epimorphism if and only if it is a surjective function of the part sets, and a morphism in the above categories is a monomorphism if and only if it is an injective function of the part sets.
\end{proposition}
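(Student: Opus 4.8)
The plan is to prove both biconditionals by the same two-step pattern: the ``if'' directions are immediate from concreteness, while the ``only if'' directions go by contraposition using small test objects. First I would record the fact underlying the easy directions, namely that in each of \textbf{Grphs}, \textbf{StGrphs}, and \textbf{SiGrphs} a morphism $f$ is completely determined by its underlying part-set function $f_P$, so the functor sending a graph to its part set is faithful. Hence if $f_P$ is injective it is left-cancellable in \textbf{Sets}, and $fg=fh$ forces $f_Pg_P=f_Ph_P$, hence $g_P=h_P$, hence $g=h$, so $f$ is monic; dually, if $f_P$ is surjective then $f$ is epic. (In \textbf{StGrphs} one needs only the routine remark that strictness is inherited by the relevant restrictions and corestrictions.)

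For ``monic $\Rightarrow f_P$ injective'' I would argue contrapositively: suppose $f_P(x)=f_P(y)$ with $x\neq y$ in $P(G)$. If $x$ and $y$ are both vertices, take $T=K_1$ and the two morphisms $T\to G$ selecting $x$ and $y$; they are equalized by $f$ because $f_V(x)=f_V(y)$, yet they are distinct. Otherwise at least one of $x,y$ is an edge, and here I would use $T=K_2$ with vertices $p,q$ and edge $\varepsilon$: one morphism sends $\varepsilon$ to $x$ and $p,q$ to the two endpoints of $x$ (equal when $x$ is a loop), the other sends $\varepsilon$ to $y$ and $p,q$ to the endpoints of $y$, with the convention that if $y$ happens to be a vertex one reads this as $\varepsilon\mapsto\iota_G(y)$ and $p,q\mapsto y$. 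Incidence preservation applied to $f_P(x)=f_P(y)$ forces the two unordered pairs of vertex-images to agree, possibly after transposing $p$ and $q$ in one map, and that is precisely what makes the composites with $f$ coincide while the maps themselves disagree at $\varepsilon$. Since $K_1$ and $K_2$ lie in all three categories, and in the cases that actually arise in \textbf{StGrphs} the edge $\varepsilon$ is sent to an edge (so the test maps are strict), this settles all three categories simultaneously; the remaining work is the short case split and a check that each test map really is a morphism.

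For ``epic $\Rightarrow f_P$ surjective'' I would use the subobject classifier $\Omega$, which exists in all three categories by the earlier propositions. Assume $f_P$ is not surjective. Then the image $\mathrm{im}(f)=\bigl(f_P(P(G)),\,f_V(V(G));\,\partial_H|,\,\iota_H|\bigr)$ is an object of the category (a subgraph of $H$, and simple if $H$ is), and $f$ factors as $f=m\circ e$ with $e:G\to\mathrm{im}(f)$ and $m:\mathrm{im}(f)\hookrightarrow H$ the inclusion; $m$ is a monomorphism by the easy direction, and it fails to be an isomorphism exactly because $f_P$ is not onto. Let $\chi:H\to\Omega$ classify $m$ and let $\top\,!_H:H\to\Omega$ classify $\mathrm{id}_H$. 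Then $\chi\circ f=\chi\circ m\circ e=\top\,!_{\mathrm{im}(f)}\circ e=\top\,!_G=\top\,!_H\circ f$, so these morphisms agree after precomposition with $f$; but $\chi\neq\top\,!_H$, because $[m]$ and $[\mathrm{id}_H]$ are distinct subobjects of $H$ and the subobject classifier furnishes a bijection between subobjects of $H$ and morphisms $H\to\Omega$. Therefore $f$ is not an epimorphism, and the equivalence follows.

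The step I expect to be the main obstacle is this ``only if'' direction for epimorphisms: the argument is short only because it borrows the already-established subobject classifier, and the care goes into checking that $\mathrm{im}(f)$ and its inclusion $m$ really do have the correct type in each category — in particular that $m$ is strict in \textbf{StGrphs} — and that ``subobject classifier'' genuinely supplies a bijection $\mathrm{Sub}(H)\cong\mathrm{Hom}(H,\Omega)$ so that distinct subobjects force distinct classifying maps. If one prefers a purely combinatorial argument, the alternative is a direct construction: when a vertex $z$ of $H$ is missed, adjoin a new vertex $w$ and a $z$-to-$w$ copy of every edge at $z$, then compare the inclusion of $H$ into the enlarged graph with the morphism that reroutes $z$ and its incident edges onto $w$ and its copies; when only an edge $z$ with both endpoints already hit is missed, collapse those endpoints (trivial if $z$ is a loop), make sure a loop is present at the resulting vertex, simple-ify if working in \textbf{SiGrphs}, and compare the morphism sending $z$ to that loop with the one sending $z$ to the vertex. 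Either route closes the proof.
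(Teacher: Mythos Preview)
Your proof is correct, and for the monomorphism direction it is essentially the paper's argument: the paper also probes with $K_2$, and your added case split using $K_1$ when both $x,y$ are vertices is in fact slightly more careful than the paper, which glosses over the point that in \textbf{StGrphs} one cannot send the edge of $K_2$ to a vertex.

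For the epimorphism direction you take a genuinely different route. The paper argues by explicit construction: if a vertex is missed it adjoins a twin vertex and compares the inclusion with the swap; if an edge is missed it adjoins a parallel edge (in \textbf{Grphs} and \textbf{StGrphs}) or, in \textbf{SiGrphs}, maps $H$ to $K_1^\ell$ two ways. Your primary argument instead invokes the already-established subobject classifier $\Omega$ and the classifying maps of $\mathrm{im}(f)\hookrightarrow H$ versus $\mathrm{id}_H$. This is cleaner and handles all three categories uniformly without ad hoc case splits, at the cost of depending on the earlier (L3) result; the paper's construction is more elementary and self-contained but requires separate handling of \textbf{SiGrphs}. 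Both are valid.

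One small caveat on your sketched combinatorial alternative at the end: the ``collapse endpoints, then compare sending $z$ to the loop versus sending $z$ to the vertex'' maneuver does not go through in \textbf{StGrphs}, since a strict morphism cannot send the edge $z$ to a vertex. For that category one would instead need the paper's trick of adjoining a parallel edge and comparing the inclusion with the swap, both of which are strict. This does not affect your main $\Omega$-based argument, which is sound in all three categories.
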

\begin{proof}
It is trivial to show that surjections are always epimorphisms and injections are always monomorphisms, we prove the converses.\\
\indent So let $f:A\rightarrow B$ be an epimorphism in \textbf{Grphs}, and suppose $f$ is not surjective. Then there exists $e\in P(B)\backslash f_P(P(A))$.\\
\indent First suppose $e\in V(B)$. Construct the graph $C$ by appending a vertex $e^\prime$ to $B$ such that $e^\prime$ is adjacent to every vertex $v$ is adjacent to. By construction $B$ is a subgraph of $C$.\\
\indent Since $e\in P(B)\backslash f_P(P(A))$, no edge incident to $e$ is in the image of $f$. Now consider $i:B\rightarrow C$ the inclusion morphism and $g:B\rightarrow C$ defined by $g(u)=i(u)$ for all $u\in V(B)\backslash \lbrace e \rbrace$, $g(e)=e^\prime$, $g(m)=i(m)$ for all edges $m$ not incident to $e$, and for edge $n$ incident to $e$, set $g(n)$ to be the corresponding edge incident to $e^\prime$. This is clearly a morphism (actually it is strict). Then $if=gf$ but $i\neq g$, a contradiction to $f$ being an epimorphism.\\
\indent Now suppose $e$ is an edge of $B$. Construct the graph $C$ by appending an edge $e^\prime$ to $B$ such that $e^\prime$ has the same incidence as $e$. Then by construction $B$ is a subgraph of $C$.\\
\indent Now consider $i:B\rightarrow C$ the inclusion morphism and $g:B\rightarrow C$ defined by $g(u)=i(u)$ for all $u\in P(B)\backslash\{e\}$ and $g(e)=e^\prime$. As the incidence of $e^\prime$ is the same as $e$ this is a morphism (it is actually strict). Then $if=gf$ but $i\neq g$, a contradiction to $f$ being an epimorphism. Hence epimorphisms in \textbf{Grphs} are surjective functions of the corresponding part sets. A similar proof applies or \textbf{StGrphs}.\\
\indent For \textbf{SiGrphs}, a similar proof applies. However, in the case that $e\in P(B)\backslash f_P(P(A))$ is an edge, a different construction is required. Let $C$ be $K_1^\ell$, the graph with one vertex and one loop. Let $h:B\rightarrow C$ be the morphism that maps everything to the vertex, and $g:B\rightarrow C$ be the morphism that maps everything except $e$ to the vertex, and maps $e$ to the edge. Then $hf=gh$ but $h \neq g$, a contradiction.\\
\indent Now let $f:A\rightarrow B$ be an monomorphism in \textbf{Grphs}, and suppose $f$ is not injective. Then there exists $d,e\in P(A)$ such that $f(d)=f(e)$. Consider $g,h:K_2\rightarrow A$ where $g$ maps the edge to $d$, and the vertices to the vertices incident to $d$ and $h$ maps the edge to $e$ and the vertices to the vertices incident to $e$. Then as $f$ must preserve incidence, $fg=fh$ but $g\neq h$, a contradiction to $f$ being a monomorphism. A similar proof applies to \textbf{StGrphs} and \textbf{SiGrphs}
\end{proof}
\indent This changes if we add enough restrictions, as seen in the following proposition.
\begin{proposition}
A morphism in \textbf{SiStGrphs}, \textbf{SiLlGrphs}, and \textbf{SiLlStGrphs} is an epimorphism if and only if it is a surjective function of vertex sets, and a morphism in the above categories is a monomorphism if and only if it is an injective function of the vertex sets.
\end{proposition}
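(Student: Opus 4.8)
The plan is to isolate one structural fact and let it do essentially all the work: in each of \textbf{SiStGrphs}, \textbf{SiLlGrphs}, and \textbf{SiLlStGrphs}, \emph{a morphism is completely determined by its action on vertices}. Once this ``rigidity lemma'' is available, both biconditionals collapse to the standard left/right cancellation of set functions, and the only substantive work is to produce two explicit diagrams witnessing the ``only if'' halves. It is worth noting that, unlike in \textbf{Grphs}, neither implication is formally trivial: the asserted surjectivity and injectivity are on vertex sets rather than part sets, so even the easy directions invoke the lemma.

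First I would prove the rigidity lemma: if $g,h\colon A\to B$ are morphisms in one of these categories with $g_V=h_V$, then $g=h$. Since the two already agree on $V(A)$, only an edge $e\in E(A)$ needs attention; write its incidence as the unordered pair of vertices $x,y\in V(A)$. Then $g_P(e)$ and $h_P(e)$ are parts of $B$ both having incidence the pair $g_V(x),g_V(y)$. If $g_V(x)\neq g_V(y)$, this incidence is off the diagonal, so simplicity of $B$ leaves at most one part with that incidence and $g_P(e)=h_P(e)$. If $g_V(x)=g_V(y)=w$, then in the two loopless categories the only part of $B$ whose incidence is the diagonal pair $w,w$ is $\iota_B(w)$, forcing $g_P(e)=h_P(e)$ --- and in \textbf{SiLlStGrphs} strictness makes this case impossible anyway, since a strict morphism cannot send an edge to a loop when none exists. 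In \textbf{SiStGrphs}, strictness forces both $g_P(e)$ and $h_P(e)$ to be the (unique, by simplicity) loop at $w$. In every case $g_P=h_P$.

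Granting the lemma, the ``if'' directions are one line each. If $f_V$ is injective, then $fg=fh$ gives $f_Vg_V=f_Vh_V$, hence $g_V=h_V$ by left-cancelling the injection $f_V$, hence $g=h$ by the lemma; so $f$ is monic. Dually, if $f_V$ is surjective, then $gf=hf$ gives $g_Vf_V=h_Vf_V$, hence $g_V=h_V$ by right-cancelling the surjection $f_V$, hence $g=h$; so $f$ is epic. For the two remaining implications I argue by contraposition. If $f_V$ is not injective, choose distinct $u,v\in V(A)$ with $f_V(u)=f_V(v)$; the two morphisms $K_1\to A$ selecting $u$ and $v$ respectively are legitimate in all three categories (there are no edges whose images need checking), they are distinct, and they are identified by postcomposition with $f$, so $f$ is not monic.

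If $f_V$ is not surjective, choose $v\in V(B)\setminus f_V(V(A))$ and form $C$ from $B$ by adjoining a ``twin'' $v'$ of $v$ --- a new vertex joined to exactly the neighbours of $v$, carrying a loop precisely when $v$ does --- just as in the vertex case of the proof of the previous proposition. Then $C$ is again an object of the category (only one vertex and at most one edge per neighbour are added, so simplicity persists, and looplessness persists when $B$ is loopless), and the inclusion $i\colon B\hookrightarrow C$ together with the relabelling morphism $g\colon B\to C$ that sends $v\mapsto v'$, each edge or loop at $v$ to the corresponding edge or loop at $v'$, and fixes everything else, are two distinct morphisms. Because $v$ lies outside the image of $f_V$, no part of $A$ is carried by $f$ onto a part of $B$ incident to $v$, while $g$ and $i$ agree on every part not incident to $v$; hence $gf=if$ and $f$ is not epic. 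I expect the only genuine bookkeeping --- and therefore the main obstacle --- to be in this last step: verifying that the twin graph $C$ and the relabelling $g$ really stay inside each of the three categories, in particular that $g$ remains strict (edges to edges, loops to loops) and that no multiple edges or surplus loops are created.
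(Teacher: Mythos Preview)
Your proof is correct and follows essentially the same approach as the paper: the same $K_1$ probe for the monomorphism direction, the same ``twin vertex'' construction for the epimorphism direction, and the same underlying observation that in these three categories a morphism is determined by its vertex map. Your explicit isolation of this last fact as a standalone rigidity lemma is a cleaner organization than the paper's, which proves the equivalent statement inline (and separately) inside the ``surjective $\Rightarrow$ epi'' and ``injective $\Rightarrow$ mono'' arguments.
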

\begin{proof}
Let $f:A\rightarrow B$ be an epimorphism in \textbf{SiLlGrphs}. Suppose $f_V$ is not surjective. Then there exists $v\in V(B)\backslash f_V(V(A))$. Construct the graph $C$ by appending a vertex $v^\prime$ to $B$ such that $v^\prime$ is adjacent to every vertex $v$ is adjacent to. By construction $B$ is a subgraph of $C$.\\
\indent Since $v\in V(B)\backslash  f_V(V(A))$, no edge incident to $v$ is in the image of $f_P$. Now consider $i:B\rightarrow C$ the inclusion morphism and $g:B\rightarrow C$ defined by $g(u)=i(u)$ for all $u\in V(B)\backslash \lbrace v \rbrace$, $g(v)=v^\prime$, $g(e)=i(e)$ for all edges $e$ not incident to $v$, and for edge $f$ incident to $v$, set $g(f)$ to be the corresponding edge incident to $v^\prime$. Then $if=gf$ but $i\neq g$, a contradiction to $f$ being an epimorphism. Hence epimorphisms in \textbf{SiLlGrphs} have surjective vertex set functions. A similar proof applies to \textbf{SiStGrphs} and \textbf{SiLlStGrphs}.\\
\indent Suppose $f:A\rightarrow B$ is a morphism in \textbf{SiLlGrphs} and $f_V$ is surjective. Consider morphisms $h,k:B\rightarrow C$ such that $hf=kf$. Since $f_V$ is surjective and $h_Vf_V=k_Vf_V$, $h_V=k_V$. So if $h\neq k$ there exists an edge $e\in E(B)$ such that $h(e)\neq k(e)$, even though $h_V=k_V$. There are two possibilities for $h(e)$ and $k(e)$, either as different vertices or edges.\\
\indent If $h(e)$ and $k(e)$ are different vertices, as $h_V=k_V$, the incident vertices to $e$ in $B$ are both mapped to the same vertex, so for incidence to hold $h(e)$ and $k(e)$ would also be mapped to that vertex and $h(e)=k(e)$. If $h(e)$ and $k(e)$ are mapped to different edges, since $h_V=k_V$ they must have the same incidence. Since graphs in \textbf{SiLlGrphs} are simple and loopless, $h(e)=k(e)$. Hence both possibilities lead to contradictions.
A similar proof holds for \textbf{SiLlStGrphs}, and for \textbf{SiStGrphs} a third possiblity arises for $h(e)$ and $k(e)$ to be different loops. However, in this case, as simple graphs have only one loop and $h_V=k_V$, they must be mapped to the same loop.\\
\indent Now let $f:A\rightarrow B$ be a monomorphism in \textbf{SiLlGrphs}. Suppose $f_V$ is not injective. Then there exists $u,v\in V(A)$ such that $f(u)=f(v)$. Then consider the two morphisms $j,k:K_1\rightarrow A$ defined by $j$ mapping the single vertex of $K_1$ to $u$, and $k$ mapping the single vertex of $K_1$ to $v$. Clearly $fj=fk$ but $j\neq k$ a contradiction to $f$ being a monomorphism. Hence monomorphisms in \textbf{SiLlGrphs} have injective vertex set functions. A similar proof applies to \textbf{SiStGrphs} and \textbf{SiLlStGrphs}.\\
\indent Suppose $f:A\rightarrow B$ is a morphism in \textbf{SiLlGrphs} and $f_V$ is injective. Consider morphisms $j,k:C\rightarrow A$ such that $fj=fk$. Since $f_V$ is injective, $j_V=k_V$. Thus if there exists $e\in P(C)$ such that $j(e)\neq k(e)$, then $e$ must be an edge of $C$. Since $f_V$ is injective and $fj=fk$, $j(e)$ and $k(e)$ cannot both be vertices in $A$. Without loss of generality assume $k(e)$ is an edge.\\
\indent Note that $j(e)$ cannot be a vertex of $A$, for both incident vertices of $e$ in $C$ are mapped to $j(e)$ as well. Then since $j_V=k_V$, morphisms preserve incidence, and the graphs are loopless, $k(e)$ is mapped to a vertex. Hence $j(e)$ must be an edge of $A$. Since $j_V=k_V$, $j(e)$ and $k(e)$ have the same incident vertices, and since the graphs are simple, $k(e)=j(e)$, a contradiction. Hence $f$ is a monomorphism.\\
\indent Similar proofs apply in \textbf{SiStGrphs} and \textbf{SiLlStGrphs}. 
\end{proof}

\subsection{Free Objects and Cofree Objects in the Categories of Graphs}
\indent We now consider the underlying vertex set functor $|-|_V:\mathbf{-Grphs}\sim\!\rightarrow\mathbf{Sets}$ defined for each of the categories of graphs, where  $|G|_V=V(G)$ for a graph $G$ and $|f|_V=f_V$ for a morphism $f$. We define the free graph functor $F(-):\mathbf{Sets}\sim\!\rightarrow\mathbf{-Grphs}$ to be such that $F(-)$ is left adjoint to $|-|_V$. We similarly define the cofree graph functor $C(-):\mathbf{Sets}\sim\!\rightarrow\mathbf{-Grphs}$ to be such that $|-|_V$ is left adjoint to $C(-)$. The following two proposition characterizes the free and cofree objects in the categories of graphs.
\begin{proposition}
Given a set $X$, in all six categories of graphs, the free graph on $X$ is just the empty edge graph on the vertex set $X$, and the free objects are the empty edge set graphs, denoted $K_n^c$ for finite vertex sets.
\end{proposition}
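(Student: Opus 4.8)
The plan is to take, for each set $X$, the edgeless graph $K_X^c$ on vertex set $X$ — that is, $P(K_X^c)=V(K_X^c)=X$ with $\iota$ the identity inclusion and $\partial_{K_X^c}=\underline{\Delta}$ — and to verify directly that it satisfies the universal mapping property characterizing the value at $X$ of a left adjoint to $|-|_V$. First I would record the trivial but crucial observations that $E(K_X^c)=P(K_X^c)\setminus\iota(V(K_X^c))=\emptyset$, so that $K_X^c$ is simple and loopless and hence is simultaneously an object of all six categories of graphs; that $K_X^c=\hat{0}$ when $X=\emptyset$; and that $K_X^c$ is the graph written $K_n^c$ when $|X|=n$. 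Taking the unit $\eta_X\colon X\to|K_X^c|_V=X$ to be the identity map, the entire statement comes down to a single hom-set bijection.

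The key step is to show that for every graph $H$ — in whichever of the six categories one is working — and every function $g\colon X\to V(H)$, there is a unique graph morphism $\widehat{g}\colon K_X^c\to H$ with $\widehat{g}_V\circ\eta_X=g$. I would simply define $\widehat{g}$ on the part set $P(K_X^c)=X$ to be $g$ itself, regarded as landing in $V(H)\subseteq P(H)$. Then $\widehat{g}$ carries vertices to vertices, and incidence is preserved for free: for $e\in P(K_X^c)$ one has $\partial_{K_X^c}(e)=(e$\textunderscore$e)$, while $\widehat{g}(e)=g(e)\in V(H)$ forces $\partial_H(\widehat{g}(e))=\underline{\Delta}(g(e))=(g(e)$\textunderscore$g(e))$ by the axiom $\partial_H\iota_H=\underline{\Delta}$. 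Uniqueness is immediate, since any morphism that agrees with $g$ after post-composition with $\eta_X$ must equal $g$ on all of $P(K_X^c)=V(K_X^c)=X$. Finally $\widehat{g}$ is vacuously a strict morphism because $E(K_X^c)=\emptyset$, so the bijection $\hom(K_X^c,H)\cong\hom_{\mathbf{Sets}}(X,V(H))$ holds verbatim in \textbf{StGrphs}, \textbf{SiStGrphs}, and \textbf{SiLlStGrphs} as well.

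It then remains to note that these bijections are natural in $H$ (and in $X$) — the routine verification that $h\circ\widehat{g}=\widehat{|h|_V\circ g}$ for every $h\colon H\to H'$, together with the analogous compatibility under functions between sets $X$ — which promotes the family of universal arrows to an adjunction $F\dashv|-|_V$ with $F(X)=K_X^c$. Since a left adjoint, when it exists, is determined up to natural isomorphism, the free objects of each category are exactly the graphs isomorphic to some $K_X^c$, i.e. the edgeless (empty edge set) graphs, which is the last clause of the statement.

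I do not expect a genuine obstacle: the proposition is the formal shadow of the fact that a morphism out of an edgeless graph is nothing more than a function on its vertex set. The one point that deserves care is the uniformity claim — that a single construction of $K_X^c$ and of $\widehat{g}$ serves all six categories at once — and this rests only on the two remarks that an edgeless graph is automatically simple and loopless and that the strict edge condition is vacuous when the domain has no edges.
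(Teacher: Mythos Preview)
Your proposal is correct and follows essentially the same approach as the paper: take the unit $\eta_X=1_X$, define $\widehat{g}$ to be $g$ on the part set $P(K_X^c)=X$, and observe that incidence preservation and strictness are vacuous because $E(K_X^c)=\emptyset$. You are in fact more careful than the paper on two points the authors leave implicit --- the uniformity of the construction across all six categories (via the remark that $K_X^c$ is simple and loopless and that strictness is vacuous), and the naturality needed to promote the universal arrows to an adjunction --- but these are elaborations of the same argument rather than a different route.
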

\begin{proof}
Let $X$ be a set in \textbf{Sets} with $n$ elements, and let $F(X)=K_n^c$ where $V(F(X))=X$. Now let $G$ be a graph such that there is a function $g:X\rightarrow |G|_V$. We show there is a unique graph morphism $\overline{g}:F(X)\rightarrow G$ such that $g=|\overline{g}|_Vu$ for some $u:X\rightarrow |F(X)|_V$. Note that $|F(X)|_V=V(F(X))=X$. Hence define the function $u:X\rightarrow |F(X)|_V$ as $u=1_X$.\\
\indent Let $\overline{g}$ be the pair of function maps $\overline{g}_V=g$ and $\overline{g}_E=g$. Since there are no edges in $F(X)$, incidence is clearly preserved (and the morphism is strict). Then since $g=|\overline{g}|_Vu$ must hold, $u=1_X$, and $|\overline{g}|_V=\overline{g}_V=g$, $\overline{g}$ is uniquely determined by $g$. 
\end{proof}
\begin{proposition}
Given a set $X$,
\begin{enumerate}
\item in \textbf{Grphs},  \textbf{SiGrphs}, and  \textbf{SiLlGrphs} the cofree graph on $X$ is the complete graph on the vertex set $X$ and the cofree objects are the complete graphs with at least one vertex, denoted $K_n$ for finite vertex sets with $n\geq 1$,
\item in \textbf{StGrphs}, \textbf{SiStGrphs} the cofree graph on $X$ is a complete graph with a single loop on each vertex having the vertex set $X$, and the cofree objects are the complete graphs with a loop at each vertex with at least one vertex, denoted $K_n^\ell$ for finite vertex sets with $n\geq 1$,
\item in \textbf{SiLlStGrphs} no cofree graph exists. 
\end{enumerate}
\end{proposition}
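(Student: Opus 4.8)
The plan is to unwind the adjunction $|-|_V \dashv C(-)$ into a concrete universal mapping property: a cofree graph $C(X)$ on a set $X$ carries a counit function $\varepsilon_X : V(C(X)) \to X$ such that for every graph $G$ and every function $f : V(G) \to X$ there is a unique morphism $\overline{f} : G \to C(X)$ (strict, in the strict categories) with $\varepsilon_X \circ \overline{f}_V = f$. For parts (1) and (2) I would exhibit the candidate explicitly --- $C(X) = K_{|X|}$, respectively $K_{|X|}^\ell$, with $V(C(X)) = X$ and $\varepsilon_X = 1_X$ --- and verify this universal property by hand; for part (3) I would argue by contradiction, reducing to an earlier non-existence result.

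For part (1): the equation $\varepsilon_X \circ \overline{f}_V = f$ with $\varepsilon_X = 1_X$ forces $\overline{f}_V = f$, so only the action on edges is at issue. An edge $e$ of $G$ with incident vertices $u,v$ must be sent to a part of $K_{|X|}$ incident precisely to $f(u)$ and $f(v)$. If $f(u) \neq f(v)$, the unique such part is the single edge of $K_{|X|}$ joining them; if $f(u) = f(v)$, then since $K_{|X|}$ is loopless the only part incident only to $f(u)$ is the vertex $\iota_{C(X)}(f(u))$, and a general (non-strict) morphism is permitted to send $e$ there. Either way $\overline{f}(e)$ is forced, so $\overline{f}$ is unique, and one checks directly that the map so defined is a graph morphism. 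Since $K_{|X|}$ is simple and loopless, this argument applies verbatim in \textbf{Grphs}, \textbf{SiGrphs}, and \textbf{SiLlGrphs}. Part (2) is the same, except that strictness now forbids collapsing an edge onto a vertex, so in the case $f(u) = f(v)$ the edge $e$ must be sent to a loop at $f(u)$; this forces $C(X)$ to have exactly one loop at every vertex, i.e. $C(X) = K_{|X|}^\ell$, and since $K_{|X|}^\ell$ is again simple the construction lives in both \textbf{StGrphs} and \textbf{SiStGrphs}. In every case the image of $C(-)$ on finite nonempty sets is precisely $\{K_n : n \geq 1\}$ (respectively $\{K_n^\ell : n \geq 1\}$), with the empty set necessarily sent to $\hat{0}$.

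For part (3), suppose toward a contradiction that $|-|_V$ on \textbf{SiLlStGrphs} had a right adjoint $C(-)$, and evaluate it at a one-element set $\mathbf{1}=\{\ast\}$. For every graph $G$ there is exactly one function $V(G)\to\mathbf{1}$, so the condition $\varepsilon_{\mathbf{1}}\circ\overline{f}_V = f$ holds automatically and the universal property collapses to the statement that every graph $G$ admits a unique morphism $G\to C(\mathbf{1})$. Hence $C(\mathbf{1})$ would be a terminal object of \textbf{SiLlStGrphs}. But the proposition on \textbf{SiLlStGrphs} above establishes that this category has no terminal object, a contradiction; therefore no cofree functor exists on \textbf{SiLlStGrphs}.

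The bulk of the work --- the verification of the universal mapping properties in (1) and (2) --- is routine; the one point needing care is pinning down exactly how much structure $C(X)$ must carry: \emph{enough} edges (and, in the strict case, loops) that every vertex-function extends, and \emph{at most one} edge between each pair of vertices and at most one loop at each vertex so that the extension is unique. This dichotomy is governed entirely by whether the ambient morphisms may send an edge to a vertex. The only genuinely essential idea is the reduction in part (3): a cofree object on a singleton must be terminal, which lets us invoke the previously established absence of a terminal object in \textbf{SiLlStGrphs} rather than reproving it.
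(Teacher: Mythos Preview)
Your arguments for parts (1) and (2) are essentially identical to the paper's: both set $V(C(X))=X$ with counit $1_X$, observe that $\overline{f}_V=f$ is forced, and then check that the incidence-preservation requirement pins down the image of every edge uniquely (using that edges may collapse to vertices in the non-strict categories, and must go to the unique loop in the strict ones).

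For part (3) your route differs from the paper's. The paper argues directly: taking $X=\{x\}$ and the constant function $g:V(K_2)\to X$, any $\overline{g}:K_2\to C(X)$ must send the edge of $K_2$ to an edge (strictness), whose endpoints $a',b'$ are distinct (looplessness); swapping $a'$ and $b'$ yields a second lift $\overline{h}\neq\overline{g}$ with $c|\overline{h}|_V=g$, contradicting uniqueness. Your argument instead observes that when $X$ is a singleton the compatibility condition $\varepsilon_X\circ\overline{f}_V=f$ is vacuous, so the universal property of $C(\{\ast\})$ is exactly that of a terminal object, and then you invoke the earlier proposition that \textbf{SiLlStGrphs} has no terminal object. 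This is correct---it is the standard fact that right adjoints preserve terminal objects, read contrapositively---and it is cleaner than the paper's proof, which in effect re-derives a special case of the nonexistence of a terminal object. The paper's approach is more self-contained; yours makes better use of work already done.
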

\begin{proof}
\textbf{Part 1}: Let $X$ be a set in \textbf{Sets} and define $C(X)$ as the complete graph with the vertex set $V(C(X))=X$. Let $G$ be a graph in \textbf{Grphs} with set function $g:|G|_V\rightarrow X$. We show that there is a unique graph morphism $\overline{g}:G\rightarrow C(X)$ such that $g=c |\overline{g}|_V$ for some set function $c:|C(X)|_V\rightarrow X$  Note that $|C(X)|_V=V(C(X))=X$. Hence we define $c$ as $1_X$.\\
\indent For $g=1_X|\overline{g}|_V$ to hold, $\overline{g}_V=g$ is uniquely determined. Then let $e$ be an edge of $G$ incident to vertices $x,y\in V(G)$ where $x$ and $y$ are not necessarily distinct. Then since graph morphisms must preserve incidence, for $\overline{g}$ to be a morphism, $\overline{g}(e)$ must map to the part $e^\prime$ of $C(X)$ incident to vertices $g(x)$ and $g(y)$. By the definition of $C(X)$ such a part $e^\prime$ exists, even if it is a vertex. Hence $\overline{g}$ exists and is uniquely determined by $g$. A similar proof applies for \textbf{SiGrphs} and \textbf{SiLlGrphs}.\\
\indent \textbf{Part 2}: Let $X$ be a set in \textbf{Sets} and define $C(X)$ as the complete graph with a loop at every vertex with the vertex set $V(C(X))=X$. Let $G$ be a graph in \textbf{StGrphs} with set function $g:|G|_V\rightarrow X$. We show that there is a unique strict graph homomorphism $\overline{g}:G\rightarrow C(X)$ such that $g=c|\overline{g}|_V$ for some set function $c:|C(X)|_V\rightarrow X$  Note that $|C(X)|_V=V(C(X))=X$. Hence we define $c$ as $1_X$.\\
\indent For $g=1_X|\overline{g}|_V$ to hold, $\overline{g}_V=g$ is uniquely determined. Then let $e$ be an edge of $G$ incident to vertices $x,y\in V(G)$ where $x$ and $y$ are not necessarily distinct. Then since strict graph homomorphisms must send edges to edges and preserve incidence, for $\overline{g}$ to be a strict graph homomorphism, $\overline{g}(e)$ must map to the edge $e^\prime$ of $C(X)$ incident to vertices $g(x)$ and $g(y)$. By the definition of $C(X)$ such an edge $e^\prime$ exists. Hence $\overline{g}$ exists and is uniquely determined by $g$. A similar proof applies for \textbf{SiStGrphs}. \\
\indent \textbf{Part 3}: Assume cofree graphs exist. Let $X=\lbrace x \rbrace$ in \textbf{Sets} and $C(X)$ be the cofree graph associated with $X$ and function $c:|C(X)|_V\rightarrow X$. Consider $K_2$ with vertices $a$ and $b$ and edge $e$ and set function $g:|K_2|_V\rightarrow X$ defined by $g(a)=g(b)=x$. Then since $C(X)$ is a cofree object, there is a unique morphism in \textbf{SiLlStGrphs}, $\overline{g}:K_2\rightarrow C(X)$, such that $g=c|\overline{g}|_V$. Since $\overline{g}$ is a strict graph homomorphism, it must send $e$ to an edge in $C(X)$. Thus $\overline{g}(e)=f$ for some $f\in E(C(X))$. Since graph homomorphisms preserve incidence, $f$ is incident to $\overline{g}(a)=|\overline{g}|_V(a)=a^\prime$ for some vertex $a^\prime\in V(C(X))$ and $\overline{g}(b)=|\overline{g}|_V(b)=b^\prime$ for some vertex $b^\prime \in V(C(X))$.\\
\indent Since $C(X)$ is loopless, $a^\prime \neq b^\prime$. Then since $g=c|\overline{g}|_V$, $g(a)=c(|\overline{g}|_V(a))=x$ and $g(b)=c(|\overline{g}|_V(b))=x$, $c(a^\prime)=c(b^\prime)=x$. Now consider the morphism $\overline{h}:K_2\rightarrow C(X)$ defined by $\overline{h}(e)=f$, $\overline{h}(a)=b^\prime$ and $\overline{h}(b)=a^\prime$. Clearly $\overline{h}\neq \overline{g}$. Then $c(|\overline{h}|_V(a))=c(b^\prime)=x=g(a)$ and $c(|\overline{h}|_V(b))=c(a^\prime)=x=g(b)$. Thus $c|\overline{h}|_V=g$, and $\overline{g}$ is not unique, which is a contradiction to the universal mapping property of the cofree object.
\end{proof}

\subsection{Projective and Injective Objects in the Categories of Graphs}
\indent The definitions for free objects and cofree objects are dependent on the category being a concrete category. We move on to other categorial constructions that are defined for any abstract category. We start with the injective objects and projective objects.
\begin{proposition}
\begin{enumerate}
\item In \textbf{Grphs}, \textbf{SiGrphs}, and \textbf{StGrphs}, all graphs with at most one edge per component are precisely the projective objects, and there are enough projective objects.
\item In \textbf{SiLlGrphs}, \textbf{SiStGrphs}, and \textbf{SiLlStGrphs}, the projective objects are precisely the free objects, and there are a enough projective objects.
\end{enumerate}
\end{proposition}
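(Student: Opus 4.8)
The plan is to reduce projectivity in each case to the epimorphism characterizations of the two preceding propositions: in \textbf{Grphs}, \textbf{SiGrphs}, \textbf{StGrphs} the epimorphisms are the part-surjective morphisms, while in \textbf{SiLlGrphs}, \textbf{SiStGrphs}, \textbf{SiLlStGrphs} they are the vertex-surjective ones. One small observation is needed for Part 1: a part-surjective $e:B\rightarrow C$ is automatically vertex-surjective, since a part of $B$ lying over a vertex $w$ of $C$ is either a vertex over $w$ or an edge both of whose endpoints lie over $w$.

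For Part 1, I would first check that $\hat{1}=K_1$ and $K_2$ are projective. Given an epimorphism $e:B\rightarrow C$ and a morphism $f:K_2\rightarrow C$ with edge $\varepsilon$, choose a part $p$ of $B$ over $f(\varepsilon)$; incidence preservation forces the unordered pair of $e$-images of the endpoints of $p$ to be $\{f(a),f(b)\}$, so matching them up gives a lift sending $\varepsilon\mapsto p$ and the two vertices of $K_2$ to the endpoints of $p$ (for $\hat{1}$ it is the same argument with no edge; in \textbf{StGrphs} such a $p$ is automatically an edge). Since coproducts of projectives are projective, every coproduct of copies of $\hat{1}$ and $K_2$ — equivalently every graph each of whose components is a single vertex or a single non-loop edge — is projective. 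Conversely, for any graph $P$ I would form the canonical epimorphism $q:Q_P\rightarrow P$, where $Q_P$ is the coproduct of one copy of $K_2$ for each edge of $P$ (its edge sent onto that edge, its two vertices onto the endpoints, both mapping to the base vertex when the edge is a loop) together with one copy of $\hat{1}$ for each vertex of $P$; this $q$ is part-surjective, hence an epimorphism, and is strict when we work in \textbf{StGrphs}. If $P$ is projective then $1_P$ lifts through $q$ to a section $s:P\rightarrow Q_P$, which is a monomorphism and hence injective on parts; since $q$ respects the vertex/edge partition and $qs=1_P$, the map $s$ identifies $P$ with a subgraph of $Q_P$. As any subgraph of a coproduct of copies of $\hat{1}$ and $K_2$ is again such a coproduct (inside a $K_2$-component one can keep only $\emptyset$, one vertex, two vertices, or all of $K_2$), $P$ is a coproduct of copies of $\hat{1}$ and $K_2$; and $q$ itself shows there are enough projectives.

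Part 2 is cleaner. The free object $F(X)$ is the edgeless graph on $X$ by the earlier proposition; it is projective since a morphism $F(X)\rightarrow C$ is merely a vertex map and, as $e_V$ is surjective, lifts vertexwise through any epimorphism $e:B\rightarrow C$ with no incidence conditions to verify. Conversely, if $P$ carries an edge $\varepsilon$, the identity-on-vertices morphism $\eta:F(V(P))\rightarrow P$ is vertex-surjective, hence an epimorphism, but a section $P\rightarrow F(V(P))$ would have to send $\varepsilon$ to a part of the edgeless graph $F(V(P))$ — necessarily a vertex — whose $\eta$-image is then a vertex of $P$ rather than $\varepsilon$, which is impossible. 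Thus in these three categories the projectives are exactly the edgeless graphs, i.e.\ the free objects, and $\eta:F(V(G))\rightarrow G$ witnesses that there are enough projectives.

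The step I expect to demand the most care is pinning down the exact class in Part 1: one has to notice that the single-loop graph $K_1^\ell$, although its one component carries a single edge, is \emph{not} projective — the collapsing map $K_2\rightarrow K_1^\ell$ is an epimorphism admitting no section — so ``at most one edge per component'' must be read with loops excluded, equivalently the projectives are precisely the coproducts of $\hat{1}$ and $K_2$. Verifying that the retract-of-$Q_P$ argument genuinely rules out loops, and that $Q_P$ and $q$ remain strict morphisms of conceptual graphs in \textbf{StGrphs}, is where the real work lies; the rest is bookkeeping with the two epimorphism characterizations.
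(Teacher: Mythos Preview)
Your proposal is correct, and for Part 2 it is essentially the paper's argument: both of you cover an arbitrary $P$ by the edgeless graph on $V(P)$, observe this is a vertex-surjective epimorphism, and note that no section can exist once $P$ carries an edge.

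For Part 1 you take a genuinely different route. The paper argues by direct case analysis on the components of a candidate projective $A$ (isolated vertex, vertex with loop, two vertices with an edge) and, for the converse, ``splits'' any vertex meeting two edges to manufacture an epimorphism through which $1_G$ cannot lift. You instead build the single canonical cover $Q_P=\coprod_{e\in E(P)}K_2\;\amalg\;\coprod_{v\in V(P)}\hat{1}$, use the section $s:P\to Q_P$ coming from projectivity, and read off that $P$ embeds as a subgraph of a disjoint union of $K_2$'s and $\hat{1}$'s. Your approach is cleaner categorically and yields ``enough projectives'' for free via $q:Q_P\to P$; the paper's splitting construction is more hands-on but requires separate bookkeeping for each component type.

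More importantly, your observation about $K_1^\ell$ is on target and catches a genuine slip in the paper. The map $K_2\to K_1^\ell$ collapsing both vertices is part-surjective, hence an epimorphism in all three categories, and it admits no section: any $s:K_1^\ell\to K_2$ must send the loop to a part of $K_2$ with incidence $(s(v)\text{\textunderscore}s(v))$, and the only such part is the vertex $s(v)$ itself, whence $qs(\ell)=v\neq\ell$. So $K_1^\ell$ is \emph{not} projective, and the phrase ``at most one edge per component'' must indeed be read as ``each component is $\hat{1}$ or $K_2$''. The paper's case analysis goes wrong exactly where you suspected: when lifting a loop it asserts that a part-surjective $H\to G$ has a \emph{loop} in $H$ over any loop of $G$, but surjectivity on parts only supplies an edge, possibly non-loop, over that loop---and then the lift of the single vertex cannot match both endpoints. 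Your retract argument sidesteps this entirely, since $Q_P$ is loopless by construction and an injective-on-parts morphism cannot send a loop to a non-loop edge.
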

The projective objects in \textbf{Grphs} are found in \cite{KKWil}, we provide an alternate proof.
\begin{proof}
\textbf{Part 1}: First note that if $f:A\rightarrow B$ is an epimorphism in \textbf{Grphs} then $f$ is a surjective map of the associated part sets. So let $A$ be a graph with at most one edge in each component with morphism $h:A \rightarrow G$ for some graph $G$. Let $H$ be a graph with an epimorphism $g:H\rightarrow G$.\\
\indent Consider a component of $A$. If the component is composed of a single vertex, $v$ without a loop, then since $g_P$ is a surjection, there exists $v^\prime \in P(H)$ such that $h(v^\prime)=g(v)$. If the component is composed of a single vertex $v$ with a loop $\ell$, and under $g$ the loop is identified with $g(v)$, then as before there exists $v^\prime \in P(H)$ such that $h(v^\prime)=g(v)=g(\ell)$. If the loop is not identified with the vertex, then as $h$ is a surjection, there exists a loop $\ell^\prime \in P(H)$ such that $g(\ell)=h(\ell^\prime)$.\\
\indent If the component has an edge $e$, and two vertices $u$ and $v$, and under $g$ the two vertices are identified with the edge, then there exists $v^\prime \in P(H)$ such that $h(v^\prime)=g(v)=g(u)=g(e)$. If under $g$ the two vertices are identified, and the edge is sent to a loop, then there exists $v^\prime, \ell^\prime \in P(H)$ such that $h(v^\prime)=g(v)=g(u)$ and $h(\ell^\prime)=g(e)$. If under $g$ the two vertices are not identified, then $e$ is mapped to the incident edge of their images. Then since $h$ is a surjection, there exists $u^\prime, v^\prime, e^\prime \in P(H)$ such that $h(v^\prime)=g(v)$, $h(u^\prime)=g(u)$, and $h(u^\prime)=g(u)$. Then the definition for $\overline{g}$ such that $h\overline{g}=g$ is obvious, and since each component can be mapped independently from other components, this is a graph morphism.\\
\indent Now suppose $G$ is a graph with at least two edges in some component, called $e$ and $f$. Consider the graph $H$ created by ``splitting'' $G$ at each vertex incident to more than two edges. That is, for every vertex $v$ incident to at least two edges $a$ and $b$, create $v_1$ and $v_2$ in $H$ such that $a$ is incident to $v_1$ and $b$ is incident to $v_2$ with no edge between $v_1$ and $v_2$, with $v_1$ and $v_2$ replacing $v$. Then $H$ admits an epimorphism $h$ to $G$ by re-identifying these split vertices.\\
\indent However, with morphism $1_G:G\rightarrow G$, $G$ does not admit a morphism $\overline{g}$ to $H$ such that $h\overline{g}=1_G$ as edges $e$ and $f$ must be sent to the same component to preserve incidence. Hence $G$ is not projective.\\
\indent Let $G$ be a graph in \textbf{Grphs}. To show there are enough projectives, we show there is a projective object $H$ and an epimorphism $e:H\rightarrow G$. As above, construct $H$ by ``splitting'' $G$. Then $H$ admits an epimorphism to $G$, and since $H$ does not have more than one edge per component, $H$ is projective. A similar proof applies to \textbf{SiGrphs} and \textbf{StGrphs}\\
\indent \textbf{Part 2}: First note that if $f:A\rightarrow B$ is an epimorphism in \textbf{SiLlGrphs} then the vertex set function $f_V$ is surjective. We show that the free objects are projective objects. Clearly the empty graph $\emptyset$ is projective since it is the initial object. Now let $X$ be a non-empty set in \textbf{Sets}, $G$ be a graph with a morphism $h:F(X)\rightarrow G$, and $H$ be a graph with an epimorphism $g:H\rightarrow G$. We show that there is a morphism $\overline{h}:F(X)\rightarrow H$ such that $g \overline{h}=h$.\\
\indent Since $g$ is an epimorphism, $g_V$ is a surjective function. Hence for all $v_i\in V(F(X))$, there is a $u_i\in V(H)$ such that $g(u_i)=h(v_i)$. Then define $\overline{h}(v_i)=u_i$ for every $v_i\in V(F(X))$. Then $g(\overline{h}(v_i))=g(u_i)=h(v_i)$ for every vertex $v_i$ of $F(X)$. Since $F(X)$ contains no edges, $\overline{h}$ is a graph morphism (and strict). Thus $F(X)$ is projective.\\
\indent Now let $A$ be a graph with at least 1 edge, and consider $K$, the complete graph on $V(A)$, with an inclusion morphism $h:A\rightarrow K$. By Proposition 4.2. there is an epimorphism $e:K^c\rightarrow K$ for $K^c$ the empty edge graph on $V(A)$. Since $A$ has an edge any morphism from $A$ to $K^c$ must identify at least two vertices, and hence no such morphism $f:A\rightarrow K^c$ exists such that $h=ef$. Thus $A$ is not projective.\\
\indent Let $G$ be a graph in \textbf{SiLlGrphs}. To show there are enough projectives, we show there is a projective object $H$ and an epimorphism $e:H\rightarrow G$. By Proposition 4.2., the projective object $K^c$,the empty edge graph on $V(G)$, admits an epimorphism to $G$.  A similar proof applies to \textbf{SiStGrphs} and \textbf{SiLlStGrphs}.\\
\end{proof}
\begin{proposition}
\begin{enumerate}
\item In \textbf{Grphs}, \textbf{SiGrphs}, and \textbf{SiLlGrphs} the injective objects are precisely the graphs containing the cofree objects as spanning subgraphs and there are enough injective objects.
\item In \textbf{StGrphs} and \textbf{SiStGrphs}, the injective objects are precisely the graphs containing the cofree objects as spanning subgraphs and there are enough injective objects.
\item In \textbf{SiLlStGrphs}, there are no injective objects.
\end{enumerate}
\end{proposition}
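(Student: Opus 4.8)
The plan is, in each of the six categories, to establish both directions of the ``precisely'' and the existence of enough injectives, leaning on the monomorphism characterizations already proved: a monomorphism is an injection of part sets in \textbf{Grphs}, \textbf{SiGrphs}, \textbf{StGrphs} and an injection of vertex sets in \textbf{SiLlGrphs}, \textbf{SiStGrphs}, \textbf{SiLlStGrphs}. One preliminary remark is used everywhere: a monomorphism $m:A\rightarrow B$ must send edges to edges, since $m_P(e)=\iota_B(v)$ for an edge $e$ with endpoints $x,y$ would force $m(x)=m(y)=v=m_P(\iota_A(x))$, contradicting injectivity (directly on parts, or on vertices together with simplicity). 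Thus a monomorphism may be treated as an inclusion $A\subseteq B$ that carries edges to edges.

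\textbf{Sufficiency for Parts 1 and 2.} Suppose $Q$ contains the relevant cofree object on $V(Q)$ as a spanning subgraph, so $Q$ is nonempty, every pair of distinct vertices of $Q$ is joined by an edge, and (in \textbf{StGrphs}, \textbf{SiStGrphs}) every vertex of $Q$ carries a loop. Given a monomorphism $m:A\hookrightarrow B$ and a morphism $f:A\rightarrow Q$, fix $v_0\in V(Q)$ and define $g:B\rightarrow Q$ by $g\circ m=f$, by $g(v)=v_0$ for $v\in V(B)\setminus m(V(A))$, and, for an edge $e\in E(B)\setminus m(E(A))$ with endpoints $x,y$, by letting $g(e)$ be: an edge of $Q$ joining $g(x)$ and $g(y)$ if $g(x)\neq g(y)$; the vertex $g(x)$ if $g(x)=g(y)$ in \textbf{Grphs}, \textbf{SiGrphs}, \textbf{SiLlGrphs}; and a loop of $Q$ at $g(x)$ if $g(x)=g(y)$ in \textbf{StGrphs}, \textbf{SiStGrphs}. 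The required edges and loops exist by hypothesis, incidence is preserved by construction, and in the strict categories edges go to edges; hence $g$ is a morphism with $g\circ m=f$, so $Q$ is injective. (Simplicity and looplessness force the graphs satisfying the hypothesis to be exactly the $K_n$ in \textbf{SiLlGrphs} and the $K_n^\ell$ in \textbf{SiStGrphs}.)

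\textbf{Necessity and enough injectives for Parts 1 and 2.} If $Q$ does not contain the relevant cofree object on $V(Q)$ as a spanning subgraph, then $Q=\emptyset$, or $Q$ has non-adjacent distinct vertices $u,v$, or (strict case) $Q$ has a loopless vertex $u$. If $Q=\emptyset$, the monomorphism $\emptyset\hookrightarrow K_1$ (non-strict) or $\emptyset\hookrightarrow K_1^\ell$ (strict) together with the empty morphism has no filler, as no morphism out of a graph with a vertex lands in $\emptyset$. If $u,v$ are non-adjacent, take $m:K_2^c\hookrightarrow K_2$ with $f$ sending the two vertices to $u$ and $v$; any filler would send the edge of $K_2$ to a part incident to the distinct vertices $u,v$, impossible (and in the strict case the image must moreover be an edge). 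If $u$ is loopless (strict case), take $m:K_1\hookrightarrow K_1^\ell$ with $f$ sending the vertex to $u$; any filler would send the loop to an edge with both endpoints at $u$, a loop at $u$, which does not exist. For enough injectives, given $G$ with $V(G)\neq\emptyset$ let $Q$ be $G$ with one edge added between each non-adjacent pair of distinct vertices (and, in the strict categories, a loop added at each loopless vertex); then $G\hookrightarrow Q$ is a monomorphism into an object that is injective by the sufficiency step, while for $V(G)=\emptyset$ the monomorphism $\emptyset\hookrightarrow K_1$ (resp.\ $\emptyset\hookrightarrow K_1^\ell$) does the job.

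\textbf{Part 3 and the main obstacle.} In \textbf{SiLlStGrphs} morphisms are strict, so the ``map the new edge to a vertex'' escape of the sufficiency argument is gone and, with no loops available, there is nothing to replace it. If $Q$ has a vertex $p$, the monomorphism $K_2^c\hookrightarrow K_2$ with $f$ sending both vertices of $K_2^c$ to $p$ has no filler, since a filler would send the edge of $K_2$ to an edge of $Q$ with both endpoints at $p$, i.e.\ a loop at $p$; if $Q=\emptyset$, the monomorphism $\emptyset\hookrightarrow K_1$ with the empty morphism has no filler. Hence \textbf{SiLlStGrphs} has no injective objects. The only genuinely delicate points in all of this are: recognizing that in the strict categories the spanning cofree subgraph is forced to carry a loop at \emph{every} vertex (so that an edge whose endpoints collapse still has an image), which is exactly what makes Part 2 differ from Part 1; and the degenerate behaviour of $\emptyset$, which must be excluded from the injectives by hand in Parts 1 and 2 and handled on its own in the ``enough injectives'' claims and in Part 3. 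Verifying that the $g$ above is a well-defined morphism in each of the six categories is a routine check.
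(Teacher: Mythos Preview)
Your proof is correct, and for Parts 1 and 2 it follows essentially the same route as the paper: the same extension construction for sufficiency, the same $K_2^c\hookrightarrow K_2$ obstruction for necessity (you add a separate $K_1\hookrightarrow K_1^\ell$ test for a loopless vertex in the strict case, whereas the paper folds this into the $K_2^c\hookrightarrow K_2$ test by allowing $u=v$), and the same embedding idea for enough injectives. Your ``add missing edges and loops to $G$'' is in fact slightly more careful than the paper's ``include $G$ into $C(V(G))$'', since in \textbf{Grphs} and \textbf{StGrphs} a graph with parallel edges does not literally include into the simple graph $C(V(G))$.

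The genuine difference is Part 3. The paper argues by cardinality: for a putative injective $Q$ it picks a complete graph $K$ with $|V(K)|>|V(Q)|$, uses the inclusion $Q\hookrightarrow K$ together with $1_Q$, and notes that any filler $K\to Q$ must collapse two adjacent vertices, forcing the edge between them to a loop. Your argument is more direct: testing with $K_2^c\hookrightarrow K_2$ and the constant map to a single vertex $p\in V(Q)$ already forces a loop at $p$. This avoids the appeal to arbitrarily large complete graphs and makes clear that the obstruction is local---one vertex suffices---rather than a global pigeonhole phenomenon. The paper's argument, on the other hand, reuses machinery (the retraction-onto-$Q$ trick) that it also deploys for the nonexistence of cogenerators in \textbf{SiLlStGrphs}, so there is some economy in its presentation across the section.
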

The injective objects in \textbf{Grphs} are found in \cite{KKWil}, we provide an alternate proof.
\begin{proof}
\textbf{Part 1}: Let $A$ be a graph that contains a cofree spanning subgraph in \textbf{Grphs}, and let $G,H$ be graphs in \textbf{Grphs} with a morphism $f:G\rightarrow A$ and a monomorphism $g:G\rightarrow H$. We show there is a morphism $\overline{f}:H\rightarrow A$ such that $f=\overline{f}g$.\\
\indent Since $g$ is a monomorphism, it is an injection of the part sets. Then for all $v\in g_P(P(G))$ there is a unique $v^\prime \in P(G)$ such that $g(v^\prime)=v$. Since $A$ is non-empty, it has a vertex $x$. Define $\overline{f}:H\rightarrow A$ by $\overline{f}(v)=f(v^\prime)$ if $v\in g_P(P(G))$ and $\overline{f}(v)=x$ if $v$ is not in the image but a vertex. If $v$ is an edge that is not in the image with $\partial_G(v)=(u_1$\textunderscore$u_2)$, then define $\overline{f}(v)=e$ where $e$ is some edge with $\partial_A(e)=(\overline{f}(u_1)$\textunderscore$\overline{f}(u_2))$. One exists since $A$ contains a spanning cofree subgraph, and in the case that $u_1=u_2$ the vertex suffices. By this construction, $\overline{f}$ is a  morphism and $\overline{f}g=f$.\\
\indent Now let $G$ be a graph in \textbf{Grphs} that does not contain a cofree spanning subgraph. Assume it is an injective object of \textbf{Grphs}. Then there are distinct vertices $u,v\in V(G)$ such that there is no edge $e$ with $\partial_G(e)=(u$\textunderscore$v)$.\\
\indent Then consider $K_2^c$ with morphism $f:K_2^c\rightarrow G$ defined by $f(a)=u$ and $f(b)=v$, for $a$ and $b$ the two vertices of $K_2^c$, and $i:K_2^c\rightarrow K_2$ the inclusion morphism. Since the inclusion morphism is a monomorphism, there is a morphism $\overline{f}:K_2\rightarrow G$ such that $\overline{f} i = f$. Then $\overline{f}(i(a))=\overline{f}(a)=u$ and $\overline{f}(i(b))=\overline{f}(b)=v$. Since morphisms preserve incidence, $\partial_G(\overline{f}(e))=(\overline{f}(a)$\textunderscore$\overline{f}(b))=(u$\textunderscore$v)$, and there is an edge $e^\prime$ such that $\partial_G(e^\prime)=(u$\textunderscore$v)$, a contradiction. Hence $G$ is not an injective object.\\
\indent To show there are enough injective objects we show that for any graph $G$ in \textbf{Grphs}, there is an injective object $H$ with a monomorphism $f:G\rightarrow H$. If $G$ is not the initial object, $C(V(G))$ is an injective object and $i:G\rightarrow C(V(G))$, the inclusion morphism, is a monomorphism. If $G=\emptyset$ then $\emptyset\hookrightarrow K_1$ suffices. Hence there are enough injective objects in \textbf{Grphs}. A similar proof applies to \textbf{SiGrphs} as well as \textbf{SiLlGrphs} that relies on monomorphisms as injections of the vertex sets and the fact that there is at most one edge between any two distinct vertices.\\
\indent \textbf{Part 2}: Let $A$ be a graph that contains a cofree spanning subgraph in \textbf{StGrphs}, and let $G,H$ be graphs in \textbf{StGrphs} with a morphism $f:G\rightarrow A$ and a monomorphism $g:G\rightarrow H$. We show there is a morphism $\overline{f}:H\rightarrow A$ such that $f=\overline{f}g$.\\
\indent Since $g$ is a monomorphism, it is an injection of the part sets. Then for all $v\in g_P(P(G))$ there is a unique $v^\prime \in G$ such that $g(v^\prime)=v$. Since $A$ is non-empty, it has a vertex $x$. Defined $\overline{f}:H\rightarrow A$ by $\overline{f}(v)=f(v^\prime)$ if $v\in g_P(P(G))$ and $\overline{f}(v)=x$ if $v$ is not in the image but a vertex. If $v$ is an edge that is not in the image with $\partial_G(v)=(u_1$\textunderscore$u_2)$, then define $\overline{f}(v)=e$ where $e$ is some edge with $\partial_A(e)=(\overline{f}(u_1)$\textunderscore$\overline{f}(u_2))$. One exists since $A$ contains a spanning cofree subgraph. By this construction, $\overline{f}$ is a strict graph morphism and $\overline{f}g=f$.\\
\indent Now let $G$ be a graph in \textbf{StGrphs} that does not contain a cofree spanning subgraph. Assume it is an injective object of \textbf{StGrphs}. Then there are vertices $u,v\in V(G)$ (not necessarily distinct) such that there is no edge $e\in E(G)$ with $\partial_G(e)=(u$\textunderscore$v)$.\\
\indent Then consider $K_2^c$ with morphism $f:K_2^c\rightarrow G$ defined by $f(a)=u$ and $f(b)=v$, for $a$ and $b$ the two vertices of $K_2^c$, and $i:K_2^c\rightarrow K_2$ the inclusion morphism. Since the inclusion morphism is a monomorphism, there is a morphism $\overline{f}:K_2\rightarrow G$ such that $\overline{f} i = f$. Then $\overline{f}(i(a))=\overline{f}(a)=u$ and $\overline{f}(i(b))=\overline{f}(b)=v$. Since morphisms preserve incidence, $\partial_G(\overline{f}(e))=(\overline{f}(a)$\textunderscore$\overline{f}(b))=(u$\textunderscore$v)$, there is an edge $e^\prime$ such that $\partial_G(e^\prime)=(u$\textunderscore$v)$, a contradiction. Hence $G$ is not an injective object.\\
\indent To show there are enough injective objects we show that for any graph $G$ in \textbf{StGrphs}, there is an injective object $H$ with a monomorphism $f:G\rightarrow H$. If $G$ is not the initial object, $C(V(G))$ is an injective object and $i:G\rightarrow C(V(G))$, the inclusion morphism, is a monomorphism. If $G=\emptyset$ then $\emptyset\hookrightarrow K_1^\ell$ suffices. Hence there are enough injective objects in \textbf{StGrphs}. A similar proof applies to \textbf{SiStGrphs} that relies on monomorphisms as injections of the vertex set and the fact that there is at most one edge between any two (not necessarily distinct) vertices.\\
\indent \textbf{Part 3}: Suppose there exist injective objects. Let $Q$ be an injective object in \textbf{SiLlStGrphs}. Consider the complete graph $K$ with the cardinality of $V(K)$ is greater than that of $V(Q)$. Then consider the morphisms $1_Q:Q\rightarrow Q$ the identity on $Q$ and $f:Q\rightarrow K$, the inclusion morphism of $Q$ into $K$. Since inclusion morphisms are injections, they are monomorphisms.\\
\indent Then since $Q$ is injective, there is a morphism $\overline{f}:K\rightarrow Q$ such that $\overline{f} f=1_Q$. Since the cardinality of $V(K)$ is greater than the cardinality of $V(Q)$ and $\overline{f}_V$ is a set map, there are two distinct vertices $u,v\in V(K)$ such that $\overline{f}(u)=\overline{f}(v)$. Since $\overline{f}$ is a strict morphism and $K$ is a complete graph, the edge $e$ incident to $u$ and $v$ in $K$ must be sent to an edge in $Q$. Since graph homomorphisms preserve incidence and $\overline{f}(u)=\overline{f}(v)$, $\partial_Q(\overline{f}(e))=(\overline{f}(u)$\textunderscore$\overline{f}(u))$, and hence $\overline{f}(e)$ is a loop. This contradicts $Q$ being loopless. Hence no injective objects exist.
\end{proof}

\subsection{Generators and Cogenerators in the Categories of Graphs}
\indent The last property we characterize in the six categories of graphs is a classification of generators and cogenerators.
\begin{proposition}
\begin{enumerate}
\item In \textbf{Grphs} and \textbf{SiGrphs}, all graphs containing a non-loop edge are precisely the generators,
\item in \textbf{SiLlGrphs} all nonempty graphs are generators,
\item in \textbf{StGrphs} no generators exist,
\item in \textbf{SiStGrphs} and \textbf{SiLlStGrphs}, the empty edge graphs with at least one vertex are precisely the generators.
\end{enumerate}
\end{proposition}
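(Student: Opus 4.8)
The plan is to use the standard characterization: $G$ is a \emph{generator} exactly when, for every pair of distinct parallel morphisms $f,g:A\rightarrow B$, there is some $h:G\rightarrow A$ with $fh\neq gh$. For each part I will establish that the claimed objects are generators by exhibiting an explicit separating $h$, and rule out all other objects by producing a single pair $f\neq g$ that no morphism out of $G$ can separate. The one structural fact that distinguishes the categories is how much of a morphism is pinned down by its action on vertices, so I will record first a preliminary observation: if the codomain $B$ is simple, then a morphism into $B$ (conceptual or strict) is determined by its vertex map, since at most one part of $B$ is incident to a given unordered pair of vertices. Consequently, in \textbf{SiLlGrphs}, \textbf{SiStGrphs}, and \textbf{SiLlStGrphs} two distinct parallel morphisms must already disagree on some vertex.

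For Parts 1 and 2 I split on how $f$ and $g$ disagree. If $f(v)\neq g(v)$ for some vertex $v$ of $A$, then for any nonempty $G$ the morphism $h:G\rightarrow A$ collapsing every part of $G$ to $v$ is a legal conceptual morphism (edges may go to vertices) and satisfies $fh\neq gh$. If instead $f$ and $g$ agree on all vertices but $f(e')\neq g(e')$ for an edge $e'\in E(A)$ with $\partial_A(e')=(u$\textunderscore$w)$, and $G$ has a non-loop edge $e$ with endpoints $x\neq y$, I build $h:G\rightarrow A$ sending $x\mapsto u$, $y\mapsto w$, $e\mapsto e'$, every other vertex of $G$ to $u$, and every remaining edge of $G$ either to $e'$ or to the appropriate vertex according to where its endpoints land; this is a valid conceptual morphism with $fh(e)=f(e')\neq g(e')=gh(e)$. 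This shows every graph with a non-loop edge is a generator in \textbf{Grphs} and \textbf{SiGrphs}, and — using the preliminary observation to eliminate the edge case outright — that every nonempty graph is a generator in \textbf{SiLlGrphs}. For the converses I use $f,g:K_2\rightarrow K_1^\ell$ that agree on the two vertices, with $f$ sending the edge to the loop and $g$ sending the edge to the vertex: if $G$ has no non-loop edge, then since $K_2$ has no loop, every $h:G\rightarrow K_2$ has image contained in $V(K_2)$, so $fh=gh$; hence such $G$ (and $\emptyset$) is not a generator.

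For Parts 3 and 4 the key device is that a strict morphism sends each edge to an edge, so if $A$ has no edges then no strict morphism maps a graph with an edge into $A$. Taking $A=B=K_2^c$, $f=1_{K_2^c}$, and $g$ the vertex swap gives a pair that no $G$ possessing an edge can separate; so in \textbf{StGrphs}, \textbf{SiStGrphs}, and \textbf{SiLlStGrphs} a generator has no edges. In \textbf{StGrphs} I then handle the edgeless nonempty case with $A=B$ the one-vertex graph carrying two loops and $f,g$ the two loop-swapping strict morphisms: any $h$ from an edgeless $G$ factors through the single vertex, on which $f$ and $g$ agree, so no edgeless graph is a generator either, and \textbf{StGrphs} has no generators. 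In \textbf{SiStGrphs} and \textbf{SiLlStGrphs}, conversely, $K_n^c$ with $n\geq 1$ is a generator: by the preliminary observation distinct $f,g:A\rightarrow B$ disagree at some vertex $v$, and any function $V(K_n^c)\rightarrow V(A)$ hitting $v$ is a morphism $h$ with $fh\neq gh$; combined with the edge obstruction this pins the generators down to exactly the empty-edge graphs with at least one vertex.

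I expect the main obstacle to be the edge-hitting construction in the forward direction of Part 1: one must verify that the assignment $e\mapsto e'$ genuinely extends to a conceptual morphism on all of $G$, regardless of how the remaining parts of $G$ (possibly infinitely many, possibly loops) are configured, which is precisely where the freedom to send edges to vertices is indispensable; the rest is bookkeeping, namely confirming that $K_2^c$, $K_1^\ell$, and the two-loop vertex graph lie in the relevant categories and that the swap and loop-swap maps are honestly distinct morphisms.
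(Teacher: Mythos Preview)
Your proposal is correct and follows essentially the same strategy as the paper: exhibit an explicit separating morphism to certify generators, and produce a concrete inseparable parallel pair to rule out non-generators. The differences are cosmetic --- in Part~1 the paper first reduces to $K_2$ via an epimorphism rather than building $h:G\to A$ directly, and in Part~3 the paper uses the pair $j,k:K_2\to (\text{two parallel edges})$ where you use the loop-swap on a one-vertex two-loop graph --- but the logical architecture is identical.
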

\begin{proof}
\textbf{Part 1}: Let $A$ be a graph in \textbf{Grphs} with a non-loop edge, $e$, with vertices $u_1$ and $u_2$ incident to $e$. Consider $K_2$ with vertices $v_1$ and $v_2$ with incident edge $e^\prime$. Then $A$ has an epimorphism $f:A\rightarrow K_2$ defined by $f(u_2)=v_2$ and $f(y)=v_1$ for all vertices $y\in V(A)\backslash \{u_2\}$ and where every loop incident to $u_2$ is mapped to $v_2$ and every non-loop edge incident to $v_2$ (including $e$) is mapped to $e^\prime$, and all other edges mapped to $v_1$.\\
\indent Hence, we only need to show $K_2$ is a generator. Let $f,g:G\rightarrow H$ be such that $f\neq g$. Hence $f(a)\neq g(a)$ for some $a \in P(G)$. First suppose $a$ is a vertex. Then the morphism $\ulcorner a \urcorner$ from $K_2$ to $G$ mapping the two vertices and edge of $K_2$ to $a$ suffices. If $a$ is an edge of $G$, then the morphism that maps the edge of $K_2$ to $a$ and the incident vertices of the edge to the incident vertices of $a$ suffices.\\
\indent Now suppose $A$ is a graph containing no non-loop edges. Then no morphism from $A$ to $K_2$ can distinguish between $f,g:K_2\rightarrow K_1^\ell$, where $f$ maps the two vertices and edge of $K_2$ to the vertex of $K_1^\ell$ and $g$ maps the two vertices of $K_2$ to the single vertex of $K_1^\ell$ and the edge to the loop. Hence $A$ is not a generator. The same proof applies to \textbf{SiGrphs}.\\
\indent \textbf{Part 2}: The empty graph is in the initial object of \textbf{SiLlGrphs} and thus cannot be a generator.\\
\indent Since in all graphs of \textbf{SiLlGrphs} there is at most one edge between any two distinct vertices, if $f,g:G\rightarrow H$ agree on the vertex sets, they agree on the edge sets and $f=g$. Hence if $f,g:G\rightarrow H$ are distinct, then for some vertex $v$ of $G$, $f(v)\neq g(v)$. So let $A$ be a non-empty graph. Then $A$ has a vertex and the morphism $h:A\rightarrow G$, where $h$ maps all of $P(A)$ to the vertex $v$ suffices.\\
\indent \textbf{Part 3}: Suppose a generator $G$ in \textbf{StGrphs} exists. Then consider $f,g:K_1\rightarrow K_2^c$, where $f$ maps the vertex of $K_1$ to one vertex of $K_2^c$ and $g$ maps the vertex of $K_1$ to the other vertex of $K_2^c$. Since $G$ is a generator, it admits a morphism $h:G\rightarrow K_1$ such that $fh\neq gh$. Since morphisms are strict, edges must be mapped to edges. However, $K_1$ has no edge, and thus $G$ is edgeless.\\
\indent Now consider $K_2$ and  $A$, where $A$ is a graph consisting of two vertices with two parallel edges between the two vertices. Define $j,k:K_2 \rightarrow A$ by $j$ mapping the edge of $K_2$ to one edge of $A$, and $k$ mapping the edge of $K_2$ to the other edge of $A$, but mapping the vertices of $K_2$ in tandum. Then no morphism from $G$ can distinguish between $j$ and $k$ as $G$ has no edges. Hence $G$ is not a generator, a contradiction, and no generators exist in \textbf{StGrphs}.\\
\indent \textbf{Part 4}: First we show $K^c$, an empty edge graph with at least one vertex, is a generator, then we show that any graph with an edge is not a generator. Let $X$ and $Y$ be graphs in \textbf{SiStGrphs} with morphisms $f,g:X\rightarrow Y$ such that $f\neq g$. Then there is a vertex $v\in V(X)$ such that $f(x)\neq g(x)$, otherwise since the morphisms preserve incidence and there is at most one edge between any two vertices, $f(e)=g(e)$ for all edges $e\in P(X)$ and $f=g$.\\
\indent First note $K_1=K_1^c$. Now consider the map $h:K_1\rightarrow X$ that sends the single vertex of $K_1$, $u$, to $v$. Then $f(h(u))=f(v)\neq g(v)=g(h(u))$. Hence $f h \neq g h$. Hence $K_1$ is a generator.\\
\indent To show $K^c$ is a generator, we consider the morphism $\pi:K^c\rightarrow K_1$ that sends every vertex of $K^c$ to $u$. Then clearly $f (h \pi)\neq g (h \pi)$, and hence $K^c$ is a generator of \textbf{SiStGrphs}.\\
\indent Now let $G$ be a graph in \textbf{SiStGrphs} with at least one edge. Consider $K_2^c$ with two vertices $u$ and $v$ with two morphisms $1_{K_2^c},t_w:K_2^c\rightarrow K_2^c$ where $1_{K_2^c}$ is the identity morphism and $t_w$ is the ``twist'' morphism defined by $t_w(u)=v$ and $t_w(v)=u$. Clearly $1_{K_2^c} \neq t_w$, but since morphisms must send edges to edges, $G$ admits no map to $K_2^c$. Thus $G$ is not a generator. A similar proof applies for \textbf{SiLlStGrphs}.\\
\end{proof}
\begin{proposition}
\begin{enumerate}
\item In \textbf{Grphs} and \textbf{SiGrphs}, the graphs containing a loop and a non-loop edge are precisely the cogenerators,
\item In \textbf{SiLlGrphs}, the graphs containing an edge are the cogenerators,
\item in \textbf{StGrphs}, the graphs containing both a vertex with two distinct loops and containing a subgraph isomorphic to $K_2^\ell$ are precisely the cogenerators,
\item in \textbf{SiStGrphs}, the cogenerators are precisely the graphs containing a subgraph isomorphic to $K_2^\ell$,
\item in \textbf{SiLlStGrphs} no cogenerators exist.
\end{enumerate}
\end{proposition}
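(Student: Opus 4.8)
The plan parallels the generator argument of Proposition 4.11, run in the dual direction. The first step is a reduction lemma valid in all six categories: an object $C$ is a cogenerator if and only if for every graph $Y$ in the category and every pair of distinct parts $p,q\in P(Y)$ there is a morphism $h\colon Y\to C$ with $h(p)\neq h(q)$. One direction is immediate --- from $f\neq g\colon X\to Y$ choose $a\in P(X)$ with $f(a)\neq g(a)$ and separate $f(a)$ from $g(a)$. For the converse, given distinct parts $p,q$ of $Y$ I would build a small test graph $X$, one of $K_1$, $K_2$, or $K_1^\ell$ according to whether $p,q$ are both vertices, both edges, or of mixed type, together with two morphisms $X\rightrightarrows Y$ realising $p$ and $q$ as the images of one common part of $X$. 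In the strict categories a pair $f(a),g(a)$ always consists of parts of the same type, so mixed-type pairs never need separating; this is precisely why the strict cogenerators are permitted to be ``smaller.''

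Next I would dispatch necessity by exhibiting small graphs $Y$ whose parts cannot be separated without the stated features. Separating the two vertices of $Y=K_2$ forces $C$ to contain a non-loop edge, since an edge-or-loop target of the edge of $K_2$ forces its two endpoints to a single vertex by the incidence condition; this covers \textbf{Grphs}, \textbf{SiGrphs}, and \textbf{SiLlGrphs}. Separating the loop of $Y=K_1^\ell$ from its base vertex forces $C$ to contain a loop, giving the extra requirement in \textbf{Grphs} and \textbf{SiGrphs} (note $K_1^\ell$ is not an object of \textbf{SiLlGrphs}, which is why that category needs only an edge). In \textbf{StGrphs}, separating the two loops of the graph consisting of a single vertex with two distinct loops forces a vertex of $C$ bearing two distinct loops, because a strict morphism must send each of those loops to a loop at the single image vertex; and separating the two vertices of $Y=K_2^\ell$ forces $C$ to contain a copy of $K_2^\ell$, since the two loops and the joining edge must map to loops at, and an edge between, the two necessarily distinct image vertices. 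The same $K_2^\ell$ argument gives necessity in \textbf{SiStGrphs}, where the one-vertex two-loop graph is no longer an object and parallel edges are forbidden, so the two-loop-vertex requirement disappears.

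Sufficiency is the crux. Given $Y$ and distinct parts $p,q$ I would build $h\colon Y\to C$ by one of two recipes. If $p$ and $q$ admit a separating two-colouring of $V(Y)$ --- which is possible whenever at least one of them is not pinned to a single vertex, for instance when their endpoint multisets differ or one of them is a vertex --- I would colour $V(Y)$ by the two endpoints of a non-loop edge of $C$ so that $p$ becomes bichromatic and $q$ monochromatic, send every bichromatic edge of $Y$ to that non-loop edge, and send every monochromatic edge to a loop of $C$ at, or the vertex of, its colour; then $h(p)$ is a non-loop edge while $h(q)$ is not. If instead $p$ and $q$ are co-incident parts that no colouring can separate --- parallel edges of $Y$, a loop of $Y$ versus its base vertex, or two loops at one vertex of $Y$ --- I would collapse all of $V(Y)$ onto the distinguished vertex of $C$ carrying the needed multiplicity (one loop in \textbf{Grphs}/\textbf{SiGrphs}, two loops in \textbf{StGrphs}) and route $p$ and $q$ to two distinct parts incident at that vertex; every other edge of $Y$ then has an available target among those loops and that vertex. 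For \textbf{SiLlGrphs} only the first recipe is ever needed; for \textbf{StGrphs} and \textbf{SiStGrphs} the first recipe uses the $K_2^\ell$ subgraph and the second uses the two-loop vertex. I expect the main obstacle to live here: one must check that each partially specified $h$ genuinely extends to a well-defined incidence-preserving (and, in the strict cases, strict) morphism on all of $Y$, given that $Y$ may simultaneously contain loops, parallel edges, and arbitrarily large components, so that no edge is left without a legal image and the recipes can be carried out using only the features $C$ is guaranteed to have.

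Finally, for \textbf{SiLlStGrphs} I would prove that no cogenerator exists by a cardinality obstruction. Given any candidate $C$ (necessarily nonempty, as the initial object is not a cogenerator), let $K$ be the complete graph on a vertex set strictly larger than $V(C)$, which is a simple loopless graph. Any strict morphism $K\to C$ would identify two vertices of $K$, forcing the edge joining them in $K$ onto a loop of $C$, which is impossible since $C$ is loopless; hence there is no morphism $K\to C$ at all. But there are two distinct morphisms $K_1\rightrightarrows K$ picking out two vertices, and $C$ cannot separate them, so $C$ is not a cogenerator.
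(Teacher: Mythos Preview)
Your proposal is correct, and your Part 5 argument is essentially identical to the paper's. For Parts 1--4, however, you organize the proof rather differently. The paper never isolates your reduction lemma; instead it works directly with a witnessing pair $f,g\colon G\to H$ and a part $e'$ where they differ, and builds $h\colon H\to C$ by an ad hoc case split on whether $f(e')$ and $g(e')$ are vertices or edges (two or three cases per category, each with its own explicit definition of $h$). Your framework---first reduce ``$C$ is a cogenerator'' to ``$C$ separates every pair of distinct parts of every $Y$'', then handle separation by either the two-colouring recipe or the collapse-to-a-looped-vertex recipe---packages the same constructions uniformly and makes transparent why the needed features of $C$ differ across the six categories: the strict categories never produce mixed-type pairs $f(a),g(a)$, and the simple categories never produce co-incident parallel pairs, so the corresponding recipes (and hence the corresponding features of $C$) can be dropped. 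One small point worth tightening when you write it out: in the forward direction of your lemma for the ``both edges'' case with $X=K_2$, the cogenerator hypothesis gives $hf\neq hg$ possibly only at a vertex of $K_2$; you should note that a discrepancy at a vertex forces the endpoint multisets of $h(p)$ and $h(q)$ to differ, hence $h(p)\neq h(q)$ after all, so the lemma goes through as stated. The paper's more hands-on approach avoids this wrinkle by never stating the lemma, at the cost of repeating nearly the same construction of $h$ in each of Parts 1--4.
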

\begin{proof}
\textbf{Part 1}: Let $A$ be a graph composed of two disconnected components. One component contains a vertex $v$ with a loop $\ell$, and the other component contains two vertices $u_1$ and $u_2$ with an edge $e$ between them. If $A$ is a cogenerator, then any graph containing a loop and a non-loop edge is also a cogenerator by the appropriate inclusion morphism.\\
\indent Let $f,g:G\rightarrow H$ be two distinct morphisms in \textbf{Grphs}. Then $f(e^\prime)\neq g(e^\prime)$ for some $e^\prime\in P(G)$. If both $f(e^\prime)$ and $g(e^\prime)$ are vertices, define $h:H\rightarrow A$ by $h(f(e^\prime))=u_2$, $h(y)=u_1$ for all $y\in V(H)\backslash\{ f(e^\prime)\}$, $h(a)=e$ for all non-loop edges $a$ incident to $h(e^\prime)$ in $H$, $h(a)=u_2$ for all loops $a$ incident to $h(e^\prime)$, and $h(a)=u_1$ for all other edges $a$ of $H$. Hence $hf\neq hg$.\\
\indent If $f(e^\prime)$ is a vertex of $H$ and $g(e^\prime)$ is not, define $h$ by $h(f(e^\prime))=v$, $h(g(e^\prime))=\ell$ and $h(y)=v$ for all $y\in P(H)\backslash \{ f(e^\prime), g(e^\prime)\}$. Hence $hf\neq hg$. If $f(e^\prime)$ is an edge of $H$, then define $h$ by $h(f(e^\prime))=\ell$ and $h(y)=v$ for all $y\in P(H)\backslash\{ f(e^\prime)\}$. Hence $hf\neq hg$, and $A$ is a cogenerator.\\
\indent If $C$ is a graph not containing any loops, then no morphism exists from $K_1^\ell$ to $C$ that can distinguish between $1_{K_1^\ell},f:K_1^\ell\rightarrow K_1^\ell$, where $1_{K_1^\ell}$ is the identity morphism, and $f$ is the morphism that maps the loop and vertex of $K_1^\ell$ to the vertex of $K_1^\ell$. If $C$ is a graph not containing any non-loop edges, then no morphism exists from $K_2$ to $C$ that can distinguish between $j,k:K_1\rightarrow K_2$ where $j$ maps the single vertex of $K_1$ to one vertex of $K_2$, and $k$ maps the single vertex of $K_1$ to the other vertex of $K_2$. Hence all cogenerators require a loop and a non-loop edge. A similar proof applies to \textbf{SiGrphs}.\\
\indent \textbf{Part 2}: Let $A$ be a graph with an edge $e$ incident to vertices $u_1$ and $u_2$. As in the proof for generators in \textbf{SiLlGrphs}, if $f,g:G\rightarrow H$ are distinct, then there is a vertex $v\in V(G)$ such that $f(v)\neq g(v)$. Define $h:H\rightarrow A$ by $h(f(v))=u_2$, $h(y)=u_1$ for all $y\in V(H)\backslash \{f(v)\}$, $h(a)=e$ for all edges of $H$ incident to $f(v)$, and $h(a)=u_1$ for all other edges in $H$. Hence $hf\neq hg$, and $A$ is a cogenerator.\\
\indent If $C$ is a graph not containing any edges, then no morphism exists from $K_2$ to $C$ that can distinguish between $j,k:K_1\rightarrow K_2$ where $j$ maps the single vertex of $K_1$ to one vertex of $K_2$, and $k$ maps the single vertex of $K_1$ to the other vertex of $K_2$. Hence all cogenerators in \textbf{SiLlGrphs} require an edge.\\
\indent \textbf{Part 3}: Let $A$ be a graph composed of two disconnected components. One component contains a vertex $v$ with two loops $\ell_1$ and $\ell_2$, and the other component contains two vertices $u_1$ and $u_2$ with an edge $e$ between them, and a loop $\ell_{u1}$ and $\ell_{u2}$ on $u_1$ and $u_2$ respectively. If $A$ is a cogenerator, then any graph containing a loop and a non-loop edge is also a cogenerator.\\
\indent Let $f,g:G\rightarrow H$ be two distinct morphisms in \textbf{StGrphs}. Then $f(e^\prime)\neq g(e^\prime)$ for some $e^\prime\in P(G)$. If $f(e^\prime)$ is a vertex, define $h:H\rightarrow A$ by $h(f(e^\prime))=u_2$, $h(y)=u_1$ for all $y\in V(H)\backslash\{ f(e^\prime)\}$, $h(a)=e$ for all non-loop edges $a$ incident to $h(e^\prime)$ in $H$, $h(a)=\ell_{u2}$ for all loops $a$ incident to $h(e^\prime)$, and $h(a)=\ell_{u1}$ for all other edges $a$ of $H$. Hence $hf\neq hg$.\\
\indent If $f(e^\prime)$ is an edge of $H$, then define $h$ by  $h(f(e^\prime))=\ell_1$ and $h(y)=v$ for all $y\in V(H)$ and $h(a)=\ell_2$ for all other edges $a$ in $H$. Hence $hf\neq hg$, and $A$ is a cogenerator.\\
\indent Now suppose $C$ is a graph that has no vertex that is incident to two loops. Then consider $f,g:K_1^\ell\rightarrow B$, where $B$ is a graph composed of one vertex with two loops $\ell_a$ and $\ell_b$, $f$ maps the loop of $K_1^\ell$ to $\ell_a$ and $g$ maps the loop to $\ell_b$. No morphism exists from $B$ to $C$ that can distinguish between $f$ and $g$.\\
\indent Now suppose $C$ is a graph that does not contain any subgraph isomorphic to $K_2^\ell$, then there is no edge incident two any two vertices that have loops. Then there exists no morphism from $K_2^\ell\rightarrow C$ that can distinguish between $j,k:K_1^\ell\rightarrow K_2^\ell$, where $j$ maps the vertex and loop of $K_1^\ell$ to one of the vertices of $K_2^\ell$ and its incident loop, and $k$ maps the vertex and loop of $K_1^\ell$ to the other vertex of $K_2^\ell$ and its incident loop. Hence all cogenerators in \textbf{StGrphs} require a vertex with two loops, and a subgraph isomorphic to $K_2^\ell$.\\
\indent \textbf{Part 4}: Let $K_2^\ell$ have vertices $u$ and $v$ with edge $e$ incident to $u$ and $v$ and loops $\ell_u$ and $\ell_v$ on $u$ and $v$ respectively. Let $X$ and $Y$ be graphs in 
\textbf{SiStGrphs} with morphisms $f,g:X\rightarrow Y$ such that $f\neq g$. Since there is at most one loop at a vertex and at most one edge between any two vertices, there is a vertex $x\in V(X)$ such that $f(x)\neq g(x)$. Define a map $h:Y\rightarrow K_2^\ell$ by $h(f(x))=u$ and $h(y)=v$ for all vertices $y\in V(Y)\backslash \lbrace f(x)\rbrace$, and for $a\in E(Y)$, $h(a)=\ell_v$ if $\partial_Y(a)=(y_1$\textunderscore$y_2)$ for $y_1,y_2\in V(Y)\backslash \lbrace f(x)\rbrace$, $h(a)=\ell_u$ if $\partial_Y(a)=(f(x)$\textunderscore$f(x))$, and $h(a)=e$ if $\partial_Y(a)=(f(x)$\textunderscore$y)$ for $y\in V(Y)\backslash \lbrace f(x)\rbrace$.\\
\indent We now show $h$ is a strict graph homomorphism. Let $a\in E(Y)$. If $h(a)=\ell_v$ then $\partial_{K_2^\ell}(h(a))=(v$\textunderscore$v)=(h(y_1)$\textunderscore$h(y_2))$ for some $y_1,y_2\in V(Y)\backslash \lbrace f(x)\rbrace$. If $h(a)=\ell_u$ then $\partial_{K_2^\ell}(a)=(u$\textunderscore$u)=(h(f(x))$\textunderscore$h(f(x)))$. If $h(a)=e$ then $\partial_{K_2^\ell}(a)=(u$\textunderscore$v)=(h(f(x))$\textunderscore$h(y))$ for some $y\in V(Y)\backslash \lbrace f(x)\rbrace$. Hence $h$ preserves incidence, and since $h$ sends edges to edges; $h$ is a strict graph homomorphism.\\
\indent Since $f(x)\neq g(x)$, $h(f(x))=u$ and $h(g(x))=v$. Hence $h f\neq h g$, and $K_2^\ell$ is a cogenerator of \textbf{SiStGrphs}.\\
\indent If $G$ is a graph in \textbf{SiStGrphs}  that contains a subgraph isomorphic to $K_2^\ell$ then clearly $(i h) f\neq (i h) g$, where $i$ is the inclusion morphism (over the isomorphism) $i:K_2^\ell\rightarrow G$. Hence $G$ is a cogenerator of \textbf{SiStGrphs}.\\
\indent Suppose $C$ does not have a subgraph isomorphic to $K_2^\ell$ but $C$ is a cogenerator. Then no two vertices of $G$ with loops are incident to the same edge. Consider the two morphisms $1_{K_2^\ell},t_w:K_2^\ell\rightarrow K_2^\ell$ where $1_{K_2^\ell}$ is the identity morphism and $t_w$ is the morphism defined by $t_w(u)=v$, $t_w(v)=u$, $t_w(e)=e$, $t_w(\ell_u)=\ell_v$, and $t_w(\ell_v)=\ell_u$. Since $C$ is a cogenerator, there is a morphism $h:K_2^\ell\rightarrow C$ such that $h 1_{K_2^\ell} \neq h t_w$.\\
\indent Let $h(u)=u^\prime$ for some $u^\prime \in V(C)$ and $h(v)=v^\prime$ for some $v^\prime \in V(C)$. If $u^\prime=v^\prime$, then since edges must be sent to edges and incidence is preserved, $\partial_C(h(1_{K_2^\ell}(\ell_v)))=(h(v)$\textunderscore$h(v))=(v^\prime$\textunderscore$v^\prime)=
(u^\prime$\textunderscore$u^\prime)=\partial_C(h(t_w(\ell_v)))$. Since there is at most one loop at a vertex, then $h(1_{K_2^\ell}(\ell_v))=h(t_w(\ell_v))$. Similarly $h(1_{K_2^\ell}(\ell_u))=h(t_w(\ell_u))$ and $h(1_{K_2^\ell}(e))=h(t_w(e))$. Hence $h 1_{K_2^\ell} = h t_w$, a contradiction. Thus $u^\prime \neq v^\prime$.\\
\indent Since morphisms must send edges to edges, $\partial_C(h(\ell_u))=(u^\prime$\textunderscore$u^\prime)$, and $\partial_C(h(\ell_v))=(v^\prime$\textunderscore$v^\prime)$, $u^\prime$ has a loop $\ell_{u^\prime}$ and $v^\prime$ has a loop $\ell_{v^\prime}$. Now consider $h(e)$. Since $\partial_C(h(e))=(h(u)$\textunderscore$h(v))=(u^\prime$\textunderscore$v^\prime)$, $u^\prime$ and $v^\prime$ are two vertices with loops adjacent to the same edge, a contradiction. Hence $C$ must contain a subgraph isomorphic to $K_2^\ell$.\\
\indent \textbf{Part 5}:
To show there are no cogenerators in \textbf{SiLlStGrphs}, we show there is no graph $X$ such that any graph $G$ admits a morphism to $X$. Assume such a graph $X$ exists. Consider the complete graph $K$ such that the cardinailty of $V(K)$ is greater than the cardinality of $V(X)$. By hypothesis there is a morphism $f:K\rightarrow X$. Since $\sharp(V(K))>\sharp(V(X))$ and $\overline{f}_V$ is a set map, there are two distinct vertices $u,v\in V(K)$ such that $\overline{f}(u)=\overline{f}(v)$. Let $e$ be the edge in $K$ incident to both $u$ and $v$. Since graph homomorphisms preserve incidence and $\overline{f}(u)=\overline{f}(v)$, $\partial_X(\overline{f}(e))=(\overline{f}(u)$\textunderscore$\overline{f}(u))$. Then since edges must be sent to edges, $\overline{f}(e)$ is a loop. This contradicts $X$ being loopless. Hence no such object exists.\\
\indent We now note that every graph $G$ with at least two vertices admits at least two distinct morphisms to the complete graph $K$ with the cardinality of $V(K)$ equal to the cardinality of $V(G)$, as the automorphism group of $K$ is the symmetric group on $V(K)$. This fact coupled with the fact there is no graph such that any other graph admits a morphism to it proves no cogenerators exist.
\end{proof}
\indent We will now provide a reference table.
\begin{table}[h]
\caption{Categorial Properties in the categories of graphs.}
\begin{tabular}{ r || l | l | l | l | l | l | l | }			
  \quad & \textbf{SiLlStG} & \textbf{SiLlG} & \textbf{SiStG} & \textbf{SiGrphs} & \textbf{StG} & \textbf{G} \\
\hline
  Epis are surj. on &  vert. sets & vert. sets & vert. sets & part sets & part sets & part sets \\
\hline  
  Monos are inj. on  &  vert. sets & vert. sets & vert. sets & part sets & part sets & part sets \\
\hline 
  Free Graphs & $K_n^c$ & $K_n^c$ & $K_n^c$ & $K_n^c$ & $K_n^c$ & $K_n^c$ \\
\hline 
  Cofree Graphs &  $\nexists$ & $K_n$ & $K_n^\ell$ & $K_n$ & $K_n^\ell$ & $K_n$ \\
\hline 
  Projective Objects & Y & Y & Y & Y & Y & Y  \\
\hline 
  Enough Proj. & Y & Y & Y & Y & Y & Y \\
\hline 
  Injective Objects  & N & Y & Y & Y & Y & Y \\
\hline 
  Enough Inj. & N & Y & Y & Y & Y & Y \\
\hline 
  Generators & Y & Y & Y & Y & N & Y\\
\hline
  Cogenerators & N & Y & Y & Y & Y & Y\\
\hline
\end{tabular}
\end{table}

\section{Conclusion - Making Informed Categorial Choices}
\indent When using category theory in the study of graph morphisms, a choice must be made of which category of graphs to use. An informed choice of category can be based on what categorial properties would be useful to the study.\\
\indent From only the Lawvere-type Axioms, the six categories of graphs break up into three types. The first types consists of the ``smallest'' category in terms of morphisms and objects: \textbf{SiLlStGrphs}. It fails to satisfy any of the Lawvere-type Axioms. The other five categories fall into the other two types and they have limits and a natural number object, but they fail to have choice and are not two-valued.\\
\indent The second type consists of \textbf{SiStGrphs} and \textbf{SiLlGrphs}. These categories have exponentiation with evaluation, but they lack a subobject classifier. The last type consists of \textbf{SiGrphs}, \textbf{StGrphs} and \textbf{Grphs}, each of which have a subobject classifier but they do not have exponentiation with evaluation. We note that all six categories of graphs fail to be topoi.\\
\indent However, within these 3 types, there are vast differences when we look at the other categorial properties. The terminal object, free objects, and cofree objects depend on the type of morphism chosen for the category, highlighting a difference between the categories in our second type, \textbf{SiLlGrphs} and \textbf{SiStGrphs}. This property also differentiates \textbf{StGrphs} from \textbf{SiGrphs} and \textbf{Grphs}. Another categorial property that singles out \textbf{StGrphs} from the third type is a lack of generators.\\
\indent \textbf{Grphs} and \textbf{SiGrphs} have the same categorial properties investigated in this paper, but they can satisfy constructions in different ways. For example, consider the coequalizer for the two morphisms $f,g:K_2\rightarrow K_2^{\ell}$ both of which send the edge of $K_2$ to the non-loop edge of $K_2^{\ell}$ but differ on where they send the two incident vertices. In \textbf{Grphs} the coequalizer object will be a graph with one vertex and two loops, whereas in \textbf{SiGrphs} the coequalizer object will be isomorphic to $K_1^\ell$, as the two loops must be identified. When considering categorial quotients, this difference in construction can be crucial.\\

\addcontentsline{toc}{section}{References}
\bibliographystyle{alpha}
\bibliography{concgrphs}
\noindent \textsc{Department of Mathematical Sciences, The University of Montana} \\
\textsc{Missoula, MT 59812-0864, USA} \\
\textit{E-mail}: george.mcrae@umontana.edu\\
\indent \qquad demitri.plessas@umontana.edu\\
\indent \qquad liam.rafferty@umontana.edu\\

\end{document}